    \theoremstyle{plain}
    \newtheorem{theorem}{Theorem}
    \newtheorem{proposition}{Proposition}
    \newtheorem{corollary}{Corollary}
    \newtheorem{lemma}{Lemma}
    \theoremstyle{remark}
    \newtheorem{remark}[equation]{Remark}
    \numberwithin{equation}{section}
    \renewcommand\footnotemark{}
    \date{}
\begin{document}

        \title{SRIM and SCRIM Factors of $x^n+1$ over Finite Fields and Their Applications}

        \author{Arunwan Boripan and Somphong~Jitman}

        \thanks{A. Boripan is with the Department of Mathematics, Faculty of Science, Ramkhamhaeng University, Bangkok 10240, Thailand (email: boripan-arunwan@hotmail.com)}

        \thanks{S. Jitman (Corresponding Author)  is with the  Department of Mathematics, Faculty of Science,
            Silpakorn University, Nakhon Pathom 73000,  Thailand
            (email: sjitman@gmail.com).}

        \maketitle

    \begin{abstract} 
   Self-Reciprocal Irreducible Monic (SRIM) and  
  Self-Conjugate-Reciprocal Irreducible Monic (SRCIM)  factors of  $x^n-1$ over finite fields have become of interest due to their rich algebraic structures and wide applications.  In this paper, these notions are extended to factors of  $x^n+ 1$  over finite fields.   
   Characterization and enumeration of SRIM and SCRIM factors of $x^n+1$ over finite fields  are established.     Simplification and recessive  formulas for the number of such factors are given. Finally,   applications     in the studied of complementary negacyclic codes  are discussed.
    \end{abstract}
    
    \noindent{\bf keywords}: {negacyclic codes, complementary dual dual codes, reciprocal polynomials, SRIM polynomials, SCRIM polynomials}
    
    \noindent{\bf Mathematics Subject Classification}: {11T71, 11T60, 94B05}

\section{Introduction}
For a prime power $q$, let $\mathbb{F}_q$ denote  the finite field of order $q$. A non-zero  polynomial  $f(x)$   over a finite field $\mathbb{F}_q$  whose constant term is a unit in $\mathbb{F}_q$   is said to be  \emph{self-reciprocal}   if $f(x)$ equals its  {\em reciprocal polynomial} $f^*(x):=x^{\deg(f(x))}f(0)^{-1}f\left(\frac{1}{x}\right)$ and $f(x)$ is said to be  {\em self-reciprocal irreducible monic} (SRIM) if it is self-reciprocal, irreducible and monic.  For an element  $a$ in  $\mathbb{F}_{q^2}$, the {\em conjugate} of $a$ is defined to be $\overline{a}:=a^q$. The {\em conjugate} of a  polynomial $f(x)={f_0}+{f_1}x+\dots+{f_n}x^n$ in $\mathbb{F}_{q^2}[x]$ is defined to be  $\overline{f(x)}:=   \overline{f_0}+\overline{f_1}x+\dots+\overline{f_n}x^n$. 
A non-zero polynomial $f(x)$ over $\mathbb{F}_{q^2}$  whose constant term is a unit  is said to be {\em self-conjugate-reciprocal} if $f(x)$  equals its  {\em conjugate-reciprocal polynomial} $f^\dagger(x):= \overline{f^*(x)}$.  
	If, in addition, a self-conjugate-reciprocal $f(x)$  is monic and irreducible, it is said to be \emph{self-conjugate-reciprocal irreducible monic} (SCRIM).

     Due to their rich algebraic structures and wide applications,  SRIM and SCRIM polynomials  over finite fields have been extensively  studied and applied in various branches of Mathematics and Engineering.  SRIM polynomials have been used in  constructions of good  BCH codes in \cite{KG1969}.  In \cite{HB1975},   SRIM polynomials have been  characterized up to their degrees.    The orders and the number  of  SRIM polynomials of a given degree  over finite fields have been determined in  \cite{YM2004}.    Later, SRIM factors of $x^n-1$  have been used in the  characterization and enumeration of  Euclidean self-dual cyclic codes over finite fields in \cite{JLX2011}  and  \cite{PJ2018}.  They have been applied in the  characterization of   Euclidean complementary dual cyclic codes over finite fields in \cite{YM}. Recently, rigorous treatment for  SRIM factors of $x^n-1$ over finite fields has been given and applied in the study of  simple root Euclidean self-dual cyclic codes    over finite chain rings    in \cite{BSG2016}.     
    In \cite{BJU2015}, characterization of monic irreducible polynomials over  $\mathbb{F}_{q^2}$ to be SCRIM  are  given together with the  enumeration  of SCRIM polynomials of a fixed  degree.  Some properties of SCRIM factors of $x^n-1$ over  $\mathbb{F}_{ 2^{2l}}$  have been studied and applied in the characterization and enumeration of  Hermitian self-dual cyclic codes in \cite{JLS2014}.  In \cite{ASP2019}, a complete study of SCRIM factors of $x^n-1$ over $\mathbb{F}_{ q^{2}}$  has been given together with their applications. 
   While SRIM  and SCRIM factors of $x^n-1$ over finite fields are key to study Euclidean and Hermitian dual cyclic code of length $n$ over such fields, SRIM and SCRIM factors of $x^n+1$  will be useful in the study of   Euclidean and Hermitian dual of negacyclic codes of length~$n$ over finite fields.  
       
    In this paper, we focus on  SRIM and SCRIM factors of $x^n+1$  as well as their applications.
      Characterization and enumeration of SRIM   factors of $x^n+1$ over  finite fields are given in Section 3. Discussion on     SCRIM  factors of $x^n+1$  is  provided   in   Section 4.  In Section 5,  results on SRIM and SCRIM factors of $x^n+1$  over finite fields are applied in the studied of   complementary dual  negacyclic codes of length $n$ over  finite fields. Conclusion and remarks are given in Section 6.

\section{Preliminaries}
In this section, number theoretical results used   in this paper and preliminary results on SRIM and SCRIM factors of  $x^n\pm 1$ over finite fields   are recalled and discussed.

 \subsection{Number Theoretical Results}
 
 For integers $i\geq 0$ and $j\geq 1$,  denote by ${\rm o}^+_j(i)$ the additive order of $i$ modulo $j$. For coprime positive integers $i$ and $j$,  denote by  ${\rm ord}_j(i)$ the multiplicative order of $i$ modulo $j$.  For a prime $p$ and  integers $i\geq 0$ and  $j\geq 1$, we say that $p^i$ exactly divides $j$, denoted by $p^i || j$, if $p^i$ divides $j$ but $p^{i+1}$ does not divide $j$.

 For given nonzero coprime integers $a$ and $b$, a positive integer $d$ is said to be good  with respect to $a$ and $b$ if it is a divisor of $a^k+b^k$ for some positive integer $k$. In addition, if there exists an odd integer $r$ such that $d$ is a divisor of $a^r+b^r$,   $d$ is said to be oddly-good.  Denote by  $G_{(a,b)}$ and $OG_{(a,b)}$ the set of good integers and the set of oddly-good integers with respect to $a$ and $b$, respectively.  The following useful results on good and oddly-good in  \cite{J2017}, \cite{JPR2019}  and \cite{M1997} are recalled.

\begin{proposition}[{\cite[Theorem 1]{M1997}}] \label{Mor2}  Let $d>1$ be an odd integer. Then $ d \in G_{(a,b)}$ if and only if there exists an integer $s\geq 1$ such that $2^s|| {\rm ord}_p(\frac{a}{b})$ for every prime divisor  $p$  of  $d$. \end{proposition}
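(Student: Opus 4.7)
The plan is to reduce the condition $d \mid a^k + b^k$ to a condition on the $2$-adic part of $\operatorname{ord}_p(a/b)$ for each prime $p \mid d$, using two ingredients: a direct analysis modulo a single prime, and the Lifting the Exponent Lemma (LTE) for odd primes. First note that if $d \in G_{(a,b)}$, then no prime $p \mid d$ can divide $a$ or $b$ (otherwise $\gcd(a,b) > 1$), so $a/b$ is well-defined modulo every such $p$.

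For a single odd prime $p$ with $\gcd(p,ab) = 1$, write $\operatorname{ord}_p(a/b) = 2^{s_p} m_p$ with $m_p$ odd. The condition $p \mid a^k + b^k$ is equivalent to $(a/b)^k \equiv -1 \pmod{p}$. Since $-1$ is the unique element of order~$2$ in the cyclic subgroup generated by $a/b$, such a $k$ exists only when $s_p \geq 1$, in which case the solutions are exactly the integers $k$ with $k \equiv 2^{s_p - 1} m_p \pmod{2^{s_p} m_p}$, i.e., $2^{s_p - 1} \| k$ together with $m_p \mid k/2^{s_p - 1}$.

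To globalize, factor $d = p_1^{e_1}\cdots p_r^{e_r}$, so $d \mid a^k + b^k$ iff $p_i^{e_i} \mid a^k + b^k$ for every $i$. For odd primes $p$, LTE asserts that if $p^u \| a^k + b^k$ and $p^w \| n$ with $n$ odd, then $p^{u+w} \| a^{kn} + b^{kn}$. Thus replacing $k$ by $k \cdot p_1^{e_1-1}\cdots p_r^{e_r-1}$ (an odd multiplier, since all $p_i$ are odd) boosts each $p_i$-adic valuation to at least $e_i$ without altering the $2$-adic part of $k$. Hence $d \in G_{(a,b)}$ iff some common $k$ satisfies $p_i \mid a^k + b^k$ for all $i$. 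By the preceding paragraph, such a common $k$ exists iff the values $s_{p_1}, \dots, s_{p_r}$ coincide with a single integer $s \geq 1$; once they do, the odd-part congruences $m_{p_i} \mid k/2^{s-1}$ are jointly satisfied by $k = 2^{s-1}\operatorname{lcm}(m_{p_1},\dots,m_{p_r})$.

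The main obstacle is the fine interplay between the $2$-adic and odd parts of the exponent: the admissible set for each individual prime forces a specific $2$-adic valuation on $k$, so simultaneous solvability collapses to the equality $s_{p_1} = \cdots = s_{p_r} = s$, while the lifting from primes to prime powers must be done by odd multipliers so as to leave the $2$-adic part of $k$ untouched, which is precisely where LTE for odd $p$ enters. Once these points are cleanly articulated, the claimed equivalence with ``$2^{s} \| \operatorname{ord}_p(a/b)$ for every prime divisor $p$ of $d$'' follows in both directions.
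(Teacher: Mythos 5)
The paper does not actually prove Proposition \ref{Mor2}: it is imported verbatim from Moree \cite{M1997} and used as a black box, so there is no internal proof to compare against. Judged on its own, your argument is correct and self-contained, and it follows the natural route (essentially Moree's): reduce $p\mid a^k+b^k$ to $(a/b)^k\equiv -1\pmod p$, use that $-1$ is the unique element of order two in $(\mathbb{Z}/p\mathbb{Z})^{\times}$ to see that solvability requires $s_p\geq 1$ and that the solution set pins down $v_2(k)=s_p-1$ exactly, and then observe that simultaneous solvability over all $p\mid d$ forces the $s_p$ to coincide, the odd parts being absorbed by an lcm. The two points that genuinely need care are both handled: the passage from primes to prime powers is done with an odd multiplier so that $v_2(k)$ is undisturbed (your LTE step; one could equivalently use the elementary fact that $\mathrm{ord}_{p^e}(a/b)$ is $\mathrm{ord}_p(a/b)$ times a power of $p$ and hence has the same $2$-part, which avoids quoting LTE), and the rigidity of the $2$-adic condition, which is what collapses the existence of a common exponent to the equality $s_{p_1}=\cdots=s_{p_r}$. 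The only caveat is implicit in the statement itself rather than in your proof: $\mathrm{ord}_p(a/b)$ only makes sense when $p\nmid ab$, and you correctly note that the forward direction forces $\gcd(d,ab)=1$.
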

\begin{proposition}[{\cite[Proposition 3.2]{J2017}}] \label{oddly-good}  Let $d>1$ be an odd integer. Then $ d \in OG_{(a,b)}$ if and only if   $2|| {\rm ord}_p(\frac{a}{b})$ for every prime divisor $p$ of $d$. \end{proposition}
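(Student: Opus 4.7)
The plan is to translate the divisibility condition $d \mid a^r + b^r$ into an order condition on $a/b$ modulo prime powers, reduce this to a condition modulo primes, and then read off parity constraints. Because $\gcd(a,b)=1$, no prime divisor of $d$ can divide $b$, so $a/b$ is well-defined modulo every prime power $p^e \| d$. For each such $p^e$, setting $m_{p^e} := {\rm ord}_{p^e}(a/b)$, the congruence $a^r + b^r \equiv 0 \pmod{p^e}$ is equivalent to $(a/b)^r \equiv -1 \pmod{p^e}$, which holds iff $m_{p^e} \mid 2r$ but $m_{p^e} \nmid r$; in particular $m_{p^e}$ must be even, say $m_{p^e}=2k_{p^e}$, and the condition becomes $k_{p^e}\mid r$ with $r/k_{p^e}$ odd.

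Next I would reduce the $p^e$ condition to a $p$ condition. Since $p$ is odd (as $d$ is odd), the group $(\mathbb{Z}/p^e\mathbb{Z})^\times$ is cyclic of order $p^{e-1}(p-1)$, and the reduction homomorphism onto $(\mathbb{Z}/p\mathbb{Z})^\times$ has kernel of $p$-power order. Consequently $m_{p^e}/{\rm ord}_p(a/b)$ is a power of the odd prime $p$, so the $2$-adic valuations of $m_{p^e}$ and of ${\rm ord}_p(a/b)$ coincide. Hence $2\|{\rm ord}_p(a/b)$ if and only if $2\|m_{p^e}$, i.e., $k_{p^e}$ is odd.

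With this in hand, the forward direction is short: if $d\mid a^r+b^r$ with $r$ odd, then for each prime $p\mid d$ we have $m_p\mid 2r$ and $m_p\nmid r$. Writing $m_p = 2^s t$ with $t$ odd, the divisibility $m_p\mid 2r$ with $r$ odd forces $s\le 1$, while $m_p\nmid r$ forces $s\ge 1$; hence $s=1$, i.e., $2\|{\rm ord}_p(a/b)$. For the converse, assume $2\|{\rm ord}_p(a/b)$ for every prime $p\mid d$, so that every $k_{p^e}$ is odd. Set $r := \operatorname{lcm}\{k_{p^e} : p^e \| d\}$. This $r$ is a least common multiple of odd numbers, hence odd; it is a multiple of every $k_{p^e}$, and each quotient $r/k_{p^e}$ is a ratio of odd numbers, hence odd. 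Therefore $(a/b)^r\equiv -1\pmod{p^e}$ for every $p^e\|d$, giving $d\mid a^r+b^r$ as required.

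The main obstacle is the lifting step from $\mathbb{Z}/p\mathbb{Z}$ to $\mathbb{Z}/p^e\mathbb{Z}$; this is standard for odd $p$ because $(\mathbb{Z}/p^e\mathbb{Z})^\times$ is cyclic with the $p$-part in the kernel of reduction, but one must be aware that the analogous statement would fail for $p=2$ — which is the reason the hypothesis ``$d$ odd'' is essential, matching the assumption of Proposition \ref{Mor2}.
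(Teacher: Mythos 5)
The paper does not prove this proposition: it is imported verbatim from \cite{J2017} (Proposition~3.2 there, itself a variant of Moree's analysis in \cite{M1997}), so there is no internal proof to measure you against. Judged on its own, your argument is correct and complete, and it follows the standard route: translate $p^e\mid a^r+b^r$ into $(a/b)^r\equiv-1\pmod{p^e}$, characterize that by ``$m\mid 2r$ and $m\nmid r$'' using that $-1$ is the unique element of order $2$ in the cyclic group $(\mathbb{Z}/p^e\mathbb{Z})^\times$, observe that for odd $p$ the order modulo $p^e$ and modulo $p$ have the same $2$-adic valuation, and in the converse direction exhibit the odd exponent $r=\operatorname{lcm}\{k_{p^e}\}$ explicitly before gluing the prime powers by the Chinese remainder theorem. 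This is essentially the argument in the cited sources, and your closing remark about why $p=2$ must be excluded correctly identifies the role of the hypothesis that $d$ is odd (the power-of-two case is handled separately in Proposition~\ref{prop2.01}).

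One small imprecision worth fixing: the opening claim that ``$\gcd(a,b)=1$ implies no prime divisor of $d$ divides $b$'' is not a consequence of the coprimality of $a$ and $b$ alone. What is true is that the statement implicitly assumes $\gcd(d,ab)=1$ (otherwise ${\rm ord}_p(a/b)$ is undefined), and that in the forward direction any prime $p\mid d$ dividing $b$ would force $p\mid a$ via $p\mid a^r+b^r$, contradicting $\gcd(a,b)=1$. Stating it that way closes the gap without changing anything else in your proof.
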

 
 \begin{proposition}[{\cite[Proposition 2.3]{JPR2019}}]  \label{prop2.01}Let $a$ and $b$   be  coprime odd  integers  and  let $\beta\geq 1$ be  an  integer. Then the following statements are equivalent.
     
     \begin{enumerate}
         \item  $2^\beta\in G_{(a,b)}$.
         \item  $2^\beta|(a+b)$.
         \item $2^\beta\in OG_{(a,b)}$.
     \end{enumerate} 
 \end{proposition}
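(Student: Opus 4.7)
My plan is to prove the cycle of implications $(2)\Rightarrow(3)\Rightarrow(1)\Rightarrow(2)$. The first two are essentially free from the definitions: $(2)\Rightarrow(3)$ holds by taking the odd exponent $r=1$, and $(3)\Rightarrow(1)$ holds because every oddly-good integer is good by definition. So all the real content sits in $(1)\Rightarrow(2)$, the statement that if $2^\beta\mid a^k+b^k$ for \emph{some} positive integer $k$ then $2^\beta\mid a+b$.

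To handle $(1)\Rightarrow(2)$, I would compute the $2$-adic valuation $\nu_2(a^k+b^k)$ directly. Writing $k=2^s t$ with $t$ odd and $s\geq 0$, and using the identity $X^t+Y^t=(X+Y)(X^{t-1}-X^{t-2}Y+\cdots+Y^{t-1})$ with $X=a^{2^s}$ and $Y=b^{2^s}$, the cofactor is a sum of $t$ odd integers and hence itself odd, giving
\[\nu_2(a^k+b^k)=\nu_2\bigl(a^{2^s}+b^{2^s}\bigr).\]
When $s=0$ this reads $\nu_2(a^k+b^k)=\nu_2(a+b)$, so the hypothesis $2^\beta\mid a^k+b^k$ immediately yields $2^\beta\mid a+b$ in the odd-$k$ case.

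The case $s\geq 1$ is where one has to be slightly careful, and I would dispose of it via the elementary observation that every odd square is congruent to $1\pmod 8$: then $a^{2^s}\equiv b^{2^s}\equiv 1\pmod 8$, so $a^{2^s}+b^{2^s}\equiv 2\pmod 8$, i.e.\ $\nu_2(a^k+b^k)=1$ whenever $k$ is even. The hypothesis therefore forces $\beta=1$, and the conclusion $2\mid a+b$ is automatic because $a$ and $b$ are both odd. I do not anticipate any genuine obstacle here; the main point to be watchful about is precisely the need to treat even $k$ separately, because for such $k$ the valuation $\nu_2(a^k+b^k)$ does \emph{not} equal $\nu_2(a+b)$ in general, and one only salvages the divisibility by using the parity of $a+b$.
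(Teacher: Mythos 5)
The paper does not prove this proposition at all: it is imported verbatim from \cite[Proposition 2.3]{JPR2019} (ultimately resting on Moree's analysis in \cite{M1997}), so there is no in-paper argument to compare against. Your proof is correct and self-contained. The two easy implications $(2)\Rightarrow(3)$ (take $r=1$) and $(3)\Rightarrow(1)$ (oddly-good implies good) are handled properly, and your treatment of $(1)\Rightarrow(2)$ is sound: writing $k=2^s t$ with $t$ odd, the cofactor in $X^t+Y^t=(X+Y)(X^{t-1}-\cdots+Y^{t-1})$ is a sum of $t$ odd terms and hence odd, giving $\nu_2(a^k+b^k)=\nu_2(a^{2^s}+b^{2^s})$; for $s=0$ this is exactly $\nu_2(a+b)$, and for $s\geq 1$ the congruence $a^{2^s}\equiv b^{2^s}\equiv 1\pmod 8$ forces $\nu_2(a^k+b^k)=1$, so $\beta=1$ and $2\mid(a+b)$ trivially. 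This is the standard $2$-adic valuation computation (the even-exponent case of the lifting-the-exponent phenomenon), and you correctly identify the one place where care is needed, namely that $\nu_2(a^k+b^k)$ is not $\nu_2(a+b)$ for even $k$. I see no gap; your write-up in effect supplies the proof the paper delegates to its references.
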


 \begin{proposition}[{\cite[Proposition 2.2]{J2017}}] \label{Jit1}  Let $a,b,d>1$ be pairwise coprime odd integers. Then $ d \in G_{(a,b)}$ if and only if $ 2d \in G_{(a,b)}$. \end{proposition}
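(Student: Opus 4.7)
The plan is to prove both directions directly from the definition of $G_{(a,b)}$, using the single observation that the sum of two odd numbers is even.

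For the reverse implication, suppose $2d \in G_{(a,b)}$, so there is a positive integer $k$ with $2d \mid a^k + b^k$. Then trivially $d \mid a^k + b^k$, and hence $d \in G_{(a,b)}$. This direction requires no hypothesis on parity or coprimality.

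For the forward implication, suppose $d \in G_{(a,b)}$ and pick $k \geq 1$ with $d \mid a^k + b^k$. Since $a$ and $b$ are odd, the powers $a^k$ and $b^k$ are both odd, so their sum $a^k + b^k$ is even; that is, $2 \mid a^k + b^k$. Now $d$ is odd by hypothesis, so $\gcd(2,d) = 1$, and combining the two divisibilities yields $2d \mid a^k + b^k$. Therefore $2d \in G_{(a,b)}$.

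The proof is genuinely short, with the only real content being the parity remark that $a^k + b^k$ is automatically even under the odd-integer hypothesis; the assumption that $a, b, d$ are pairwise coprime is used only through the fact that $d$ is odd, giving $\gcd(2,d) = 1$. Consequently, there is no substantial obstacle to overcome, and the same $k$ works in both directions.
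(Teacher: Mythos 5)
Your proof is correct. The paper states this proposition as a recalled result from the cited reference without reproducing a proof, and your argument --- the reverse direction being trivial, and the forward direction following from the parity observation that $a^k+b^k$ is even together with $\gcd(2,d)=1$ --- is the natural elementary argument for this fact.
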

 
  \begin{proposition}[{\cite[Corollary 3.2]{J2017}}] \label{Jit2}  Let $a,b,d>1$ be pairwise coprime odd integers. Then $ d \in OG_{(a,b)}$ if and only if $ 2d \in OG_{(a,b)}$. \end{proposition}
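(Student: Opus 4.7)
The plan is to unwind the definition of $OG_{(a,b)}$ directly, rather than routing through the characterization in Proposition \ref{oddly-good} (which only applies to odd moduli anyway). Recall that a positive integer $m$ lies in $OG_{(a,b)}$ precisely when there exists an odd $r \geq 1$ with $m \mid a^r+b^r$. Thus the proof will consist of verifying each of the two implications from this definition, exploiting that $a$ and $b$ are odd and $\gcd(d,2)=1$.

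The $(\Leftarrow)$ direction is the trivial one: if $2d \in OG_{(a,b)}$, then some odd $r \geq 1$ gives $2d \mid a^r+b^r$, whence $d \mid a^r+b^r$ immediately, showing $d \in OG_{(a,b)}$.

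For the $(\Rightarrow)$ direction, suppose $d \in OG_{(a,b)}$ and pick an odd $r \geq 1$ with $d \mid a^r+b^r$. The key observation is that, because $a$ and $b$ are odd, $a^r$ and $b^r$ are both odd, so $a^r+b^r$ is even; hence $2 \mid a^r+b^r$. Combining this with $d \mid a^r+b^r$ and using $\gcd(d,2)=1$ (as $d$ is odd), the Chinese Remainder Theorem yields $2d \mid a^r+b^r$. Since $r$ is odd, this is exactly the statement that $2d \in OG_{(a,b)}$.

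I do not foresee any real obstacle here; the argument is a clean application of the definition together with the parity remark. The role of the hypothesis that $a,b,d$ are pairwise coprime odd integers is simply to guarantee both that $a^r+b^r$ is always even and that $2$ and $d$ are coprime so that individual divisibilities combine.
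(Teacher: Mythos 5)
Your argument is correct. Note, however, that the paper does not prove this proposition at all: it is imported verbatim as \cite[Corollary 3.2]{J2017}, so there is no internal proof to compare against. What you have supplied is a self-contained elementary verification straight from the definition of $OG_{(a,b)}$: the backward direction is immediate from $2d \mid a^r+b^r \Rightarrow d \mid a^r+b^r$, and the forward direction combines $d \mid a^r+b^r$ with the parity observation that $a^r+b^r$ is even (since $a,b$ are odd) and $\gcd(2,d)=1$ to conclude $2d \mid a^r+b^r$ for the same odd exponent $r$. This is cleaner than routing through the order-theoretic characterization of Proposition \ref{oddly-good}, which in any case only applies to odd moduli, and it correctly isolates the hypotheses actually needed (oddness of $a$, $b$, and $d$; the pairwise coprimality beyond $\gcd(d,2)=1$ plays no role). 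The only cosmetic point is that invoking the Chinese Remainder Theorem is heavier language than necessary for the fact that coprime divisors of a common integer have their product dividing it, but the step is valid.
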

  
 \begin{proposition}[{\cite[Proposition 2.7 and Corollary 3.2]{JPR2019}}]  \label{prop2m} Let $a,b$ and $d>1$   be pairwise coprime odd positive integers  and let  $\beta\geq 2$ be  an  integer. Then the following statements are equivalent. 
     \begin{enumerate}
         \item    $2^\beta d\in G_{(a,b)}$.
         \item   $2^\beta d\in OG_{(a,b)}$.
         \item   $2^\beta|(a+b)$
         and   $2|| {\rm ord}_{p}(\frac{a}{b})$ for every prime $p$ dividing $d$.
     \end{enumerate}
 \end{proposition}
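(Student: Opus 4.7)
The proof naturally splits into the easy cycle $(2) \Rightarrow (1) \Rightarrow (2)$ and the equivalence $(2) \Leftrightarrow (3)$. The implication $(2) \Rightarrow (1)$ is immediate from the inclusion $OG_{(a,b)} \subseteq G_{(a,b)}$. For $(1) \Rightarrow (2)$, I would invoke the elementary $2$-adic observation that if $k$ is even, say $k = 2m$, then $a^k + b^k = (a^m)^2 + (b^m)^2$, and since $a^m$ and $b^m$ are odd their squares are each $\equiv 1 \pmod 8$, forcing the $2$-adic valuation $v_2(a^k+b^k) = 1$. Since $\beta \geq 2$, the hypothesis $2^\beta d \mid a^k + b^k$ therefore forces $k$ to be odd, which is exactly what it means to say $2^\beta d \in OG_{(a,b)}$.

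For $(2) \Rightarrow (3)$, pick an odd $r$ with $2^\beta d \mid a^r + b^r$. Then $2^\beta \in OG_{(a,b)}$ and $d \in OG_{(a,b)}$ by restriction, so Proposition~\ref{prop2.01} applied to $2^\beta$ yields $2^\beta \mid (a+b)$, while Proposition~\ref{oddly-good} applied to the odd $d$ yields $2 || {\rm ord}_p(a/b)$ for every prime $p \mid d$.

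The substantive direction is $(3) \Rightarrow (2)$, where I must manufacture a single odd exponent that works simultaneously modulo $2^\beta$ and modulo $d$. Factor $d = \prod_{i=1}^{t} p_i^{e_i}$. A standard lifting-the-exponent argument shows that ${\rm ord}_{p_i^{e_i}}(a/b)/{\rm ord}_{p_i}(a/b)$ is a power of the odd prime $p_i$, so the $2$-adic valuation is preserved and $2 || {\rm ord}_{p_i^{e_i}}(a/b)$. Writing ${\rm ord}_{p_i^{e_i}}(a/b) = 2 M_i$ with $M_i$ odd, the congruence $(a/b)^r \equiv -1 \pmod{p_i^{e_i}}$ becomes equivalent to $r$ being an odd multiple of $M_i$. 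Setting $r := \mathrm{lcm}(M_1, \ldots, M_t)$ produces an odd integer for which each $r/M_i$ is odd, so by CRT $d \mid a^r + b^r$. Since $r$ is odd and $2^\beta \mid (a+b) \mid (a^r + b^r)$, and $\gcd(2^\beta, d) = 1$, this gives $2^\beta d \mid a^r + b^r$, establishing $2^\beta d \in OG_{(a,b)}$.

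The main obstacle is the prime-power lifting step in $(3) \Rightarrow (2)$: one must verify that the order condition $2 || {\rm ord}_p(a/b)$ (which a priori only constrains the order modulo $p$) actually controls the order modulo the full prime powers $p_i^{e_i}$, and then stitch these into a single odd exponent via lcm. Everything else reduces to formal manipulation of the quoted propositions and the trivial identity that $(a+b) \mid (a^r + b^r)$ for odd $r$.
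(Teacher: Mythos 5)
The paper offers no proof of this proposition: it is imported verbatim from \cite{JPR2019} as a preliminary, so there is nothing internal to compare against. Your argument is correct and self-contained. The cycle $(2)\Rightarrow(1)\Rightarrow(2)$ via the observation that $v_2(a^{2m}+b^{2m})=1$ for odd $a,b$ (squares of odd numbers being $\equiv 1 \pmod 8$) is exactly the right way to exploit $\beta\geq 2$, and $(2)\Rightarrow(3)$ is a clean application of Propositions \ref{prop2.01} and \ref{oddly-good}. The only stylistic remark is that your $(3)\Rightarrow(2)$ re-proves the substantive direction of Proposition \ref{oddly-good} from scratch: the hypothesis $2\,\|\,\mathrm{ord}_p(a/b)$ for all primes $p\mid d$ already gives $d\in OG_{(a,b)}$ by that proposition, hence an odd $r$ with $d\mid a^r+b^r$, after which $2^\beta\mid(a+b)\mid(a^r+b^r)$ and $\gcd(2^\beta,d)=1$ finish the job in two lines. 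Your explicit construction (lifting $2\,\|\,\mathrm{ord}_{p_i}(a/b)$ to $2\,\|\,\mathrm{ord}_{p_i^{e_i}}(a/b)$ via the $p_i$-power index of the reduction map, then taking $r=\mathrm{lcm}(M_1,\dots,M_t)$ of the odd halves of the orders) is nonetheless valid --- each step checks out, including that $r/M_i$ is odd because $r$ is --- and has the virtue of making the proof independent of the quoted Proposition \ref{oddly-good} in that direction.
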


 \subsection{Basic Properties of SRIM and SCRIM Factors of $x^n\pm 1$ over Finite Fields}
 For a positive integer $n$   and a  unit  $\lambda\in \mathbb{F}_q$, 
 denote by $\mathrm{SRIM}_{q,n}(\lambda)$ the  set of  SRIM factors of $x^n-\lambda$ in  $\mathbb{F}_q[x]$.   In the same fashion, denote by $\mathrm{SCRIM}_{q^2,n}(\lambda)$ the  set of  SCRIM factors of $x^n-\lambda$ in  $\mathbb{F}_{q^2}[x]$.

      We write  $n=p^\mu k$, where $p$ is the characteristic of $\mathbb{F}_q$, $p\nmid k$ and $\mu\geq 0$.  It follows that   $x^n\pm 1= (x^{k} \pm 1)^{p^\mu}$.  In the study of SRIM and SCRIM factors of $x^n\pm 1$, it is therefore  sufficient  to assume that    $\gcd(n,q)=1$.  Over a finite field of even characteristic, $x^n+1=x^n-1$ and its SRIM and SCRIM factors were completely studied in \cite{ASP2019} and  \cite{BSG2016}.  Without loss of generality,   $q$  is  assume to be an odd prime power throughout the study SRIM and SCRIM factors of $x^n+1$ over $\mathbb{F}_q$ and $\mathbb{F}_{q^2}$, respectively.  
      
      For each $0\leq i<n$,  the \emph{cyclotomic  coset of $q$ modulo  $n$  containing $i$} is defined to be the set 
      \begin{align*}
      Cl_{q,n}(i) =\{iq^{j} \, (\mathrm{mod}~ n) \mid j =0,1,2,\dots\}.
      \end{align*}
      It is not difficult to see that  $Cl_{q,n}(i)  =\{iq^{j} \, (\mathrm{mod } \,n) \mid  0\leq j < \mathrm{ord}_{\mathrm{o}_n^+(i)}{(q)}\}$ and $|Cl_{q,n}(i)  |=\mathrm{ord}_{\mathrm{o}_n^+(i)}{(q)}$. Moreover, $ {\mathrm{o}_n^+(i)}= {\mathrm{o}_n^+(j)}$ for all  $j\in Cl_{q,n}(i)$.
      Let $S_{q,n}$ denote a complete set of representatives of the cyclotomic  cosets of $q$ modulo  $n$ and let $\alpha$ be a primitive $n$th root of unity in some extension field of $\mathbb{F}_{q}$. It is well known (see \cite{LSBook}) that 
      \[f_i(x)=\prod_{j\in Cl_{q,n}(i) }{(x-\alpha^j)}\]
      is the minimal polynomial of $\alpha^i$  and it is a monic irreducible factor of $x^n-1$ over $\mathbb{F}_{q}$  for all     $0\leq i<n$.  Moreover,  we have 
      \begin{align}\label{xn-1}
      x^n-1=\prod_{i\in S_{q,n}} f_i(x).
      \end{align}

 For the case where $\lambda=1$,  the characterization and enumeration of    $\mathrm{SRIM}_{q,n}(1)$  and  $\mathrm{SCRIM}_{q^2,n}(1)$ have been given  in  \cite{ASP2019}, \cite{BSG2016} and \cite{SJLU2015}.
A general formula of the number of SRIM factors of $x^n-1$ over $\mathbb{F}_q$ can be determined in terms of good integers   as follows.

\begin{theorem}[{\cite{BSG2016}}] \label{SRIMxn-1}Let $q$ be a prime power and let $n$ be a positive integers coprime to $q$. Then the  number of SRIM factors of $x^n-1$ over $\mathbb{F}_{q}$ is
    \[\displaystyle  |\mathrm{SRIM}_{q,n}(1)|=\sum_{d\mid n,d\in G_{(q,1)}}\dfrac{\phi(d)}{\mathrm{ord}_{d}(q)},\]  
    where $\phi$ is the Euler's totient function.
\end{theorem}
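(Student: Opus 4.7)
The plan is to count SRIM factors by classifying the monic irreducible factors of $x^n-1$ via the cyclotomic-coset decomposition in (\ref{xn-1}), and then grouping the cosets by the common additive order of their elements modulo $n$.

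First, I would observe that by (\ref{xn-1}) every monic irreducible factor of $x^n-1$ has the form $f_i(x)=\prod_{j\in Cl_{q,n}(i)}(x-\alpha^j)$ for some $i\in S_{q,n}$. Since the roots of the reciprocal polynomial $f_i^*(x)$ are exactly the inverses $\alpha^{-j}=\alpha^{n-j}$ of the roots of $f_i(x)$, the factor $f_i(x)$ is self-reciprocal if and only if $Cl_{q,n}(-i)=Cl_{q,n}(i)$, equivalently $-i\in Cl_{q,n}(i)$. This last condition says that there exists $k\geq 0$ with $iq^k\equiv -i \pmod n$, and writing $d=\mathrm{o}_n^+(i)$ and dividing out the common factor, this is equivalent to $q^k\equiv -1\pmod d$, i.e.\ $d\mid q^k+1$ for some $k$. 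By the definition recalled in Section~2.1, this is precisely the statement $d\in G_{(q,1)}$.

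Next I would collect the cosets by additive order. For every divisor $d$ of $n$, the number of $i\in\{0,1,\dots,n-1\}$ with $\mathrm{o}_n^+(i)=d$ equals $\phi(d)$ (these are the integers of the form $(n/d)t$ with $\gcd(t,d)=1$). Because $\mathrm{o}_n^+(\cdot)$ is constant on cyclotomic cosets and each such coset has size $|Cl_{q,n}(i)|=\mathrm{ord}_d(q)$, the number of distinct cyclotomic cosets of additive order $d$ is $\phi(d)/\mathrm{ord}_d(q)$. Combining this with the criterion from the previous paragraph shows that the SRIM factors of $x^n-1$ are in bijection with the disjoint union, over divisors $d$ of $n$ with $d\in G_{(q,1)}$, of the cyclotomic cosets of additive order $d$. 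Summing their counts yields the stated formula
\[
|\mathrm{SRIM}_{q,n}(1)|=\sum_{\substack{d\mid n\\ d\in G_{(q,1)}}}\frac{\phi(d)}{\mathrm{ord}_d(q)}.
\]

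The only subtle step is the translation in the first paragraph: carefully verifying that $-i\in Cl_{q,n}(i)$ is equivalent to the divisibility condition $d\mid q^k+1$ on $d=\mathrm{o}_n^+(i)$, and treating the boundary cases $d=1$ (giving the factor $x-1$, where $1\in G_{(q,1)}$ trivially) and $d=2$ (giving $x+1$ when $n$ is even, where $2\in G_{(q,1)}$ since $q$ is odd). The remaining work is routine combinatorial bookkeeping that uses only standard facts about cyclotomic cosets.
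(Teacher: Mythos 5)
Your argument is correct, and it is essentially the approach this paper itself takes: Theorem \ref{SRIMxn-1} is quoted from \cite{BSG2016} without proof here, but the paper's proof of the analogous result for $x^n+1$ (the coset criterion in Lemma \ref{charSRIM} together with the partition into the sets $T_d$ in \eqref{star1} and the count in Theorem \ref{enumSRIMxn+1}) is exactly your combination of the self-reciprocality criterion $Cl_{q,n}(i)=Cl_{q,n}(-i)\Leftrightarrow \mathrm{o}_n^+(i)\in G_{(q,1)}$ with the count of $\phi(d)/\mathrm{ord}_d(q)$ cosets of additive order $d$. The reduction of $iq^k\equiv -i\pmod n$ to $d\mid q^k+1$ and your handling of the small cases $d\in\{1,2\}$ are both sound.
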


Similarly,  the number of SCRIM factors of $x^n-1$ over $\mathbb{F}_{q^2}$ can be determined in terms of oddly-good integers   in the following theorem.
\begin{theorem} [{\cite{ASP2019}}]\label{SCRIMxn-1}Let $q$ be a prime power and let $n$ be a positive integers coprime to $q$. Then the number of SCRIM factors of $x^n-1$ over $\mathbb{F}_{q^2}$ is
    \[\displaystyle  |\mathrm{SCRIM}_{q^2,n}(1)|=\sum_{d\mid n, d\in OG_{(q,1)}}\dfrac{  \phi(d)}{\mathrm{ord}_{d}(q^2)},\] 
    where $\phi$ is the Euler's totient function.
\end{theorem}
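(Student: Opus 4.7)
The plan is to reduce the enumeration to a counting problem over the divisors of $n$ by grouping irreducible factors of $x^n-1$ according to the multiplicative order of their roots, and then to identify precisely which divisors $d$ of $n$ give rise to SCRIM factors via the oddly-good condition.

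First I would fix a primitive $n$th root of unity $\alpha$ in some extension of $\mathbb{F}_{q^2}$, and use the standard factorization
\[
x^n-1=\prod_{i\in S_{q^2,n}} f_i(x), \qquad f_i(x)=\prod_{j\in Cl_{q^2,n}(i)}(x-\alpha^j),
\]
as recalled in \eqref{xn-1} (with $q$ replaced by $q^2$). Next I would compute the conjugate-reciprocal of $f_i(x)$: since $f_i$ is monic with nonzero constant term, and the conjugation $a\mapsto a^q$ sends $\alpha^j$ to $\alpha^{qj}$ (as $\alpha$ lies in a finite extension of $\mathbb{F}_{q^2}$), one checks directly that
\[
f_i^\dagger(x)=\prod_{j\in Cl_{q^2,n}(i)}\bigl(x-\alpha^{-qj}\bigr).
\]
Hence $f_i$ is SCRIM if and only if the set of roots is stable under $j\mapsto -qj \pmod n$, equivalently $-qi\in Cl_{q^2,n}(i)$.

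Now comes the key translation step. Writing $d={\rm o}^+_n(i)=n/\gcd(i,n)$, the element $\alpha^i$ has multiplicative order $d$, and the condition $-qi\equiv i\cdot q^{2k}\pmod n$ for some $k\ge 0$ becomes $q^{2k+1}\equiv -1\pmod d$. Thus $f_i$ is SCRIM if and only if $d$ divides $q^r+1$ for some odd $r\geq 1$, that is, $d\in OG_{(q,1)}$. (Note that this condition depends only on $d$, so all irreducible factors whose roots have order $d$ are SCRIM simultaneously, or none of them are.)

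Finally, I would reorganize the count by the order $d$. For each divisor $d\mid n$, the primitive $d$th roots of unity account for $\phi(d)$ values of $\alpha^j$, partitioned into $q^2$-cyclotomic cosets of common size $\mathrm{ord}_d(q^2)$, yielding $\phi(d)/\mathrm{ord}_d(q^2)$ irreducible factors of $x^n-1$ over $\mathbb{F}_{q^2}$. Summing only over $d\in OG_{(q,1)}$ produces the claimed formula. The main obstacle is the equivalence in step three: one must be careful that the auxiliary integer $2k+1$ is genuinely odd and that the equivalence with the oddly-good condition holds even in the edge cases $d=1$ and $d=2$ (both trivially belonging to $OG_{(q,1)}$ since $q$ is odd, so that $2\mid q+1$ or $d=1$ divides any integer), which matches the contribution of the factors $x-1$ and, when $2\mid n$, $x+1$.
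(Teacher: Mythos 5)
Your argument is correct and, in substance, it is the paper's own method: the paper imports this statement from \cite{ASP2019} without proof, but your route --- characterizing SCRIM factors by the coset condition $Cl_{q^2,n}(i)=Cl_{q^2,n}(-qi)$, translating it into the oddly-good condition on $d={\rm o}^+_n(i)$, and counting $\phi(d)/\mathrm{ord}_{d}(q^2)$ factors for each admissible divisor $d$ --- is exactly the mechanism the paper uses for the analogous $x^n+1$ statements (Lemma~\ref{charfSCRIM}, Theorem~\ref{enumSCRIMxn+1}). The only cosmetic slip is the exponent bookkeeping in your key step ($-qi\equiv iq^{2k}\pmod n$ reduces to $d\mid q^{2k}+q$, hence $q^{2k-1}\equiv -1\pmod d$ rather than $q^{2k+1}\equiv -1$), which reindexes but does not change the set of odd exponents and so does not affect the equivalence with $d\in OG_{(q,1)}$.
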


Since $x^{2n}-1=(x^n-1)(x^n+1)$,  it can be deduced that  \begin{align} \label{eq-SRIM} \mathrm{SRIM}_{q,n}(-1)= \mathrm{SRIM}_{q,2n}(1) \setminus \mathrm{SRIM}_{q,n}(1)\end{align}
  and   \begin{align}\label{eq-SCRIM}
  \mathrm{SCRIM}_{q^2,n}(-1)= \mathrm{SCRIM}_{q^2,2n}(1) \setminus \mathrm{SCRIM}_{q^2,n}(1).\end{align}
  Hence, the next lemma follows. 

\begin{lemma} \label{lem2} Let $q$ be a prime power and let $n$ be a positive integer such that $\gcd(n,q)=1$.  Then the following statements hold. 
    \begin{enumerate}[$1)$]
        \item $|\mathrm{SRIM}_{q,n}(-1)|= |\mathrm{SRIM}_{q,2n}(1)| - |\mathrm{SRIM}_{q,n}(1)|$.
        
        \item    $|\mathrm{SCRIM}_{q^2,n}(-1)|= |\mathrm{SCRIM}_{q^2,2n}(1)|-| \mathrm{SCRIM}_{q^2,n}(1)|$.
    \end{enumerate}
    
\end{lemma}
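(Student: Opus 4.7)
The plan is to formalize the two set-theoretic equalities (equations marked as eq-SRIM and eq-SCRIM in the excerpt) that are asserted just before the lemma, and then simply take cardinalities of both sides.

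First, I would use the factorization $x^{2n}-1 = (x^n-1)(x^n+1)$, which is valid over any field. Because we have restricted to the case of odd $q$ (and the same holds over $\mathbb{F}_{q^2}$), the two factors $x^n-1$ and $x^n+1$ are coprime in $\mathbb{F}_q[x]$: a common root would force $2=0$, which fails in odd characteristic. Consequently, every monic irreducible factor of $x^{2n}-1$ divides exactly one of $x^n-1$ and $x^n+1$, giving a disjoint partition of the set of monic irreducible factors of $x^{2n}-1$.

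Next, I would observe that being SRIM (respectively, SCRIM) is an intrinsic property of the polynomial $f(x)$ itself, depending only on its coefficients, not on which product $f(x)$ happens to be a factor of. Therefore, restricting the partition above to the SRIM subset yields the disjoint decomposition
\begin{equation*}
\mathrm{SRIM}_{q,2n}(1) = \mathrm{SRIM}_{q,n}(1) \sqcup \mathrm{SRIM}_{q,n}(-1),
\end{equation*}
which is exactly the equality eq-SRIM rearranged. Taking cardinalities of both sides establishes statement (1). For statement (2), I would repeat the argument verbatim over $\mathbb{F}_{q^2}$, which again has odd characteristic, to obtain
\begin{equation*}
\mathrm{SCRIM}_{q^2,2n}(1) = \mathrm{SCRIM}_{q^2,n}(1) \sqcup \mathrm{SCRIM}_{q^2,n}(-1),
\end{equation*}
and then take cardinalities.

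There is no substantive obstacle here: the proof is essentially bookkeeping, and the author has already compressed its content into the two equations preceding the lemma. The only point requiring care is to record explicitly that the coprimality of $x^n-1$ and $x^n+1$ (which underlies the disjointness of the union) uses the standing assumption that $q$ is odd, and that this assumption is in force for both $\mathbb{F}_q$ (for SRIM) and $\mathbb{F}_{q^2}$ (for SCRIM).
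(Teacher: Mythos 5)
Your proof is correct and follows essentially the same route as the paper, which simply records the set identities $\mathrm{SRIM}_{q,n}(-1)=\mathrm{SRIM}_{q,2n}(1)\setminus\mathrm{SRIM}_{q,n}(1)$ and its SCRIM analogue from the factorization $x^{2n}-1=(x^n-1)(x^n+1)$ and states that the lemma follows. Your version merely makes explicit the coprimality of $x^n-1$ and $x^n+1$ in odd characteristic and the resulting disjointness, which the paper leaves implicit.
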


  Form the discussion above,  characterization and enumeration of  SRIM and SCRIM factors of $x^n+1$ are given  in terms  of  SRIM and SCRIM factors of $x^{2n}-1$ and   $x^n-1$. 
  The goal of this paper is to give alternative    characterization and enumeration of  such factors of $x^n+1$.  Simplified characterization  and recursive formulas  for the enumeration of such polynomials are established. The results are presented    in Sections 3 and 4. Their applications in the study of negacyclic codes  are given in Section 5.

\section{SRIM Factors of $x^n+1$ over $\mathbb{F}_{q}$}

 In this section, the characterization and enumeration of SRIM factors of $x^n+1$ over $\mathbb{F}_q$  can be given   for all  positive integers $n$  such that $\gcd(n,q)=1$. 
 
 Here, we write $n=2^m n'$, where  $m\geq 0$ is an integer  and  $n'$ is an odd positive integer.

  \subsection{Characterization and Enumeration of SRIM Factors of $x^n+1$}

  For each  integer  $m\geq 0$   and  odd positive integer $n'$, let $\alpha$ be a primitive $2^{m+1}n'$th root of unity.  For each $0\leq i < 2^{m+1}n'$, let $f_i(x)= \prod\limits_{j\in Cl_{q, 2^{m+1}n'} (i)}(x-\alpha^j)$. 
 
  \begin{lemma} \label{srim-parity} Let $m\geq 0$ be an integer and let  $n'$ be an odd positive integer.   Let  $0\leq i < 2^{m+1}n'$. Then    elements in $Cl_{q, 2^{m+1}n'} (i)$  have the same parity.
 \end{lemma}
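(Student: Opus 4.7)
The plan is a direct parity chase that rests on two standing facts: $q$ is an odd prime power (by the assumption made throughout the study of $x^n+1$ over $\mathbb{F}_q$), and the modulus $N:=2^{m+1}n'$ is even since $m+1\geq 1$.

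First I would unfold the definition: every element of $Cl_{q,2^{m+1}n'}(i)$ has the form $iq^j \bmod N$ for some $j\geq 0$. Since $q$ is odd, $q^j$ is odd for every $j\geq 0$, and therefore the integer $iq^j$ has the same parity as $i$.

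Next I would use that $N=2^{m+1}n'$ is even, so subtracting any integer multiple of $N$ from $iq^j$ does not alter its residue modulo $2$. Hence the canonical representative of $iq^j$ in $\{0,1,\dots,N-1\}$ retains the parity of $iq^j$, which is the parity of $i$. This shows every element of $Cl_{q,2^{m+1}n'}(i)$ is congruent to $i$ modulo $2$, which is exactly the claim.

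I do not anticipate a genuine obstacle: the only place one could trip up is forgetting that $N$ is even (if $N$ were odd, reduction modulo $N$ would not preserve parity and the statement would fail). Everything else is immediate from $q$ being odd.
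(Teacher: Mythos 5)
Your proposal is correct and follows essentially the same route as the paper: the paper writes $a-i\equiv i(q^j-1)\pmod{2^{m+1}n'}$ and uses that $q^j-1$ is even together with the evenness of the modulus to conclude $a\equiv i\pmod 2$, which is exactly your parity chase. Your version is slightly more explicit about why reduction modulo the even modulus preserves parity, but the substance is identical.
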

 \begin{proof}
  Let $a\in Cl_{q, 2^{m+1}n'} (i)$. Then $a\equiv iq^j ({\rm mod} \,2^{m+1}n')$ for some integer $j\geq 0$.   Then  $ a-i \equiv iq^j-i\equiv i(q^j-1)  ({\rm mod} \,2^{m+1}n')$.  Since $q^j-1$ is even,    we have   $ a-i \equiv 0 ({\rm mod} \,2)$.  Equivalently, $a$  and $i$ have the same parity.
 \end{proof}
 
 From Lemma \ref{srim-parity}, the parity of   a representative of $Cl_{q, 2^{m+1}n'} (i)$ is independent of its choices an hence we have the following lemma. 
 
  \begin{lemma}\label{div-1}  Let $m\geq 0$ be an integer and let  $n'$ be an odd positive integer.    Let $0\leq i < 2^{m+1}n'$. Then the following statements are equivalent.
      \begin{enumerate}
        \item $f_{i}(x) |(x^{2^mn'}+1)$.
       \item   $i$ is odd. 
       \item   $ 2^{m+1} |{\rm o}_{2^{m+1}n'}^+(i)$.
       \end{enumerate}
 \end{lemma}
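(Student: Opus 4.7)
The plan is to prove the three equivalences by establishing $(1)\Leftrightarrow(2)$ and $(2)\Leftrightarrow(3)$ separately; both reductions are essentially elementary $2$-adic calculations once the correct formulas are in place.

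For $(1)\Leftrightarrow(2)$, I would use that $f_i(x)$ is the monic irreducible polynomial over $\mathbb{F}_q$ having $\alpha^i$ as a root, so $f_i(x)\mid (x^{2^mn'}+1)$ if and only if $(\alpha^i)^{2^mn'}=-1$. Since $\alpha$ is a primitive $2^{m+1}n'$th root of unity, $\alpha^{2^mn'}$ is a primitive $2$nd root of unity in the ambient extension, hence equals $-1$. Thus $(\alpha^i)^{2^mn'}=(-1)^i$, and this is $-1$ precisely when $i$ is odd. (The factorization $x^{2^{m+1}n'}-1=(x^{2^mn'}-1)(x^{2^mn'}+1)$ guarantees that $f_i(x)$ lies in exactly one of the two factors.)

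For $(2)\Leftrightarrow(3)$, I would invoke the standard description of additive order
\[\mathrm{o}^+_{2^{m+1}n'}(i)=\frac{2^{m+1}n'}{\gcd(i,2^{m+1}n')}.\]
Setting $d=\gcd(i,2^{m+1}n')$, the condition $2^{m+1}\mid \mathrm{o}^+_{2^{m+1}n'}(i)$ becomes $2^{m+1}\mid \frac{2^{m+1}n'}{d}$, which is equivalent to $d\mid n'$. Because $n'$ is odd, this is in turn equivalent to $d$ being odd; and since $2\mid 2^{m+1}n'$, the greatest common divisor $d=\gcd(i,2^{m+1}n')$ is odd if and only if $i$ is odd.

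I do not anticipate any real obstacle. The entire argument relies only on the order of $\alpha$, the above factorization of $x^{2^{m+1}n'}-1$, and the comparison of $2$-adic valuations of $2^{m+1}n'$ and the odd integer $n'$. The only point that requires a brief word of justification is the reduction of $d\mid n'$ to the parity of $d$, which uses the oddness of $n'$ together with $d\mid 2^{m+1}n'$.
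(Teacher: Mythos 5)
Your proof is correct and takes essentially the same route as the paper's: both arguments reduce statement (1) to the identity $(\alpha^i)^{2^mn'}=(-1)^i$ (the paper phrases the forward direction as a contradiction when $i$ is even, and implicitly uses that $-1$ is the unique element of order $2$ in odd characteristic, just as you do), and both handle the equivalence with (3) via the formula $\mathrm{o}^+_{2^{m+1}n'}(i)=2^{m+1}n'/\gcd(i,2^{m+1}n')$ together with a parity comparison of the $\gcd$. Your $2$-adic bookkeeping for $(2)\Leftrightarrow(3)$ is, if anything, slightly more explicit than the paper's one-line observation that $\gcd(i,2^{m+1}n')=\gcd(i,n')$ for odd $i$.
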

 \begin{proof}
     To prove (1) implies (2), assume  that $f_{i}(x) |(x^{2^mn'}+1)$. Suppose that $i$ is even. Then $\frac{i}{2}$ is an integer and $0=(\alpha^i)^{2^mn'}+1=(\alpha^\frac{i}{2})^{2^{m+1}n'}+1= (\alpha^{2^{m+1}n'})^\frac{i}{2}+1=1+1=2$, a contradiction. Hence, $i$ is odd as desired.
     
     To prove (2) implies (3), assume  that $i$ is odd. Then ${\rm o}_{2^{m+1}n'}^+(i)=\frac{2^{m+1}n'} {\gcd(i, 2^{m+1}n')}= \frac{2^{m+1}n'} {\gcd(i, n')}$ which implies that $2^{m+1}||{\rm o}_{2^{m+1}n'}^+(i)$.
     
      To prove (3) implies (1), assume  that $2^{m+1}||{\rm o}_{2^{m+1}n'}^+(i)$.    Since   ${\rm o}_{2^{m+1}n'}^+(i)=\frac{2^{m+1}n'} {\gcd(i, 2^{m+1}n')} $,  $i$ is odd which implies that    $(\alpha^{i})^{2^mn'}  \ne 1$.  Since  $(\alpha^{i})^{2^{m+1}n'}  = (\alpha^{2^{m+1}n'})^{i} =  1$,  we have $(\alpha^{i})^{2^mn'}  +1=0$. It follows that 
     $\alpha^{it}+1 =0 $ for all  
      $t\in Cl_{q, 2^{m+1}n'} (i)$.  Hence, $f_{i}(x) |(x^{2^mn'}+1)$.
     \end{proof}
 
 Using the analysis similar to that of  \cite[Lemma 1]{JLX2011}, we have the following lemma. 
  \begin{lemma}  \label{charSRIM}
     Let $m\geq 0$ be an integer and let  $n'$ be an odd positive integer.    Let $0\leq i < 2^{m+1}n'$.  Then  the following statements are equivalent.
     \begin{enumerate}
         \item  $f_{i}(x)  $  is SRIM. 
         \item  $Cl_{q, 2^{m+1}n'} (i)=Cl_{q, 2^{m+1}n'} (-i)$.
         \item  $ {\rm o}_{2^{m+1}n'}^+(i)\in G_{(q,1)}$.
     \end{enumerate}
 \end{lemma}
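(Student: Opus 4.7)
The plan is to establish the chain $(1) \Leftrightarrow (2) \Leftrightarrow (3)$ by first relating $f_i^*$ to a cyclotomic coset and then translating that coset equality into a divisibility condition in terms of the additive order.

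For $(1) \Leftrightarrow (2)$: Set $N = 2^{m+1}n'$. A standard computation shows that for any monic $f(x)=\prod_j(x-\beta_j)$ with $\beta_j\neq 0$, we have $f^*(x)=\prod_j(x-\beta_j^{-1})$. Applying this to $f_i(x)$ gives
\[
f_i^*(x) = \prod_{j\in Cl_{q,N}(i)}(x-\alpha^{-j}) = \prod_{t\in -Cl_{q,N}(i)}(x-\alpha^{t}) = f_{-i}(x),
\]
where the last equality uses that multiplication by $-1$ commutes with multiplication by $q$ modulo $N$, so $-Cl_{q,N}(i)\equiv Cl_{q,N}(-i)\pmod{N}$. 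Since $f_i$ and $f_{-i}$ are both monic irreducible, they coincide if and only if their root sets (equivalently, their cosets) coincide. This gives $(1)\Leftrightarrow(2)$.

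For $(2)\Leftrightarrow(3)$: Set $d = {\rm o}^+_{N}(i) = N/\gcd(i,N)$ and write $i = \gcd(i,N)\cdot i'$, so that $\gcd(i', d)=1$. The condition $Cl_{q,N}(i)=Cl_{q,N}(-i)$ is equivalent to $-i\in Cl_{q,N}(i)$, i.e.\ the existence of an integer $k\geq 0$ with $i(q^k+1)\equiv 0\pmod{N}$. Dividing both sides by $\gcd(i,N)$ and using $\gcd(i',d)=1$, this reduces to $d\mid q^k+1$. For any $k\geq 1$, this is precisely the defining condition for $d\in G_{(q,1)}$. The only edge case is $k=0$, which forces $d\mid 2$; but $q$ is odd, hence both $1$ and $2$ lie in $G_{(q,1)}$ (via $k=1$), so the equivalence still holds. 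Conversely, if $d\in G_{(q,1)}$, pick $k\geq 1$ with $d\mid q^k+1$; then $iq^k\equiv -i\pmod N$, giving $-i\in Cl_{q,N}(i)$.

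The expected main obstacle is a careful bookkeeping step in $(2)\Leftrightarrow(3)$: verifying that the reduction to divisibility modulo $d$ is clean (which uses $\gcd(i',d)=1$) and that the $k=0$ boundary case, which corresponds to $i\in\{0,N/2\}$, is absorbed by the good-integer characterization thanks to $q$ being odd. Everything else reduces to the standard description of roots of $f_i^*$ and the usual formula ${\rm o}^+_N(i)=N/\gcd(i,N)$.
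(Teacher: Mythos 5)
Your proof is correct. The paper itself gives no argument here—it simply says the lemma follows by "the analysis similar to that of [JLX2011, Lemma 1]"—and your write-up supplies exactly that standard argument: identifying $f_i^*$ with $f_{-i}$ via the roots $\alpha^{-j}$, reducing the coset equality to $d\mid q^k+1$ with $d={\rm o}^+_N(i)$, and correctly absorbing the $k=0$ edge case using that $q$ is odd.
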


Based on the above discussion, the following corollary can be deduced directly. 
  \begin{corollary} \label{charSRIMxn-1}
    Let $m\geq 0$ be an integer and let  $n'$ be an odd positive integer. Let  $0\leq i < 2^{m+1}n'$.  Then  $f_{i}(x)  $  is a SRIM  factor of   $x^{2^mn'}+1$  if and only if 
         $2^{m+1}|  {\rm o}_{2^{m+1}n'}^+(i)$  and $ {\rm o}_{2^{m+1}n'}^+(i)\in G_{(q,1)}$.
\end{corollary}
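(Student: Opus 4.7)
The plan is essentially to observe that this corollary is the conjunction of two already-established equivalences and requires no new ideas. A polynomial $f_i(x)$ is a SRIM factor of $x^{2^m n'}+1$ precisely when two conditions hold simultaneously: $f_i(x)$ is SRIM, and $f_i(x)$ divides $x^{2^m n'}+1$. So I would split the proof into these two halves and invoke the appropriate lemma for each.

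First, I would apply Lemma \ref{charSRIM}, which gives the equivalence between $f_i(x)$ being SRIM and the order condition ${\rm o}^+_{2^{m+1}n'}(i) \in G_{(q,1)}$. This takes care of the SRIM part of the conclusion with no extra work, since the same indexing of cyclotomic cosets modulo $2^{m+1}n'$ is used both in the lemma and in the statement of the corollary.

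Next, I would apply Lemma \ref{div-1} to translate the divisibility condition $f_i(x) \mid (x^{2^m n'}+1)$ into the arithmetic condition $2^{m+1} \mid {\rm o}^+_{2^{m+1}n'}(i)$ (via the equivalence with $i$ being odd stated there). Again, nothing additional is needed because the setup (primitive $2^{m+1}n'$th root of unity, the definition of $f_i(x)$, the range $0 \le i < 2^{m+1}n'$) is identical.

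Finally, I would conjoin the two equivalences to obtain the stated biconditional. There is no genuine obstacle here; the content of the corollary is purely a packaging of Lemmas \ref{div-1} and \ref{charSRIM}. The only thing worth being mildly careful about is to note that both lemmas use the \emph{same} modulus $2^{m+1}n'$, so the two occurrences of ${\rm o}^+_{2^{m+1}n'}(i)$ refer to the same quantity and the two conditions can be conjoined directly without any reindexing.
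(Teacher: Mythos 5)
Your proposal is correct and matches the paper's intent exactly: the paper states that the corollary "can be deduced directly" from the preceding discussion, meaning precisely the conjunction of Lemma \ref{div-1} (for the divisibility condition) and Lemma \ref{charSRIM} (for the SRIM condition). Nothing further is needed.
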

 
Consequently, the set of  SRIM factors of $x^n+1$ over $\mathbb{F}_{q}$  is 
 \begin{align}\label{uTd}
  \mathrm{SRIM}_{q,2^mn'}(-1)=\bigcup_{ d|n', 2^{m+1}d\in G_{(q,1)} }T_{d},
 \end{align}   
 where 
 \begin{align} \label{star1} T_d=\left\{ \prod_{j\in Cl_{q,2^{m+1}n'}(i)}{(x-\alpha^j)} \mid  0\leq i<2^{m+1}n' \text{ and  } {\rm o}_{2^{m+1}n'}^+(i)=2^{m+1} d\right\}\end{align}
 and 
 $\alpha$ is a primitive $2^{m+1}n'$th root of unity.

\begin{theorem} \label{enumSRIMxn+1} Let $m\geq 0$ be an integer and let  $n'$ be an odd positive integer. Let $\nu$ be a positive integer such that $ 2^\nu||(q+1)$. Then the  number of SRIM factors of $x^{2^m n'}+1$ over $\mathbb{F}_{q}$ is
    \[\displaystyle  |\mathrm{SRIM}_{q,2^mn'}(-1)|=\sum_{d\mid n',2^{m+1}d\in G_{(q,1)}}\dfrac{\phi(2^{m+1}d)}{\mathrm{ord}_{2^{m+1}d}(q)},\]  
    where $\phi$ is the Euler's totient function.
    
   If $m\geq \nu$, then $|\mathrm{SRIM}_{q,2^mn'}(-1)|=0$. In particular, if $q\equiv 3 ({\rm mod }\, 4)$,  then \[|\mathrm{SRIM}_{q,2^mn'}(-1)|=0\] for all $m\geq 1$.
\end{theorem}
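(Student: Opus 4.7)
The plan is to derive the enumeration formula by counting the polynomials in each piece $T_d$ of the decomposition (\ref{uTd}), and then to read off the vanishing statements as direct consequences of Propositions \ref{prop2.01} and \ref{prop2m}.

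First, I would observe that (\ref{uTd}) is a disjoint union: all elements of a given $q$-cyclotomic coset modulo $2^{m+1}n'$ share the same additive order (as recorded just before (\ref{xn-1})), so each coset lies in exactly one $T_d$. Next, for a fixed $d\mid n'$ with $2^{m+1}d\in G_{(q,1)}$, I would count $|T_d|$ in two steps. Since $2^{m+1}d$ divides $2^{m+1}n'$, the standard fact that a cyclic group of order $N$ contains $\phi(D)$ elements of additive order $D$ for each $D\mid N$ gives exactly $\phi(2^{m+1}d)$ residues $i$ with ${\rm o}^+_{2^{m+1}n'}(i)=2^{m+1}d$. For any such $i$, the preliminaries yield $|Cl_{q,2^{m+1}n'}(i)|={\rm ord}_{{\rm o}^+_{2^{m+1}n'}(i)}(q)={\rm ord}_{2^{m+1}d}(q)$, so these residues partition into $\phi(2^{m+1}d)/{\rm ord}_{2^{m+1}d}(q)$ distinct cosets. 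Summing over admissible $d$ yields the claimed formula.

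For the vanishing under $m\geq\nu$, it suffices to show that no divisor $d$ of $n'$ satisfies $2^{m+1}d\in G_{(q,1)}$, because then the index set of the sum is empty. The hypothesis $m\geq\nu$ forces $2^{m+1}\nmid(q+1)$. When $d=1$, Proposition \ref{prop2.01} (with $\beta=m+1$) rules out $2^{m+1}\in G_{(q,1)}$. When $d>1$ is odd, the exponent $\beta=m+1\geq\nu+1\geq 2$ meets the hypothesis of Proposition \ref{prop2m}, whose equivalent condition requires $2^{m+1}\mid(q+1)$; this fails, so $2^{m+1}d\notin G_{(q,1)}$.

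The final ``in particular'' assertion is then a direct specialization: computing $\nu$ from the stated congruence of $q$ modulo $4$ reduces the hypothesis $m\geq\nu$ to $m\geq 1$, and the previous paragraph applies. No step of the argument is genuinely subtle; the only bookkeeping worth double-checking is that the coset-size formula is invoked with modulus equal to the additive order $2^{m+1}d$ rather than the ambient $2^{m+1}n'$, and that the exponent hypothesis $\beta\geq 2$ of Proposition \ref{prop2m} is still met in the boundary case $m=\nu=1$.
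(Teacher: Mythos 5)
Your proposal is correct and follows essentially the same route as the paper: count each $T_d$ as $\phi(2^{m+1}d)/\mathrm{ord}_{2^{m+1}d}(q)$ via the additive-order and coset-size facts, sum over admissible $d$, and kill the index set for $m\geq\nu$ using the good-integer propositions. If anything, you are slightly more careful than the paper, which cites Proposition~\ref{prop2m} uniformly even though that proposition requires $d>1$, whereas you correctly route the $d=1$ case through Proposition~\ref{prop2.01}.
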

\begin{proof} Let $d$ be a divisor of $n'$. Then the number of integers satisfied the conditions $0\leq i<2^{m+1}n'$  and  $ {\rm o}_{2^{m+1}n'}^+(i)=2^{m+1} d$ in \eqref{star1} is  ${\phi(2^{m+1}d)}$.  It is not difficult to see that each polynomial in $T_d$ has degree  $\mathrm{ord}_{2^{m+1}d}(q)$.  Hence, $|T_d|= \dfrac{\phi(2^{m+1}d)}{\mathrm{ord}_{2^{m+1}d}(q)}$.
    By \eqref{uTd}, the desired number $|\mathrm{SRIM}_{q,2^mn'}(-1)|$ is the summation of $|T_d|$ for all divisors $d$ of $n'$ such that  $2^{m+1}d\in G_{(q,1)}$.   
    
       Since $q$ is odd, we have that $\nu\geq 1$. Note that if $m\geq \nu$, we have   $2^{m+1}d \notin G_{(q,1)}$ for all divisors $d$ of $n'$ by Proposition \ref{prop2m}.  Hence, $|\mathrm{SRIM}_{q,2^mn'}(-1)|=0$.  Moreover, if $q\equiv 3 ({\rm mod }\, 4)$, we have $\nu=1$ which implies that      $2^{m+1}d \notin G_{(q,1)}$ and  $|\mathrm{SRIM}_{q,2^mn'}(-1)|=0$ for all $m\geq 1$.
    \end{proof}

  For an odd positive integer $n$, we observe that $x+1$ is  always  a SRIM factor of $x^n+1$.  
 In the following theorem, we focus on the characterization of  two extreme cases where  $x+1$ is  the only   SRIM factor of $x^n+1$ and where all the irreducible monic factors $x^n+1$  are SRIM.
  \begin{theorem} \label{charODDSRIMxn+1}
      Let $n$ be an odd   integer coprime to $q$. Then  the following statements hold.
      \begin{enumerate}
          \item  Every  monic irreducible factor of $x^{n}+1$ over $\mathbb{F}_{q}$ is SRIM if and only if 
          there exists a positive integer $s$ such that $2^s||\mathrm{ord}_l{(q)}$ for every   prime divisor $l$  of $n$. Equivalently, $n\in  G_{(q,1)}$.
          \item  The polynomial   $x+1$ is the only   SRIM  factor of $x^{n}+1$ over $\mathbb{F}_{q}$ if and only if 
          $\mathrm{ord}_l{(q)}$ is odd for all   prime divisors $l$  of $n$. 
      \end{enumerate}
  \end{theorem}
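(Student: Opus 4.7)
The plan is to derive both parts from Corollary~\ref{charSRIMxn-1} together with the good--integer machinery of Section~2. Since $n$ is odd I take $m=0$ and $n'=n$, so the SRIM factors of $x^n+1$ over $\mathbb{F}_q$ are exactly the polynomials $f_i(x)$ with $0\le i<2n$ for which $2\mid {\rm o}_{2n}^+(i)$ and ${\rm o}_{2n}^+(i)\in G_{(q,1)}$. Any such $i$ is automatically odd, and then ${\rm o}_{2n}^+(i)=2d$ with $d=n/\gcd(i,n)$ ranging over the divisors of $n$; the case $d=1$ corresponds to $i=n$ and yields $f_n(x)=x+1$. Thus the whole question reduces to deciding, for each divisor $d$ of $n$, whether $2d\in G_{(q,1)}$.

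For part~1, the condition ``every monic irreducible factor of $x^n+1$ is SRIM'' translates to $2d\in G_{(q,1)}$ for every divisor $d$ of $n$. The case $d=1$ is automatic: since $q$ is odd, $2\mid q+1$ and Proposition~\ref{prop2.01} gives $2\in G_{(q,1)}$. For odd $d>1$ Proposition~\ref{Jit1} allows us to replace $2d\in G_{(q,1)}$ by $d\in G_{(q,1)}$. I would then use the fact that $G_{(q,1)}$ is closed under positive divisors---immediate from the definition, since any divisor of $q^k+1$ is again a divisor of $q^k+1$---to collapse the list of conditions to the single requirement $n\in G_{(q,1)}$. Proposition~\ref{Mor2} applied to $n$ (vacuously true when $n=1$) finally converts this into the existence of an integer $s\ge 1$ with $2^s||\mathrm{ord}_l(q)$ for every prime divisor $l$ of $n$, which is exactly the equivalence asserted in the theorem.

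For part~2, ``$x+1$ is the only SRIM factor of $x^n+1$'' says that no divisor $d>1$ of $n$ produces a SRIM factor, i.e., $2d\notin G_{(q,1)}$ for every such $d$, which by Proposition~\ref{Jit1} is equivalent to $d\notin G_{(q,1)}$ for every $d>1$ dividing $n$. By the divisor-closure of $G_{(q,1)}$ this is in turn equivalent to asking that no prime divisor $l$ of $n$ lies in $G_{(q,1)}$. Proposition~\ref{Mor2} applied to a single prime $l$ says $l\in G_{(q,1)}$ iff there exists $s\ge 1$ with $2^s||\mathrm{ord}_l(q)$, i.e., iff $\mathrm{ord}_l(q)$ is even, so the condition becomes $\mathrm{ord}_l(q)$ odd for every prime $l\mid n$. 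The only subtlety in the whole argument is the divisor-closure of $G_{(q,1)}$, which is not stated as a separate proposition in Section~2 but is trivial from the definition; with this in hand, the rest of the proof is pure bookkeeping and I anticipate no significant obstacle.
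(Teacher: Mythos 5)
Your proposal is correct and follows essentially the same route as the paper: both reduce, via the characterization of SRIM factors of $x^{n}+1$ (equivalently the enumeration formula of Theorem~\ref{enumSRIMxn+1}), to deciding whether $2d\in G_{(q,1)}$ for the divisors $d$ of $n$, then strip the factor $2$ with Proposition~\ref{Jit1} and translate membership in $G_{(q,1)}$ into the order condition with Proposition~\ref{Mor2}. The only cosmetic difference is that you invoke the (trivially verified) divisor-closure of $G_{(q,1)}$ to pass between $n$, its divisors, and its prime divisors, whereas the paper applies Proposition~\ref{Mor2} to each divisor of $n$ directly; both steps are sound.
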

  \begin{proof} To prove (1),  assume that  every  monic irreducible factor of $x^{n}+1$ over $\mathbb{F}_{q}$ is SRIM.  From Theorem \ref{enumSRIMxn+1},  $2d\in G_{(q,1)}$ for  every divisor $d$ of $n$.  Hence,   $n\in G_{(q,1)}$ by   Proposition \ref{Jit1}.  The desired result is therefore follows from Proposition \ref{Mor2}.

      Conversely,   there exists a positive integer $s$ such that $2^s||\mathrm{ord}_l{(q)}$ for every   prime divisor $l$  of $n$.  By   Proposition \ref{Mor2}, we have that $n$ and all its divisors are in  $G_{(q,1)}$.  From Proposition \ref{Jit1}, $2d\in G_{(q,1)}$ for all divisors $d$ of $n$. Hence,  every  monic irreducible factor of $x^{n}+1$ over $\mathbb{F}_{q}$ is SRIM by  Theorem \ref{enumSRIMxn+1}.

      To prove (2),  assume that $x+1$ is the only   SRIM  factor of $x^{n}+1$ over $\mathbb{F}_{q}$. By Theorem \ref{enumSRIMxn+1},  $d=1$ is the only divisor of $n$ such that $2d\in G_{(q,1)}$.  Equivalently,   $d=1$ is the only divisor of $n$ such that $d\in G_{(q,1)}$ by Proposition \ref{Jit1}. Hence, $l\notin G_{(q,1)}$ for all prime divisors $l$ of $n$.   As desired, $\mathrm{ord}_l{(q)}$ is odd for all   prime divisors $l$  of $n$ by Proposition \ref{Mor2}.

      Conversely, $\mathrm{ord}_l{(q)}$ is odd for all   prime divisors $l$  of $n$. By Proposition \ref{Mor2},  $d=1$ is the only divisor of $n$ such that $ d\in G_{(q,1)}$.  Hence, it is the only divisor of $n$ such that $ 2d\in G_{(q,1)}$ by Proposition \ref{Jit1}. Hence, $|\mathrm{SRIM}_{q,n}(-1)|=1$ by Theorem \ref{enumSRIMxn+1}.

  \end{proof}

 \begin{proposition}  \label{eitherSRIMxn+1}
     Let $l$ be an odd prime  such that $l\nmid q$ and let $s$ be a positive integer. Then either  every  monic irreducible factor of $x^{l^s}+1$ over $\mathbb{F}_{q}$ is  SRIM or  $x+1$ is the only SRIM factor of $x^{l^s}+1$ over $\mathbb{F}_{q}$.
 \end{proposition}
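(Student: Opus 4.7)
The plan is to reduce this dichotomy directly to Theorem~\ref{charODDSRIMxn+1} by exploiting the fact that $n = l^s$ has a single prime divisor, namely $l$ itself. Under that restriction, the two characterizations in Theorem~\ref{charODDSRIMxn+1} collapse to conditions on the single multiplicative order $\mathrm{ord}_l(q)$, which then satisfies a clean trichotomy-free dichotomy (even versus odd).

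More concretely, I would proceed as follows. First, note that $\gcd(l^s, q) = 1$ since $l \nmid q$, so $l^s$ is an odd integer coprime to $q$ and Theorem~\ref{charODDSRIMxn+1} applies with $n = l^s$. The prime divisors of $l^s$ form the singleton $\{l\}$. Apply part~(1) of Theorem~\ref{charODDSRIMxn+1}: every monic irreducible factor of $x^{l^s}+1$ over $\mathbb{F}_q$ is SRIM if and only if there exists a positive integer $t$ with $2^t \,\|\, \mathrm{ord}_l(q)$, that is, if and only if $\mathrm{ord}_l(q)$ is even. Next, apply part~(2): the polynomial $x+1$ is the only SRIM factor of $x^{l^s}+1$ over $\mathbb{F}_q$ if and only if $\mathrm{ord}_l(q)$ is odd.

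Finally, observe that $\mathrm{ord}_l(q)$ is a positive integer, and hence is either odd or even; exactly one of the two conditions above must therefore hold. This yields the claimed dichotomy. I do not anticipate any real obstacle: the proposition is essentially an immediate specialization of Theorem~\ref{charODDSRIMxn+1} to the prime-power case, and the only care required is verifying that the hypothesis ``there exists $t \geq 1$ with $2^t \,\|\, \mathrm{ord}_l(q)$'' is equivalent to ``$\mathrm{ord}_l(q)$ is even'' when there is only one prime to check, which is immediate from the definition of exact divisibility.
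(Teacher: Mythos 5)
Your proposal is correct and follows essentially the same route as the paper: the paper's own proof also reduces to the observation that, for $n=l^s$ with a single prime divisor, all irreducible monic factors are SRIM precisely when $\mathrm{ord}_l(q)$ is even and $x+1$ is the only SRIM factor precisely when $\mathrm{ord}_l(q)$ is odd, then invokes the even/odd dichotomy. The only cosmetic difference is that the paper unwinds this through Propositions~\ref{Mor2} and~\ref{Jit1} on good integers, whereas you cite the already-packaged Theorem~\ref{charODDSRIMxn+1}; the logical content is identical.
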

 
 \begin{proof}  
    By Proposition \ref{Mor2} and  Proposition \ref{Jit1}, it follows that $ l\in G_{(q,1)}$ if and only if $ 2l^i\in G_{(q,1)}$ for all positive integers $i$. Hence,   every  monic irreducible factor of $x^{l^s}+1$ over $\mathbb{F}_{q}$ is  SRIM if and only if   $\mathrm{ord}_l{(q)}$ is even, and  $x+1$ is the only SRIM factor of $x^{l^s}+1$ over $\mathbb{F}_{q}$ if and only $\mathrm{ord}_l{(q)}$ is odd.
 \end{proof}

 The following theorems can be derived directly from  Theorem \ref{charODDSRIMxn+1}, Proposition~\ref{Mor2} and Proposition~\ref{Jit1}. The proofs are omitted.  
 
 \begin{theorem} \label{prod1SRIMxn+1}
     Let $n_1$ and $n_2$ be coprime odd integers relatively prime to $q$. If  $x+1$ is the only SRIM factor  of $x^{n_1}+1$ and $x^{n_2}+1$ over $\mathbb{F}_{q}$, then it is the only SRIM factor of $x^{n_1n_2}+1$ over $\mathbb{F}_{q}$.  
 \end{theorem}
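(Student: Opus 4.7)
The plan is to reduce the statement to the prime-divisor characterization given in Theorem \ref{charODDSRIMxn+1}(2), which says that for an odd integer $n$ coprime to $q$, $x+1$ is the only SRIM factor of $x^n+1$ over $\mathbb{F}_q$ precisely when $\mathrm{ord}_l(q)$ is odd for every prime divisor $l$ of $n$. Once the problem is rephrased in these terms, the argument becomes essentially a set-theoretic observation about prime divisors of a product.

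First, I would verify that the characterization is applicable to $n_1 n_2$: since $n_1$ and $n_2$ are odd and coprime to $q$, their product $n_1 n_2$ is also odd and coprime to $q$. Next, applying Theorem \ref{charODDSRIMxn+1}(2) to each of the hypotheses, I would extract that $\mathrm{ord}_l(q)$ is odd for every prime divisor $l$ of $n_1$, and likewise for every prime divisor of $n_2$.

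Then I would observe that every prime divisor of $n_1 n_2$ divides either $n_1$ or $n_2$, so the set of prime divisors of $n_1 n_2$ is the union of the sets of prime divisors of $n_1$ and $n_2$ (the coprimality of $n_1, n_2$ makes this a disjoint union, although that refinement is not needed here). For each such prime $l$, the hypothesis guarantees that $\mathrm{ord}_l(q)$ is odd, so invoking Theorem \ref{charODDSRIMxn+1}(2) once more in the reverse direction yields that $x+1$ is the only SRIM factor of $x^{n_1 n_2}+1$ over $\mathbb{F}_q$.

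There is no serious obstacle here since Theorem \ref{charODDSRIMxn+1} has already absorbed all the work involving good integers (via Propositions \ref{Mor2} and \ref{Jit1}). The proof is simply a matter of chaining the characterization through the trivial fact that $\{\text{primes dividing } n_1 n_2\} = \{\text{primes dividing } n_1\} \cup \{\text{primes dividing } n_2\}$, which is why the authors indicate that the proof may be omitted.
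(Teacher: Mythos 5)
Your proof is correct and follows exactly the route the paper indicates (the authors omit the proof but state it is derived from Theorem \ref{charODDSRIMxn+1} together with Propositions \ref{Mor2} and \ref{Jit1}): apply the prime-divisor characterization of Theorem \ref{charODDSRIMxn+1}(2) to $n_1$ and $n_2$, note that the primes dividing $n_1 n_2$ are precisely those dividing $n_1$ or $n_2$, and apply the characterization in reverse. No gaps.
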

 
 \begin{theorem} \label{prod2SRIMxn+1}
     Let $n_1$ and $n_2$ be coprime odd integers relatively prime to $q$. If every irreducible factor of $x^{n_1}+1$ over $\mathbb{F}_{q} $ is SRIM and  $x+1$ is the only SRIM factor of  $x^{n_2}+1$ over $\mathbb{F}_{q}$, then  $\mathrm{SRIM}_{q,n_1n_2}(-1)=\mathrm{SRIM}_{q,n_1}(-1)$. 
 \end{theorem}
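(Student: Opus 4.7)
The plan is to combine the characterization from Theorem \ref{charODDSRIMxn+1} with the enumeration formula in Theorem \ref{enumSRIMxn+1} to reduce the desired set equality to a comparison of divisor sums, and then upgrade the cardinality equality to set equality via an elementary inclusion.

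First I would translate the two hypotheses via Theorem \ref{charODDSRIMxn+1}: the hypothesis on $x^{n_1}+1$ is equivalent to $n_1 \in G_{(q,1)}$, and by Proposition \ref{Mor2} all prime divisors of $n_1$ share a common exponent $s \geq 1$ with $2^s \| \mathrm{ord}_l(q)$, a property automatically inherited by every divisor of $n_1$; the hypothesis on $x^{n_2}+1$ says that $\mathrm{ord}_l(q)$ is odd for every prime divisor $l$ of $n_2$. For the easy inclusion $\mathrm{SRIM}_{q,n_1}(-1) \subseteq \mathrm{SRIM}_{q,n_1 n_2}(-1)$, I would use that $n_2$ is odd to factor $x^{n_1 n_2}+1 = (x^{n_1})^{n_2}+1$ as $(x^{n_1}+1)$ times a polynomial in $\mathbb{F}_q[x]$, so that any SRIM factor of $x^{n_1}+1$ remains a (necessarily SRIM) factor of $x^{n_1 n_2}+1$.

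The main step is to verify that the two cardinalities coincide. By Theorem \ref{enumSRIMxn+1} (taking $m=0$),
\[
|\mathrm{SRIM}_{q,n_1 n_2}(-1)| = \sum_{\substack{d \mid n_1 n_2\\ 2d \in G_{(q,1)}}} \frac{\phi(2d)}{\mathrm{ord}_{2d}(q)}.
\]
Using $\gcd(n_1,n_2)=1$, I would factor each divisor uniquely as $d = d_1 d_2$ with $d_i \mid n_i$, and show that only the terms with $d_2 = 1$ survive. Indeed, for $d > 1$, Proposition \ref{Jit1} gives $2d \in G_{(q,1)} \iff d \in G_{(q,1)}$; and if $d_2 > 1$, then $d_2$ has a prime divisor $l$ of $n_2$ with $\mathrm{ord}_l(q)$ odd, so the uniform condition in Proposition \ref{Mor2} fails for the prime set of $d$. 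Hence the sum collapses to $\sum_{d_1 \mid n_1} \phi(2d_1)/\mathrm{ord}_{2d_1}(q) = |\mathrm{SRIM}_{q,n_1}(-1)|$, by another application of Theorem \ref{enumSRIMxn+1}.

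The most delicate point will be the boundary cases $d = 1$ or $d_1 = 1$, where Proposition \ref{Jit1} does not directly apply; these contribute the factor $x+1$ to both enumerations, and are handled separately by noting $2 \in G_{(q,1)}$, which is automatic since $q$ is odd forces $2 \mid q+1$. Once the cardinality equality is established, combining it with the inclusion from the second paragraph forces the set equality $\mathrm{SRIM}_{q,n_1 n_2}(-1) = \mathrm{SRIM}_{q,n_1}(-1)$.
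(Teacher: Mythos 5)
Your proposal is correct and follows exactly the route the paper indicates for this theorem (whose proof is omitted there as "derived directly from Theorem \ref{charODDSRIMxn+1}, Proposition \ref{Mor2} and Proposition \ref{Jit1}"): you translate both hypotheses into conditions on $G_{(q,1)}$, observe that a divisor $d=d_1d_2$ of $n_1n_2$ with $d_2>1$ cannot lie in $G_{(q,1)}$ because some prime of $n_2$ has odd order, and handle $d=1$ via $2\mid(q+1)$. Your extra step of passing through the enumeration formula of Theorem \ref{enumSRIMxn+1} and upgrading the cardinality identity to a set identity via the divisibility $\bigl(x^{n_1}+1\bigr)\mid\bigl(x^{n_1n_2}+1\bigr)$ for odd $n_2$ is a sound (if slightly indirect) way to conclude, and fills in the details the paper leaves implicit.
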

 
 \begin{theorem} \label{prod3SRIMxn+1}
     Let $n_1$ and $n_2$ be  odd integers coprime  to $q$. If every   irreducible monic factor of $x^{n_1}+1$ and $x^{n_2}+1$ over $\mathbb{F}_{q}$ is SRIM, then all irreducible monic  factors of $x^{n_1n_2}+1$ over $\mathbb{F}_{q}$ are SRIM if and only if  there exists a positive integer $s$ such that $2^s|| \mathrm{ord}_{n_1}{(q)}$ and $2^s||\mathrm{ord}_{n_2}{(q)}$.
 \end{theorem}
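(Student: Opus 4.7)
The plan is to reduce everything to Theorem~\ref{charODDSRIMxn+1}(1), which translates the property ``every monic irreducible factor of $x^n+1$ over $\mathbb{F}_q$ is SRIM'' into the existence of a positive integer $s$ with $2^s \| \mathrm{ord}_l(q)$ for every prime divisor $l$ of $n$. Applying this to the two hypotheses, I obtain positive integers $s_1$ and $s_2$ such that $2^{s_i} \| \mathrm{ord}_l(q)$ for every prime $l \mid n_i$, $i \in \{1,2\}$. Applying the same characterization to the product $n_1 n_2$, the desired conclusion is equivalent to the existence of a common exponent $s$ with $2^s \| \mathrm{ord}_l(q)$ for every prime $l \mid n_1 n_2$. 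Since $\gcd(n_1,n_2)=1$, the set of primes dividing $n_1 n_2$ is the disjoint union of those dividing $n_1$ and those dividing $n_2$, so this common $s$ exists precisely when $s_1 = s_2$.

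The remaining task is an arithmetic bridge: for an odd $n$ coprime to $q$, the condition ``$2^s \| \mathrm{ord}_l(q)$ for every prime $l \mid n$'' is equivalent to ``$2^s \| \mathrm{ord}_n(q)$''. I would prove this by decomposing
\[
\mathrm{ord}_n(q) \;=\; \mathrm{lcm}\bigl\{\mathrm{ord}_{l^a}(q) : l^a \| n\bigr\},
\]
and then invoking the standard fact that for an odd prime $l$ coprime to $q$, the ratio $\mathrm{ord}_{l^a}(q)/\mathrm{ord}_l(q)$ is a power of $l$ and hence odd. This forces $v_2(\mathrm{ord}_{l^a}(q)) = v_2(\mathrm{ord}_l(q))$ for every prime power $l^a \| n$; taking the $2$-adic valuation of the lcm (which is the maximum of the component valuations) then yields $v_2(\mathrm{ord}_n(q)) = s$ as required.

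Combining the two steps, ``there exists $s$ with $2^s \| \mathrm{ord}_{n_1}(q)$ and $2^s \| \mathrm{ord}_{n_2}(q)$'' is equivalent to $s_1 = s_2$, which by the first step is equivalent to every monic irreducible factor of $x^{n_1 n_2}+1$ over $\mathbb{F}_q$ being SRIM. The main obstacle is the arithmetic bridge relating $v_2(\mathrm{ord}_n(q))$ to the prime-by-prime valuations $v_2(\mathrm{ord}_l(q))$; once that is in place, the rest is a clean bookkeeping argument based on Theorem~\ref{charODDSRIMxn+1}(1) and the coprimality $\gcd(n_1, n_2) = 1$.
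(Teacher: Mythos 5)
Your proposal is correct and matches the route the paper intends: the paper omits the proof of Theorem~\ref{prod3SRIMxn+1}, stating only that it follows directly from Theorem~\ref{charODDSRIMxn+1}, Proposition~\ref{Mor2} and Proposition~\ref{Jit1}, and your argument is precisely that derivation with the one missing piece of arithmetic supplied (the bridge identifying $v_2(\mathrm{ord}_{n_i}(q))$ with the common valuation $v_2(\mathrm{ord}_l(q))$ over primes $l\mid n_i$, which is genuinely needed because the theorem's criterion is phrased in terms of $\mathrm{ord}_{n_i}(q)$ rather than its prime divisors). The only cosmetic point is that you invoke $\gcd(n_1,n_2)=1$, which is not among the hypotheses of this particular theorem; your argument only needs that the primes dividing $n_1n_2$ form the union (not necessarily disjoint) of those dividing $n_1$ and those dividing $n_2$, and a shared prime forces $s_1=s_2$ anyway, so nothing breaks.
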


\begin{corollary}\label{prod2prim-SRIMxn+1}
    Let $q$ be a prime power, let $l_1,l_2$ be distinct odd primes relatively to $q$ and let $r_1,r_2$ be  positive integers. Let $s_1$ and $s_2$ be non-negative integers such that $2^{s_1}|| \mathrm{ord}_{l_1}(q)$ and $2^{s_2}|| \mathrm{ord}_{l_2}(q)$. Then the following statements hold.
    \begin{enumerate}
        \item If $s_1=0=s_2$, then 
           \begin{align*}
        |\mathrm{SRIM}_{q,l_1^{r_1}l_2^{r_2}}(-1)|=1. 
        \end{align*}
        \item If  $s_1\geq 1$  and $s_2=0$, then
        \begin{align*}
        |\mathrm{SRIM}_{q,l_1^{r_1}l_2^{r_2}}(-1)|=|\mathrm{SRIM}_{q,l_1^{r_1}}(-1)|. 
        \end{align*}
        \item If $ s_1\ne s_2$ are positive, then
        \begin{align*}
        |\mathrm{SRIM}_{q,l_1^{r_1}l_2^{r_2}}(-1)|=|\mathrm{SRIM}_{q,l_1^{r_1}}(-1)|+|\mathrm{SRIM}_{q,l_2^{r_2}}(-1)|-1.
        \end{align*}
        \item If $ s_1=s_2\geq 1$, then
        \begin{align*}
        |\mathrm{SRIM}_{q,l_1^{r_1}l_2^{r_2}}(-1)|=\sum_{i=0}^{r_1}\sum_{j=0}^{r_2}\dfrac{\phi(l_1^il_2^j)}{\mathrm{ord}_{l_1^il_2^j}(q)}.
        \end{align*}
    \end{enumerate}
\end{corollary}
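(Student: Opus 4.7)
My plan is to apply Theorem \ref{enumSRIMxn+1} with $n = l_1^{r_1} l_2^{r_2}$ (so $m = 0$ and $n' = l_1^{r_1}l_2^{r_2}$) and then determine, case by case depending on $s_1$ and $s_2$, which divisors $d$ of $n'$ contribute to the sum, i.e., satisfy $2d \in G_{(q,1)}$.

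First I would observe that every divisor of $l_1^{r_1}l_2^{r_2}$ has the unique form $d = l_1^i l_2^j$ with $0 \leq i \leq r_1$ and $0 \leq j \leq r_2$. Since such a $d$ is odd and coprime to $q$, standard multiplicativity of $\phi$ and the Chinese Remainder Theorem give $\phi(2d) = \phi(d)$ and $\mathrm{ord}_{2d}(q) = \mathrm{ord}_{d}(q)$, so the summand $\phi(2d)/\mathrm{ord}_{2d}(q)$ of Theorem \ref{enumSRIMxn+1} coincides with $\phi(l_1^i l_2^j)/\mathrm{ord}_{l_1^i l_2^j}(q)$. I would then build the following membership table for $2d \in G_{(q,1)}$: when $d = 1$, we have $2 \in G_{(q,1)}$ automatically since $2 \mid (q+1)$ (Proposition \ref{prop2.01}); when $d = l_1^i$ with $i \geq 1$, Proposition \ref{Jit1} reduces the condition to $l_1^i \in G_{(q,1)}$, which by Proposition \ref{Mor2} is equivalent to $s_1 \geq 1$, and symmetrically for $d = l_2^j$ with $j \geq 1$; and when $d = l_1^i l_2^j$ with $i, j \geq 1$, Propositions \ref{Mor2} and \ref{Jit1} together show $2d \in G_{(q,1)}$ iff there exists a common $s \geq 1$ with $2^{s} || \mathrm{ord}_{l_1}(q)$ and $2^{s} || \mathrm{ord}_{l_2}(q)$, that is, iff $s_1 = s_2 \geq 1$.

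With this table in hand, the four cases follow by direct accounting. In (1) only $d = 1$ contributes, yielding $1$. In (2) only divisors supported on $l_1$ contribute, and the resulting sum is, again by Theorem \ref{enumSRIMxn+1} applied to $x^{l_1^{r_1}}+1$, precisely $|\mathrm{SRIM}_{q,l_1^{r_1}}(-1)|$. In (3) the contributing divisors are exactly those supported on at most one of $l_1, l_2$; writing the total as (sum over divisors of $l_1^{r_1}$) plus (sum over divisors of $l_2^{r_2}$) minus the $d = 1$ term that is counted twice yields $|\mathrm{SRIM}_{q,l_1^{r_1}}(-1)| + |\mathrm{SRIM}_{q,l_2^{r_2}}(-1)| - 1$. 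In (4) every divisor contributes, producing the double sum exactly as stated.

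I expect no analytic obstacle; the only care point is in case (3), where the $d = 1$ contribution must be subtracted once to avoid double counting when the two single-prime enumerations are combined. Everything else is a routine invocation of Propositions \ref{Mor2} and \ref{Jit1} on top of Theorem \ref{enumSRIMxn+1}.
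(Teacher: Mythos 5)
Your proof is correct, and it takes a more direct route than the paper's. You evaluate the enumeration formula of Theorem \ref{enumSRIMxn+1} divisor by divisor, building a single membership table for $2d\in G_{(q,1)}$ over all $d=l_1^il_2^j$ via Propositions \ref{prop2.01}, \ref{Mor2} and \ref{Jit1}, and then read off all four cases by bookkeeping (including the correct subtraction of the doubly counted $d=1$ term in case (3)). The paper instead factors the argument through its structural results: cases (1) and (2) are deduced from Theorem \ref{charODDSRIMxn+1} together with Theorems \ref{prod1SRIMxn+1} and \ref{prod2SRIMxn+1}, case (3) from the set identities $\mathrm{SRIM}_{q,l_1^{r_1}l_2^{r_2}}(-1)=\mathrm{SRIM}_{q,l_1^{r_1}}(-1)\cup \mathrm{SRIM}_{q,l_2^{r_2}}(-1)$ and $\mathrm{SRIM}_{q,l_1^{r_1}}(-1)\cap \mathrm{SRIM}_{q,l_2^{r_2}}(-1)=\{x+1\}$ by inclusion--exclusion, and only case (4) from the explicit formula after Theorem \ref{prod3SRIMxn+1} guarantees that every irreducible factor is SRIM. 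The two arguments rest on the same underlying number theory, since the paper's intermediate theorems are themselves stated as direct consequences of Propositions \ref{Mor2} and \ref{Jit1}; what your version buys is a uniform, self-contained computation that does not require those intermediate theorems, at the cost of redoing the divisor analysis that the paper has already packaged, while the paper's version is shorter given its established toolkit and makes the set-theoretic structure (union and intersection of the two single-prime families) explicit. One cosmetic remark: your justification that the summands agree, namely $\phi(2d)=\phi(d)$ and $\mathrm{ord}_{2d}(q)=\mathrm{ord}_d(q)$ for odd $d$ and odd $q$, is a detail the paper leaves implicit but is worth stating as you did.
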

\begin{proof}   The first statement follows from Theorem  \ref{charODDSRIMxn+1} and Theorem \ref{prod1SRIMxn+1}.
   The second statement follows from   Theorem  \ref{charODDSRIMxn+1} and Theorem \ref{prod2SRIMxn+1}.
    
    Assume that $ s_1\ne s_2$ are positive. Based on Proposition \ref{Mor2}, Proposition \ref{Jit1} and Theorem  \ref{charODDSRIMxn+1},  we have   $\mathrm{SRIM}_{q,l_1^{r_1}l_2^{r_2}}(-1)=\mathrm{SRIM}_{q,l_1^{r_1}}(-1)\cup \mathrm{SRIM}_{q,l_2^{r_2}}(-1)$ and $ \mathrm{SRIM}_{q,l_1^{r_1}}(-1)\cap \mathrm{SRIM}_{q,l_2^{r_2}}(-1)=\{x+1\}$. Hence, 
   $ |\mathrm{SRIM}_{q,l_1^{r_1}l_2^{r_2}}(-1)|=|\mathrm{SRIM}_{q,l_1^{r_1}}(-1)|+|\mathrm{SRIM}_{q,l_2^{r_2}}(-1)|-1$.
    
    Assume that $ s_1=s_2\geq 1$.  From   Theorem  \ref{charODDSRIMxn+1} and Theorem \ref{prod3SRIMxn+1}, it follows that every monic irreducible factor of $x^n+1$ is SRIM. Hence, the enumeration formula in the fourth statement is obtained from Theorem \ref{enumSRIMxn+1}.
\end{proof}

 \subsection{Recursive Formulas for the Number  of SRIM Factors of $x^n+1$}
 
 In this subsection, we focus on recursive enumeration for SRIM factors of $x^n+1$ over $\mathbb{F}_{q}$. It can be given in terms of   the number   of SRIM factors of $x^{n'}-1 $   which is determined   \cite{BSG2016}, where $n'$ is the largest odd divisor of $n$.

\begin{theorem}[{\cite[Theorem 4.8]{BSG2016}}]
 \label{enumSRIMcyclic}
    Let $n'$ be an odd  positive integer relatively prime to $q$ and let $m$ be a non-negative integer.  Let $\nu$ be the positive integer such that $2^\nu||(q+1)$. Then  
    \begin{displaymath}
    |\mathrm{SRIM}_{q,2^mn'}(1)|=\left \{\begin{array}{ll}
    2|\mathrm{SRIM}_{q,n'}(1)|& \text{ if } m=1 \text{ or } m\geq 2 \text{ and }  \nu=1,\\ &\\
    2  |\mathrm{SRIM}_{q,n'}(1)| + (2^{\min\{m,\nu \}}-1)&(2|\mathrm{SRIM}_{q,n'}(1)|-|\mathrm{SRIM}_{q^2,n'}(1)| )\\
    & \text{ if } m\geq 2 \text{ and }  \nu\geq 2.
    \end{array}\right.
    \end{displaymath}
\end{theorem}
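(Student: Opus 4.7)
The plan is to start from the closed form of Theorem \ref{SRIMxn-1},
\[
|\mathrm{SRIM}_{q,2^m n'}(1)| = \sum_{\substack{e \mid 2^m n' \\ e \in G_{(q,1)}}} \frac{\phi(e)}{\mathrm{ord}_e(q)},
\]
and reorganize the sum by writing each divisor as $e = 2^i d$ with $0\leq i\leq m$ and $d\mid n'$ odd. The problem then reduces to understanding, for each fixed layer $i$, the set $\{d\mid n' : 2^i d \in G_{(q,1)}\}$ and the arithmetic of $\phi(2^i d)/\mathrm{ord}_{2^i d}(q)$ on that set.

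For $i = 0$ the layer sum is literally $|\mathrm{SRIM}_{q,n'}(1)|$. For $i = 1$, Propositions \ref{prop2.01} and \ref{Jit1} give $2d \in G_{(q,1)} \iff d \in G_{(q,1)}$, while $q$ odd yields $\phi(2d) = \phi(d)$ and $\mathrm{ord}_{2d}(q) = \mathrm{ord}_d(q)$, so layer $i = 1$ also contributes $|\mathrm{SRIM}_{q,n'}(1)|$. For $i \geq 2$, Proposition \ref{prop2m} forces $i \leq \nu$ and $d \in \{1\} \cup OG_{(q,1)}$; in this regime $q \equiv -1 \pmod{2^i}$ gives $\mathrm{ord}_{2^i}(q) = 2$, and for $d > 1$ with $d \in OG_{(q,1)}$ the 2-adic valuation $2\|\mathrm{ord}_d(q)$ gives $\mathrm{ord}_{2^i d}(q) = \mathrm{ord}_d(q)$ and $\phi(2^i d) = 2^{i-1}\phi(d)$. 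Summing over $2 \leq i \leq \min\{m,\nu\}$ produces a geometric series in $2^i$ times the single quantity $A := \sum_{1 < d\mid n',\, d\in OG_{(q,1)}} \phi(d)/\mathrm{ord}_d(q)$, plus a parallel geometric series from the $d = 1$ terms. The two-branch shape of the statement appears naturally: when $m = 1$ or $\nu = 1$, the range $i \geq 2$ is empty and only $2|\mathrm{SRIM}_{q,n'}(1)|$ survives.

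The final step is to re-express $A$ through the stated SRIM counts. The crucial observation is that, for $d > 1$ odd, Proposition \ref{Mor2} partitions $G_{(q,1)}$ by the common 2-adic valuation $s$ of the orders $\mathrm{ord}_p(q)$ over primes $p \mid d$: the case $s = 1$ is exactly $OG_{(q,1)}$, while the case $s \geq 2$ corresponds bijectively, via $\mathrm{ord}_p(q^2) = \mathrm{ord}_p(q)/2$, to $G_{(q^2,1)}$. This yields the disjoint decomposition
\[
G_{(q,1)} \setminus \{1\} = \bigl(OG_{(q,1)} \setminus \{1\}\bigr) \sqcup \bigl(G_{(q^2,1)} \setminus \{1\}\bigr),
\]
together with $\mathrm{ord}_d(q) = 2\,\mathrm{ord}_d(q^2)$ on the second piece. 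Writing out $|\mathrm{SRIM}_{q,n'}(1)|$ and $|\mathrm{SRIM}_{q^2,n'}(1)|$ as divisor sums and eliminating the $G_{(q^2,1)}$ contribution expresses $A$ linearly in these two counts, after which collecting the geometric series from the second step produces the stated closed form. I expect the main technical hurdle to be precisely this 2-adic bookkeeping: one must verify uniformly that the 2-adic valuation of $\mathrm{ord}_{p^k}(q)$ matches that of $\mathrm{ord}_p(q)$ for $k \geq 2$ (so the prime-level partition lifts to the divisor level and $\mathrm{ord}_d(q^2)=\mathrm{ord}_d(q)/2$ is legitimate on the whole second piece), and to track the $d = 1$ boundary terms carefully when merging the two branches of the case distinction.
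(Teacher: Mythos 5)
Your strategy is the natural one --- note the paper itself gives no proof of this theorem, it is simply imported from \cite{BSG2016} --- and every structural step you describe is sound: the layer split $e=2^id$ of the divisor sum from Theorem \ref{SRIMxn-1}, the use of Propositions \ref{prop2.01}, \ref{Jit1} and \ref{prop2m} to decide which layers survive, the evaluations $\mathrm{ord}_{2^i}(q)=2$ and $\mathrm{ord}_{2^id}(q)=\mathrm{ord}_d(q)$ for $d\in OG_{(q,1)}$, and the partition $G_{(q,1)}\setminus\{1\}=(OG_{(q,1)}\setminus\{1\})\sqcup(G_{(q^2,1)}\setminus\{1\})$ with $\mathrm{ord}_d(q)=2\,\mathrm{ord}_d(q^2)$ on the second piece. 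The problem is your final assertion: the computation does \emph{not} produce the stated closed form. Writing $B:=|\mathrm{SRIM}_{q,n'}(1)|$, $C:=|\mathrm{SRIM}_{q^2,n'}(1)|$ and $A:=\sum_{1<d\mid n',\,d\in OG_{(q,1)}}\phi(d)/\mathrm{ord}_d(q)$, layers $i=0,1$ each give $B$, and layer $i$ with $2\le i\le M:=\min\{m,\nu\}$ contributes $2^{i-2}+2^{i-1}A$; summing gives $(2^{M-1}-1)+(2^{M}-2)A=(2^{M-1}-1)(1+2A)=(2^{M-1}-1)(2B-C)$, because your own partition yields $2B-C=1+2A$. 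So the coefficient that actually comes out is $2^{\min\{m,\nu\}-1}-1$, not $2^{\min\{m,\nu\}}-1$.

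In fact the statement as displayed is false, so no proof of it can close: take $q=3$, $n'=1$, $m=2$ (so $\nu=2$); then $x^4-1=(x-1)(x+1)(x^2+1)$ over $\mathbb{F}_3$ has exactly $3$ monic irreducible factors, all SRIM, whereas the displayed formula gives $2\cdot 1+(2^2-1)(2\cdot 1-1)=5$. Your method, carried out faithfully, proves the corrected identity $|\mathrm{SRIM}_{q,2^mn'}(1)|=2B+(2^{\min\{m,\nu\}-1}-1)(2B-C)$ for $m\ge 1$ (the coefficient vanishes when $m=1$ or $\nu=1$, so this subsumes the first branch). You should state and prove that version and flag the discrepancy with the quoted theorem rather than asserting agreement; be aware that the discrepancy propagates into Cases 2 and 3 of Theorem \ref{enumSRIM}, whose constants would need the same halving.
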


A recursive formula for the number of  SRIM factors of $x^n+1$ over $\mathbb{F}_{q}$ is given as follows.
\begin{theorem} \label{enumSRIM}
    Let $n'$ be an odd  positive integer relatively prime to $q$ and let $m$ be a non-negative integer. Let $\nu$ be the positive integer such that $2^\nu||(q+1)$. Then
    \begin{align*}
    |\mathrm{SRIM}_{q,2^{m}n'}(-1)|= \begin{cases}
     |\mathrm{SRIM}_{q,n'}(1)| & \text{if   } m=0\\ 
    3(2|\mathrm{SRIM}_{q,n'}(1)|-|\mathrm{SRIM}_{q^2,n'}(1)| )& \text{if }   m= 1 \text{ and } \nu\geq 2\\
       2^m  (2|\mathrm{SRIM}_{q,n'}(1)|-|\mathrm{SRIM}_{q^2,n'}(1)| )& \text{if }   2\leq m<\nu \text{ and } \nu\geq 2\\
     0& \text{if }   m\geq \nu \\
    \end{cases}
    \end{align*} 
 
\end{theorem}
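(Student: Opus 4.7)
The plan is to reduce to the known enumeration of SRIM factors of $x^n-1$ via Lemma~\ref{lem2} and then insert the formulas from Theorem~\ref{enumSRIMcyclic} with the appropriate indices, handling the four cases separately. Writing $n=2^m n'$, the identity to exploit is
\begin{equation*}
|\mathrm{SRIM}_{q,2^m n'}(-1)|=|\mathrm{SRIM}_{q,2^{m+1}n'}(1)|-|\mathrm{SRIM}_{q,2^m n'}(1)|,
\end{equation*}
so the right-hand side depends only on the cyclic counts at $2^m n'$ and $2^{m+1} n'$. Each of these can be expanded using the piecewise formula in Theorem~\ref{enumSRIMcyclic}, whose branches depend on the sizes of $m$ and $m+1$ relative to $\nu$.

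First I would treat $m=0$: applying Theorem~\ref{enumSRIMcyclic} with $m=1$ gives $|\mathrm{SRIM}_{q,2n'}(1)|=2|\mathrm{SRIM}_{q,n'}(1)|$, and subtracting $|\mathrm{SRIM}_{q,n'}(1)|$ leaves $|\mathrm{SRIM}_{q,n'}(1)|$. Next, for $m=1$ and $\nu\geq 2$, we have $m+1=2$ with $\min\{2,\nu\}=2$, so Theorem~\ref{enumSRIMcyclic} yields
\begin{equation*}
|\mathrm{SRIM}_{q,4n'}(1)|=2|\mathrm{SRIM}_{q,n'}(1)|+3\bigl(2|\mathrm{SRIM}_{q,n'}(1)|-|\mathrm{SRIM}_{q^2,n'}(1)|\bigr),
\end{equation*}
while $|\mathrm{SRIM}_{q,2n'}(1)|=2|\mathrm{SRIM}_{q,n'}(1)|$; the difference is the claimed $3(2|\mathrm{SRIM}_{q,n'}(1)|-|\mathrm{SRIM}_{q^2,n'}(1)|)$.

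For the generic range $2\leq m<\nu$, both $m$ and $m+1$ lie in the regime $m\geq 2,\nu\geq 2$ of Theorem~\ref{enumSRIMcyclic}, with $\min\{m,\nu\}=m$ and $\min\{m+1,\nu\}=m+1$. Subtracting the two expressions telescopes the common $2|\mathrm{SRIM}_{q,n'}(1)|$ term and collapses the coefficient to $(2^{m+1}-1)-(2^m-1)=2^m$, giving the claimed $2^m\bigl(2|\mathrm{SRIM}_{q,n'}(1)|-|\mathrm{SRIM}_{q^2,n'}(1)|\bigr)$. Finally, the case $m\geq\nu$ is already established in Theorem~\ref{enumSRIMxn+1}, so it can simply be cited.

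The main obstacle is bookkeeping rather than substance: one must be careful that the boundary case $m=1$ is separated out (since Theorem~\ref{enumSRIMcyclic} uses a distinct branch for $m=1$), and that $m+1$ still satisfies $m+1\leq\nu$ in the generic case so that $\min\{m+1,\nu\}=m+1$ holds. Once the case split is aligned with the piecewise definition in Theorem~\ref{enumSRIMcyclic}, the result follows by direct algebraic subtraction, with no further combinatorial input required.
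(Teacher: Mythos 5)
Your proposal is correct and follows essentially the same route as the paper: both use Lemma \ref{lem2} to write $|\mathrm{SRIM}_{q,2^m n'}(-1)|$ as the difference $|\mathrm{SRIM}_{q,2^{m+1}n'}(1)|-|\mathrm{SRIM}_{q,2^m n'}(1)|$ and then substitute the piecewise formula of Theorem \ref{enumSRIMcyclic} in the same four-case split. The only cosmetic difference is that for $m\geq\nu$ you cite Theorem \ref{enumSRIMxn+1} while the paper carries out the subtraction explicitly; both are valid.
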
                                                                     \begin{proof}
From Lemma \ref{lem2}, we have
\begin{align} \label{eq-enumSRIM}
|\mathrm{SRIM}_{q,2^{m}n'}(-1)|=|\mathrm{SRIM}_{q,2^{m+1}n'}(1)|-|\mathrm{SRIM}_{q,2^{m}n'}(1)|.
\end{align}
Based on the  number $|\mathrm{SRIM}_{q,2^{m}n'}(1)|$ given Theorem \ref{enumSRIMcyclic},  it suffices to consider  the following $4$ cases.

\noindent {\bf Case 1:} $m=0$. From \eqref{eq-enumSRIM}, we have 
\begin{align*}
|\mathrm{SRIM}_{q,n'}(-1)|
&=|\mathrm{SRIM}_{q,2n'}(1)|-|\mathrm{SRIM}_{q,n'}(1)|\\
&=2|\mathrm{SRIM}_{q,n'}(1)|-|\mathrm{SRIM}_{q,n'}(1)|\\
&=|\mathrm{SRIM}_{q,n'}(1)|.
\end{align*}

\noindent {\bf Case 2:} $m=1$ and $\nu\geq 2$.
From \eqref{eq-enumSRIM}, we have 
\begin{align*}  
|\mathrm{SRIM}_{q,2n'}(-1)|
&=|\mathrm{SRIM}_{q,2^{2}n'}(1)|-|\mathrm{SRIM}_{q,2n'}(1)|\\
&=2  |\mathrm{SRIM}_{q,n'}(1)| + (2^{\min\{2,\nu \}}-1)(2|\mathrm{SRIM}_{q,n'}(1)|-|\mathrm{SRIM}_{q^2,n'}(1)| )\\
&~~~~-2|\mathrm{SRIM}_{q,n'}(1)|\\
&= 3(2|\mathrm{SRIM}_{q,n'}(1)|-|\mathrm{SRIM}_{q^2,n'}(1)| ).
\end{align*}

\noindent {\bf Case 3:} $2\leq m<\nu$ and $\nu\geq 2$.
From \eqref{eq-enumSRIM}, we have 
\begin{align*}  
|\mathrm{SRIM}_{q,2^{m}n'}(-1)|&=|\mathrm{SRIM}_{q,2^{m+1}n'}(1)|-|\mathrm{SRIM}_{q,2^{m}n'}(1)|\\
&=\left(  2  |\mathrm{SRIM}_{q,n'}(1)| + (2^{\min\{m+1,\nu \}}-1)(2|\mathrm{SRIM}_{q,n'}(1)|-|\mathrm{SRIM}_{q^2,n'}(1)| )\right)\\
&~~~-\left(  2  |\mathrm{SRIM}_{q,n'}(1)| + (2^{\min\{m,\nu \}}-1)(2|\mathrm{SRIM}_{q,n'}(1)|-|\mathrm{SRIM}_{q^2,n'}(1)| )\right)\\
&=(2^{\min\{m+1,\nu \}}-2^{\min\{m,\nu \}})(2|\mathrm{SRIM}_{q,n'}(1)|-|\mathrm{SRIM}_{q^2,n'}(1)| )\\
&=(2^{m+1}-2^{m})(2|\mathrm{SRIM}_{q,n'}(1)|-|\mathrm{SRIM}_{q^2,n'}(1)| )\\
&=2^{m}(2|\mathrm{SRIM}_{q,n'}(1)|-|\mathrm{SRIM}_{q^2,n'}(1)| ).
\end{align*}

\noindent {\bf Case 4:} $m\geq \nu$.
From \eqref{eq-enumSRIM}, we have 
\begin{align*}  
|\mathrm{SRIM}_{q,2^{m}n'}(-1)|
&=|\mathrm{SRIM}_{q,2^{m+1}n'}(1)|-|\mathrm{SRIM}_{q,2^{m}n'}(1)|\\
&=\begin{cases}
2|\mathrm{SRIM}_{q,n'}(1)|- 2|\mathrm{SRIM}_{q,n'}(1)| ~~~~~\text{ if } \nu=1,\\
&\\
\left(  2  |\mathrm{SRIM}_{q,n'}(1)| + (2^{\min\{m+1,\nu \}}-1)(2|\mathrm{SRIM}_{q,n'}(1)|-|\mathrm{SRIM}_{q^2,n'}(1)| )\right)\\
-\left(  2  |\mathrm{SRIM}_{q,n'}(1)| + (2^{\min\{m,\nu \}}-1)(2|\mathrm{SRIM}_{q,n'}(1)|-|\mathrm{SRIM}_{q^2,n'}(1)| )\right) \\
~~~~~~~~~~~~~~~~~~~~~~~~~~~~~~~~~~~~~~~~~~~~~~\text{ if } \nu\ge 2,
\end{cases}\\
&=\begin{cases}
 0 &\text{ if } \nu=1,\\
  (2^\nu-2^\nu)(2|\mathrm{SRIM}_{q,n'}(1)|-|\mathrm{SRIM}_{q^2,n'}(1)| )
&\text{ if } \nu\ge 2.
\end{cases}\\&=0.
\end{align*}
The proof is completed. 
\end{proof}

\section{SCRIM Factors of $x^n+1$ over $\mathbb{F}_{q^2}$}
 In this section, characterization  and enumeration of  SCRIM factors of $x^n+1$ over $\mathbb{F}_{q^2}$ are  given  for all odd prime powers $q$  and   positive integers $n$  such that $\mathrm{gcd}(n,q)=1$.

 \subsection{Characterization  of SCRIM Factors of $x^n+1$}

 For each  integer  $m\geq 0$   and  odd positive integer $n'$, let $\alpha$ be a primitive $2^{m+1}n'$th root of unity.  For each $0\leq i < 2^{m+1}n'$, let \[f_i(x)= \prod\limits_{j\in Cl_{q^2, 2^{m+1}n'} (i)}(x-\alpha^j).\]
 
 \begin{lemma}[{\cite[Lemma 3.2]{BJU2015} and  \cite[Lemma 3.5]{JLS2014}}] \label{charfSCRIM}
     Let $m\geq 0$ be an integer and let  $n'$ be an odd positive integer.    Let $0\leq i < 2^{m+1}n'$.  Then  the following statements are equivalent.
     \begin{enumerate}
         \item  $f_{i}(x)  $  is SCRIM. 
         \item  $Cl_{q^2, 2^{m+1}n'} (i)=Cl_{q^2, 2^{m+1}n'} (-qi)$.
         \item  $ {\rm o}_{2^{m+1}n'}^+(i)\in OG_{(q,1)}$.
     \end{enumerate}
 \end{lemma}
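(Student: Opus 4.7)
The plan is to follow the template of Lemma \ref{charSRIM}, adapted to the Hermitian (conjugate-reciprocal) setting: first establish (1) $\Leftrightarrow$ (2) by tracking roots under the composition of the reciprocal and Frobenius conjugation, and then (2) $\Leftrightarrow$ (3) by translating the coset equality into a divisibility condition that matches the definition of $OG_{(q,1)}$.

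For (1) $\Leftrightarrow$ (2), I would write $C_i := Cl_{q^2, 2^{m+1}n'}(i)$, so that $f_i(x) = \prod_{j \in C_i}(x - \alpha^j)$. The reciprocal $f_i^*(x)$ is monic with root set $\{\alpha^{-j} : j \in C_i\}$. Since conjugation of a polynomial in $\mathbb{F}_{q^2}[x]$ amounts to raising every coefficient to the $q$-th power, the identity $\overline{g(\beta)} = \overline{g}(\beta^q)$ (applied to $g = f_i^*$) shows that Frobenius sends roots of $g$ to roots of $\overline{g}$. Hence $f_i^\dagger(x) = \overline{f_i^*(x)}$ has root set $\{\alpha^{-qj} : j \in C_i\}$, whose exponent set modulo $2^{m+1}n'$ is precisely $Cl_{q^2, 2^{m+1}n'}(-qi)$, because multiplying a $q^2$-cyclotomic coset by the unit $-q$ produces another $q^2$-coset of the same cardinality. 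Since both $f_i$ and $f_i^\dagger$ are monic of the same degree, $f_i = f_i^\dagger$ is equivalent to $C_i = Cl_{q^2, 2^{m+1}n'}(-qi)$.

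For (2) $\Leftrightarrow$ (3), set $N = 2^{m+1}n'$ and $d := {\rm o}_N^+(i) = N/\gcd(i,N)$. Condition (2) asserts the existence of $k \geq 0$ with $iq^{2k} \equiv -qi \pmod{N}$; since $\gcd(q,N)=1$, cancelling $q$ yields $i(q^{2k-1}+1) \equiv 0 \pmod{N}$ for some $k \geq 1$. This is equivalent to $d \mid q^r + 1$ for some odd positive integer $r = 2k-1$, which is precisely the definition of $d \in OG_{(q,1)}$.

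The main obstacle will be Step 1: carefully pinning down how the composition of reciprocal and Frobenius acts on the root set so as to produce the exponent $-qi$ (in contrast to the $-i$ obtained in the SRIM case of Lemma \ref{charSRIM}), and verifying that the resulting exponent set is itself a $q^2$-cyclotomic coset. Once this is correctly set up, Step 2 is a routine manipulation of congruences combined with the definition of the oddly-good set $OG_{(q,1)}$.
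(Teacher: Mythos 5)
Your proof is correct, and since the paper gives no proof of this lemma (it is quoted from \cite[Lemma 3.2]{BJU2015} and \cite[Lemma 3.5]{JLS2014}), your root-set argument is exactly the standard one behind the citation and is the conjugate-reciprocal analogue of the likewise-omitted proof of Lemma~\ref{charSRIM}. The only point worth polishing is the passage from $k\ge 0$ to $k\ge 1$ when cancelling $q$ in Step~2; this follows from the periodicity of $k\mapsto iq^{2k}\ (\mathrm{mod}\ 2^{m+1}n')$, so that $-qi\in Cl_{q^2,2^{m+1}n'}(i)$ can always be witnessed by some exponent $2k$ with $k\ge 1$.
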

 
 From Lemma \ref{div-1} and Lemma \ref{charEiterSCRIM}, 
 \begin{corollary} \label{charfSCRIMxn+1}
     Let $m\geq 0$ be an integer and let  $n'$ be an odd positive integer. Let  $0\leq i < 2^{m+1}n'$.  Then  $f_{i}(x)  $  is a SCRIM  factor of   $(x^{2^mn'}+1)$  if and only if 
     $2^{m+1}|  {\rm o}_{2^{m+1}n'}^+(i)$  and $ {\rm o}_{2^{m+1}n'}^+(i)\in OG_{(q,1)}$.
 \end{corollary}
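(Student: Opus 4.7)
The plan is to obtain the corollary by assembling the two immediately preceding lemmas. The statement splits naturally into two conditions on the exponent $i$: one that ensures $f_i(x)$ divides $x^{2^m n'}+1$ (as opposed to $x^{2^m n'}-1$), and one that ensures $f_i(x)$ is self-conjugate-reciprocal. Lemma \ref{charfSCRIM} already supplies the second condition in the form ${\rm o}^+_{2^{m+1}n'}(i)\in OG_{(q,1)}$. It remains to reinterpret Lemma \ref{div-1} for the Section 4 setting and combine.

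The first step is to observe that Lemma \ref{div-1}, although stated in Section 3 with $f_i(x)$ defined via $q$-cyclotomic cosets, transfers without change to the present situation where $f_i(x)$ is defined via $q^2$-cyclotomic cosets. Indeed, conditions (2) and (3) of that lemma are purely numerical statements about $i$ and do not depend on $q$ at all, so only (1)$\iff$(2) needs to be re-examined. The proof given for (2)$\Rightarrow$(1) uses only that $(\alpha^{t})^{2^m n'} = ((\alpha^i)^{2^m n'})^{q^{2j}} = (-1)^{q^{2j}} = -1$ for each $t \in Cl_{q^2, 2^{m+1}n'}(i)$, which holds since $q$ is odd (so $q^{2j}$ is odd); the direction (1)$\Rightarrow$(2) is a statement about $\alpha^i$ being a root of $x^{2^m n'}+1$ and is independent of which cyclotomic cosets are used to group the roots. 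Hence, in the present setting, $f_i(x)\mid (x^{2^m n'}+1)$ if and only if $2^{m+1}\mid {\rm o}^+_{2^{m+1}n'}(i)$.

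Combining the two characterizations then yields the claim: $f_i(x)$ is a SCRIM factor of $x^{2^m n'}+1$ precisely when $f_i(x)$ is SCRIM (equivalently ${\rm o}^+_{2^{m+1}n'}(i)\in OG_{(q,1)}$ by Lemma \ref{charfSCRIM}) and $f_i(x)\mid (x^{2^m n'}+1)$ (equivalently $2^{m+1}\mid{\rm o}^+_{2^{m+1}n'}(i)$ by the preceding step). There is no real obstacle here; the only point requiring care is the bookkeeping that Lemma \ref{div-1} genuinely applies after switching from $q$-cosets to $q^2$-cosets, and this transfer is immediate because the argument depends only on the oddness of the Frobenius exponent and the order of $\alpha$, both of which are preserved.
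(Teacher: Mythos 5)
Your proposal is correct and matches the paper's (unwritten) argument exactly: the corollary is stated there as an immediate consequence of Lemma \ref{div-1} and Lemma \ref{charfSCRIM}, which is precisely the combination you carry out. Your extra care in checking that Lemma \ref{div-1} transfers from $q$-cyclotomic cosets to $q^2$-cyclotomic cosets (because $q^2$ is still odd and conditions (2) and (3) are purely arithmetic) addresses a point the paper silently glosses over, and is a welcome addition rather than a deviation.
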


 Based on  the   discussion above, the set of  SCRIM factors of $x^n+1$ over $\mathbb{F}_{q^2}$  is 
 \begin{align}\label{uTdC}
 \mathrm{SCRIM}_{q^2,2^mn'}(-1)=\bigcup\limits_{ d|n', 2^{m+1}d\in OG_{(q,1)} } \mathcal{T}_d,
 \end{align}   
 where 
 \begin{align}\label{star2} 
     \mathcal{T}_d=\{ \prod\limits_{j\in Cl_{q^2,n}(i)}{(x-\alpha^j)} \mid  0\leq i<2^{m+1}n' \text{ and  } {\rm o}_{2^{m+1}n'}^+(i)=2^{m+1} d\}
 \end{align} and 
 $\alpha$ is a primitive $2^{m+1}n'$th root of unity. 
 
 The enumeration of SCRIM factors of $x^{2^m n'}+1$ over $\mathbb{F}_{q^2}$ is  given in the following theorem. 
 
 \begin{theorem} \label{enumSCRIMxn+1} Let $m\geq 0$ be an integer and let  $n'$ be an odd positive integer. Then the  number of SCRIM factors of $x^{2^m n'}+1$ over $\mathbb{F}_{q^2}$ is
     \[\displaystyle  |\mathrm{SCRIM}_{q^2,2^mn'}(-1)|=\sum_{d\mid n',2^{m+1}d\in OG_{(q,1)}}\dfrac{\phi(2^{m+1}d)}{\mathrm{ord}_{2^{m+1}d}(q^2)},\]  
 where $\phi$ is the Euler's totient function.
  
  If $m\geq \nu$, then $|\mathrm{SCRIM}_{q,2^mn'}(-1)|=0$. In particular, if $q\equiv 3 ({\rm mod }\, 4)$,  then $|\mathrm{SRIM}_{q,2^mn'}(-1)|=0$ for all $m\geq 1$.
 \end{theorem}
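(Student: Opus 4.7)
The plan is to mirror the proof of Theorem \ref{enumSRIMxn+1}, using the SCRIM analogues already set up in Lemma \ref{charfSCRIM}, Corollary \ref{charfSCRIMxn+1}, and the disjoint-union description in \eqref{uTdC}--\eqref{star2}. The SCRIM version is cleaner because the degree of each $f_i(x)$ over $\mathbb{F}_{q^2}$ is $|Cl_{q^2,2^{m+1}n'}(i)|=\mathrm{ord}_{\mathrm{o}^+_{2^{m+1}n'}(i)}(q^2)$, and $\mathrm{o}^+_{2^{m+1}n'}(i)$ is constant on each cyclotomic coset (by the analogue of Lemma \ref{srim-parity}).

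For the main formula, I would fix a divisor $d$ of $n'$ with $2^{m+1}d\in OG_{(q,1)}$ and count the elements of $\mathcal{T}_d$. The number of $i\in\{0,1,\dots,2^{m+1}n'-1\}$ with $\mathrm{o}^+_{2^{m+1}n'}(i)=2^{m+1}d$ equals $\phi(2^{m+1}d)$, and these $i$ partition into $q^2$-cyclotomic cosets of size $\mathrm{ord}_{2^{m+1}d}(q^2)$, yielding $|\mathcal{T}_d|=\phi(2^{m+1}d)/\mathrm{ord}_{2^{m+1}d}(q^2)$. Summing over all admissible $d$ by \eqref{uTdC} gives the stated formula, provided the union is disjoint: distinct $d$'s yield cosets of distinct additive orders, hence disjoint $\mathcal{T}_d$.

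For the vanishing claims, since $q$ is odd we have $\nu\geq 1$. Suppose $m\geq \nu$. Write $d'=2^{m+1}d$ with $d\mid n'$. If $d=1$ then $d'=2^{m+1}$, and by Proposition \ref{prop2.01} membership in $OG_{(q,1)}$ forces $2^{m+1}\mid(q+1)$, contradicting $2^\nu\|(q+1)$. If $d>1$ then $m+1\geq \nu+1\geq 2$, so Proposition \ref{prop2m} applies and again requires $2^{m+1}\mid(q+1)$, the same contradiction. Hence no $d$ contributes to the sum and $|\mathrm{SCRIM}_{q^2,2^mn'}(-1)|=0$. The specialisation $q\equiv 3\pmod 4$ corresponds to $\nu=1$, so the condition $m\geq 1$ already forces vanishing.

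I do not expect a serious obstacle: the enumeration is a direct count once the characterisation of SCRIM factors via oddly-good integers is in hand, and the vanishing argument is a straightforward application of Propositions \ref{prop2.01} and \ref{prop2m}. The only point requiring slight care is the $d=1$ versus $d>1$ split in the vanishing argument, because Proposition \ref{prop2m} is stated for pairwise coprime odd integers with $d>1$, so the case $d=1$ must be handled separately via Proposition \ref{prop2.01}.
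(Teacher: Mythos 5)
Your proposal is correct and follows essentially the same route as the paper: count $|\mathcal{T}_d|=\phi(2^{m+1}d)/\mathrm{ord}_{2^{m+1}d}(q^2)$ for each admissible divisor $d$ of $n'$, sum via the disjoint union \eqref{uTdC}, and derive the vanishing for $m\geq\nu$ from the oddly-good integer criteria. Your explicit split between $d=1$ (Proposition \ref{prop2.01}) and $d>1$ (Proposition \ref{prop2m}) is in fact slightly more careful than the paper, which cites only Proposition \ref{prop2m} even though that proposition assumes $d>1$.
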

 \begin{proof}
    Let $d$ be a divisor of $n'$. Then the number of integers satisfied the conditions $0\leq i<2^{m+1}n'$  and  $ {\rm o}_{2^{m+1}n'}^+(i)=2^{m+1} d$ in \eqref{star2} is  ${\phi(2^{m+1}d)}$.  It is not difficult to see that each polynomial in $\mathcal{T}_d$ has degree  $\mathrm{ord}_{2^{m+1}d}(q^2)$.  Hence, $|\mathcal{T}_d|= \dfrac{\phi(2^{m+1}d)}{\mathrm{ord}_{2^{m+1}d}(q^2)}$.
    By \eqref{uTdC}, the desired number $|\mathrm{SCRIM}_{q^2,2^mn'}(-1)|$ is the summation of $|\mathcal{T}_d|$ for all divisors $d$ of $n'$ such that  $2^{m+1}d\in OG_{(q,1)}$.   
    
    Since $q$ is odd, we have that $\nu\geq 1$. Note that if $m\geq \nu$, we have  $2^{m+1}d \notin OG_{(q,1)}$ for all divisors $d$ of $n'$ by Proposition \ref{prop2m}.  Hence, $|\mathrm{SCRIM}_{q^2,2^mn'}(-1)|=0$.  Moreover, if $q\equiv 3 ({\rm mod }\, 4)$, we have $\nu=1$ which implies that    $2^{m+1}d \notin OG_{(q,1)}$  and  $|\mathrm{SCRIM}_{q^2,2^mn'}(-1)|=0$ for all $m\geq 1$.
 \end{proof}

 For an odd positive integer $n$, we observe that $x+1$ is always a SCRIM factor of $x^n+1$ over $\mathbb{F}_{q^2}$. 
Here, we focus on the characterization of  the two extreme cases  where $x+1$ is the only SCRIM factor of $x^n+1$ and where all the irreducible factors of $x^n+1$ are SCRIM.

 \begin{theorem} \label{charOddSCRIMxn+1}
     Let $n$ be an odd   integer coprime to $q$. Then the following statements hold.
     \begin{enumerate} \item  The following statements are equivalent. 
     \begin{enumerate}[$1)$]
         \item  Every  monic irreducible factor of $x^{n}+1$ over $\mathbb{F}_{q^2}$ is  SCRIM. 
         \item  For each  prime divisor $l$  of $n$,  every monic irreducible factor of $x^{l}+1$ over $\mathbb{F}_{q^2}$ is  SCRIM. 
         \item      $2|| \mathrm{ord}_l{(q)}$ for every prime divisor $l$ of $n$. Equivalently, $n  \in OG_{(q,1)}$. 
     \end{enumerate} 
 \item  The following statements are equivalent. 
 \begin{enumerate}[$1)$]
     \item  The polynomial $x+1$ is only the  SCRIM  factor of $x^{n}+1$ over $\mathbb{F}_{q^2}$. 
     \item  For each  prime divisor $l$  of $n$,  $x+1$ is only the  SCRIM  factor of $x^{l}+1$ over $\mathbb{F}_{q^2}$.
     \item    For each  prime divisor $l$ of $n$, $\mathrm{ord}_l{(q)}$ is odd or $4| \mathrm{ord}_l{(q)}$. Equivalently, $n  \notin OG_{(q,1)}$. 
 \end{enumerate} 
 \end{enumerate}
 \end{theorem}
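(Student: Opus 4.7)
The plan is to mirror the proof strategy of Theorem~\ref{charODDSRIMxn+1}, replacing the good-integer tools (Propositions~\ref{Mor2} and~\ref{Jit1}) with their oddly-good analogues (Propositions~\ref{oddly-good} and~\ref{Jit2}). The backbone is Corollary~\ref{charfSCRIMxn+1} specialized to $m=0$: for odd $n$ coprime to $q$, a monic irreducible factor $f_i(x)$ of $x^n+1$ over $\mathbb{F}_{q^2}$ is SCRIM if and only if $\mathrm{o}_{2n}^+(i)\in OG_{(q,1)}$. Moreover, by Lemma~\ref{div-1}, as $i$ ranges over the odd residues modulo $2n$, the quantity $\mathrm{o}_{2n}^+(i)=2n/\gcd(i,n)$ runs through exactly $\{2d:d\mid n\}$, and every such value is attained.

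For Part~(1), the crux is the equivalence of the first and third statements. By the backbone, every monic irreducible factor of $x^n+1$ is SCRIM if and only if $2d\in OG_{(q,1)}$ for every divisor $d$ of $n$. Proposition~\ref{Jit2} converts this to $d\in OG_{(q,1)}$ for every divisor $d$ of $n$, and Proposition~\ref{oddly-good} converts it further to $2\|\mathrm{ord}_l(q)$ for every prime divisor $l$ of $n$, which is equivalent to $n\in OG_{(q,1)}$. The equivalence with the second statement is then obtained by specializing the just-proved equivalence to $n=l$: every monic irreducible factor of $x^l+1$ is SCRIM iff $2\|\mathrm{ord}_l(q)$, and quantifying this over the prime divisors of $n$ recovers the third statement.

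For Part~(2), I would proceed dually. ``$x+1$ is the unique SCRIM factor of $x^n+1$'' translates via the backbone to $2d\notin OG_{(q,1)}$ for every divisor $d>1$ of $n$. Proposition~\ref{Jit2} turns this into $d\notin OG_{(q,1)}$ for every $d>1$ dividing $n$; Proposition~\ref{oddly-good} then shows this is equivalent to requiring that no prime $l\mid n$ satisfies $l\in OG_{(q,1)}$, i.e., that $\mathrm{ord}_l(q)$ is odd or divisible by $4$ for each prime $l\mid n$. The specialization to $n=l$ and quantification over primes dividing $n$ handles the equivalence with the second statement just as in Part~(1).

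The only real obstacle is a mild bookkeeping point: Propositions~\ref{oddly-good} and~\ref{Jit2} are formally stated for pairwise coprime odd integers $a,b,d>1$, whereas here $b=1$. I would invoke the same implicit extension already used in the proof of Theorem~\ref{charODDSRIMxn+1} (where Proposition~\ref{Jit1} is applied with $b=1$). The boundary case $d=1$ is trivial since $x+1$ is always SCRIM and so contributes no appeal to these propositions.
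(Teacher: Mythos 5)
Your proposal is correct and follows essentially the same route as the paper: both arguments reduce ``SCRIM'' to the condition $2d\in OG_{(q,1)}$ via Corollary~\ref{charfSCRIMxn+1}/Theorem~\ref{enumSCRIMxn+1} and then shuttle between $2d$, $d$, and the prime divisors of $n$ using Propositions~\ref{Jit2} and~\ref{oddly-good}. The only cosmetic difference is that the paper proves $1)\Rightarrow 2)$ directly from the divisibility $(x^l+1)\mid(x^n+1)$ and then closes the cycle $2)\Rightarrow 3)\Rightarrow 1)$, whereas you establish $1)\Leftrightarrow 3)$ first and recover $2)$ by specializing to $n=l$; both are valid and use the same ingredients.
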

 \begin{proof} 
    First, we prove (1).
    Since $(x^l+1)|(x^n+1)$ for all    prime divisors $l$  of $n$, the implication  1)$\Rightarrow$2)  follows.  
     
     Assume that   every monic irreducible factor of $x^{l}+1$ over $\mathbb{F}_{q^2}$ is  SCRIM for  all    prime divisors $l$  of $n$. Then $2l\in OG_{(q,1)}$ by Theorem \ref{enumSCRIMxn+1}. By Proposition \ref{Jit2}, we have $l\in OG_{(q,1)}$.  Hence, $2|| \mathrm{ord}_l{(q)}$  by Proposition \ref{oddly-good}. This proves the implication  2)$\Rightarrow$3).

     Assume that    $2|| \mathrm{ord}_l{(q)}$ for every prime divisor $l$ of $n$. Then by Proposition \ref{oddly-good},  $l\in OG_{(q,1)}$.  It follow hat $2l\in OG_{(q,1)}$ by Proposition \ref{Jit2}. From the formula in Theorem \ref{enumSCRIMxn+1},  it means that every  monic irreducible factor of $x^{n}+1$ over $\mathbb{F}_{q^2}$ is  SCRIM. The implication  1)$\Rightarrow$2)  is proved.

     Next,  we prove (2).     Since $(x^l+1)|(x^n+1)$ and $(x+1)|(x^l+1)$ for all prime divisors $l$  of $n$, the implication  1)$\Rightarrow$2)  follows.

   Assume that    $x+1$ is the only SCRIM factor of $x^{l}+1$ over $\mathbb{F}_{q^2}$   for  all    prime divisors $l$  of $n$. Then $2l\notin OG_{(q,1)}$ by Theorem \ref{enumSCRIMxn+1}. By Proposition \ref{Jit2}, we have $l\notin OG_{(q,1)}$.  Hence, $\mathrm{ord}_l{(q)}$ is odd or $4| \mathrm{ord}_l{(q)}$ by  Proposition \ref{oddly-good}.

  Assume that    $\mathrm{ord}_l{(q)}$ is odd or $4| \mathrm{ord}_l{(q)}$  for every prime divisor $l$ of $n$. Then by Proposition \ref{oddly-good},  $l\notin OG_{(q,1)}$.  It follow hat $2l\notin OG_{(q,1)}$ by Proposition \ref{Jit2}. From the formula in Theorem \ref{enumSCRIMxn+1},  it means that  $x+1$ is the only SCRIM  factor of $x^{n}+1$ over $\mathbb{F}_{q^2}$.
  
 \end{proof}

 \begin{proposition}  \label{charEiterSCRIM}
     Let $l$ be an odd prime  such that $l\nmid q$ and let $s$ be a positive integer. Then either  every  monic irreducible factor of $x^{l^s}+1$ over $\mathbb{F}_{q^2}$ is  SCRIM or  $x+1$ is the only SCRIM factor of $x^{l^s}+1$ over $\mathbb{F}_{q^2}$.
 \end{proposition}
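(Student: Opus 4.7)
The plan is to mirror the proof of Proposition \ref{eitherSRIMxn+1}, replacing the good-integer toolkit (Proposition \ref{Mor2} and Proposition \ref{Jit1}) with its oddly-good analogue (Proposition \ref{oddly-good} and Proposition \ref{Jit2}). The key structural observation is that, by Corollary \ref{charfSCRIMxn+1} applied with $n' = l^s$ and $m = 0$, a monic irreducible factor $f_i(x)$ of $x^{l^s}+1$ over $\mathbb{F}_{q^2}$ is SCRIM if and only if the additive order ${\rm o}_{2l^s}^+(i)$ lies in $OG_{(q,1)}$ (the divisibility condition $2\mid {\rm o}_{2l^s}^+(i)$ is automatic for the factors of $x^{l^s}+1$ by Lemma \ref{div-1}).

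Concretely, I would first parametrize these orders. For odd $i \in [0, 2l^s)$, one has ${\rm o}_{2l^s}^+(i) = 2l^{s-j}$ where $l^j \Vert i$, so ${\rm o}_{2l^s}^+(i)$ runs through the values $2, 2l, 2l^2,\dots, 2l^s$ as $i$ varies. The value $2$ is attained uniquely by $i = l^s$, corresponding to the factor $x+1$. For every $k \geq 1$, I would then invoke Proposition \ref{Jit2} (applied with odd pairwise-coprime inputs $q, 1, l^k$ in the usual manner of the paper) to reduce $2l^k \in OG_{(q,1)}$ to $l^k \in OG_{(q,1)}$, and then Proposition \ref{oddly-good} to reduce the latter to the single condition $2\Vert {\rm ord}_l(q)$, which depends only on $l$ and $q$ (not on $k$).

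This dichotomy closes the proof: if $2\Vert {\rm ord}_l(q)$, then $2l^k \in OG_{(q,1)}$ for every $k \in \{1,\dots,s\}$, so by \eqref{uTdC} every monic irreducible factor of $x^{l^s}+1$ over $\mathbb{F}_{q^2}$ is SCRIM; otherwise $2l^k \notin OG_{(q,1)}$ for all $k \geq 1$, and only the factor arising from ${\rm o}_{2l^s}^+(i) = 2$, namely $x+1$, survives as a SCRIM factor.

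The only mildly subtle point is keeping track of the hypotheses of Proposition \ref{Jit2} when one of the inputs is $1$, but the paper invokes it in exactly this form repeatedly (for instance in the proof of Proposition \ref{eitherSRIMxn+1} via Proposition \ref{Jit1}), so I will adopt the same convention. Beyond that, no case analysis on $s$ is required, since the characterizing condition $2\Vert {\rm ord}_l(q)$ is independent of the exponent~$s$.
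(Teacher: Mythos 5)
Your proposal is correct and follows essentially the same route as the paper: the paper's proof likewise reduces the question, via Proposition \ref{oddly-good} and Proposition \ref{Jit2}, to the single condition $2\,\Vert\, \mathrm{ord}_l(q)$, which is independent of the exponent on $l$, and concludes the dichotomy from the characterization of SCRIM factors in terms of $OG_{(q,1)}$. Your additional bookkeeping (the explicit parametrization ${\rm o}_{2l^s}^+(i)=2l^{s-j}$ and the identification of $x+1$ with the order-$2$ class) is a more detailed rendering of the same argument, not a different one.
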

 \begin{proof}
     By Proposition \ref{oddly-good} and  Proposition \ref{Jit2}, it follows that $ l\in OG_{(q,1)}$ if and only if $ 2l^i\in OG_{(q,1)}$ for all positive integers $i$. Hence,   every  monic irreducible factor of $x^{l^s}+1$ over $\mathbb{F}_{q^2}$ is  SCRIM if and only if   $2||\mathrm{ord}_l{(q)}$, and  $x+1$ is the only SCRIM factor of $x^{l^s}+1$ over $\mathbb{F}_{q^2}$ if and only $\mathrm{ord}_l{(q)}$ is odd or $4||\mathrm{ord}_l{(q)}$.
 \end{proof}

  The following theorems can be derived directly from  Theorem \ref{charOddSCRIMxn+1}, Proposition~\ref{oddly-good} and Proposition~\ref{Jit2}. The proofs will be omitted.

 \begin{theorem} \label{prod1SCRIM}
     Let $n_1$ and $n_2$ be coprime odd positive  integers relatively prime to $q$. If  $x+1$ is the only SCRIM factor  of $x^{n_1}+1$ and $x^{n_2}+1$ over $\mathbb{F}_{q^2}$, then it is the only SCRIM factor of $x^{n_1n_2}-1$ over $\mathbb{F}_{q^2}$.  
 \end{theorem}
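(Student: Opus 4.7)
The plan is to reduce the claim via Theorem \ref{charOddSCRIMxn+1}(2) together with the multiplicativity properties of oddly-good integers recorded in the Preliminaries. First, I apply part (2) of Theorem \ref{charOddSCRIMxn+1} to each of $n_1$ and $n_2$ separately: the hypothesis that $x+1$ is the only SCRIM factor of $x^{n_i}+1$ over $\mathbb{F}_{q^2}$ forces, for every prime divisor $l$ of $n_i$, that $\mathrm{ord}_l(q)$ is either odd or divisible by $4$. Equivalently, no prime divisor of $n_1$ or $n_2$ lies in $OG_{(q,1)}$, i.e., $n_1 \notin OG_{(q,1)}$ and $n_2 \notin OG_{(q,1)}$.

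Second, because $\gcd(n_1,n_2)=1$, the set of prime divisors of the product $n_1 n_2$ is exactly the disjoint union of the prime divisors of $n_1$ and those of $n_2$, so the ``odd or divisible by $4$'' condition on $\mathrm{ord}_l(q)$ extends uniformly to every prime divisor $l$ of $n_1 n_2$. Proposition \ref{oddly-good} then yields $d\notin OG_{(q,1)}$ for every odd divisor $d>1$ of $n_1 n_2$, and Proposition \ref{Jit2} in turn gives $2d\notin OG_{(q,1)}$ for every such $d$. Thus the only divisor of $n_1 n_2$, odd or of the form $2d$ with $d\mid n_1 n_2$ odd, that can possibly lie in $OG_{(q,1)}$ is the trivial one.

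Finally, I read off the conclusion from the counting formulas. In the $OG_{(q,1)}$-indexed sum that controls the SCRIM factor count (cf.\ Theorem \ref{SCRIMxn-1} in combination with the identity \eqref{eq-SCRIM} and Theorem \ref{enumSCRIMxn+1}), only the trivial index $d=1$ survives, so the SCRIM factor set under consideration has cardinality exactly one. The unique surviving factor is then identified with the distinguished degree-one SCRIM factor singled out in the hypothesis, completing the argument.

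The main obstacle is the step of combining the two prime-level conditions across the coprime factors $n_1$ and $n_2$ into a single condition on $n_1 n_2$. This depends essentially on the prime-by-prime nature of membership in $OG_{(q,1)}$ as captured by Propositions \ref{oddly-good}, \ref{Jit1}, and \ref{Jit2}, which together guarantee that no new oddly-good divisor appears in the product $n_1 n_2$ beyond those already forced by $n_1$ and $n_2$ individually; the remainder of the proof is a routine bookkeeping of the counting formulas and can be omitted.
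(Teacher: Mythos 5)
Your proof is correct and follows exactly the route the paper intends: the authors omit the proof, stating that it ``can be derived directly from Theorem \ref{charOddSCRIMxn+1}, Proposition \ref{oddly-good} and Proposition \ref{Jit2}'', which are precisely the ingredients you combine (hypothesis $\Rightarrow$ condition 3) of Theorem \ref{charOddSCRIMxn+1}(2) for every prime divisor of $n_1$ and of $n_2$, hence for every prime divisor of $n_1n_2$ by coprimality, hence the conclusion by the reverse implication). Note only that the ``$x^{n_1n_2}-1$'' in the statement is evidently a typo for ``$x^{n_1n_2}+1$'' (compare Theorem \ref{prod1SRIMxn+1}), and your argument correctly proves the intended $+1$ version.
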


 \begin{theorem} \label{prod2SCRIM}
     Let $n_1$ and $n_2$ be coprime odd positive integers relatively prime to $q$. If every irreducible factor of $x^{n_1}+1$ over $\mathbb{F}_{q^2} $ is SCRIM and  $x+1$ is the only SCRIM factor of  $x^{n_2}+1$ over $\mathbb{F}_{q^2}$, then  $\mathrm{SCRIM}_{q^2,n_1n_2}(-1)=\mathrm{SCRIM}_{q^2,n_1}(-1)$. 
    \end{theorem}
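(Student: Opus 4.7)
The plan is to prove equality by two inclusions. The forward inclusion $\mathrm{SCRIM}_{q^2,n_1}(-1)\subseteq\mathrm{SCRIM}_{q^2,n_1n_2}(-1)$ is immediate: since $n_2$ is odd, $x^{n_1n_2}+1=(x^{n_1})^{n_2}+1$ is divisible by $x^{n_1}+1$, and being SCRIM is a property of the polynomial itself, so every SCRIM factor of $x^{n_1}+1$ is automatically a SCRIM factor of $x^{n_1n_2}+1$. Before tackling the reverse inclusion, I will reformulate the hypotheses via Theorem \ref{charOddSCRIMxn+1} and Proposition \ref{oddly-good}: the assumption on $n_1$ gives $2\|\mathrm{ord}_l(q)$ for every prime $l\mid n_1$, hence every divisor of $n_1$ lies in $OG_{(q,1)}$; the assumption on $n_2$ gives $\mathrm{ord}_l(q)$ odd or $4\mid\mathrm{ord}_l(q)$ for every prime $l\mid n_2$, hence no prime divisor of $n_2$ lies in $OG_{(q,1)}$.

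For the reverse inclusion, let $g$ be any SCRIM factor of $x^{n_1n_2}+1$ over $\mathbb{F}_{q^2}$. Since $n_1n_2$ is odd, Corollary \ref{charfSCRIMxn+1} (with $m=0$) writes $g=f_i=\prod_{j\in Cl_{q^2,2n_1n_2}(i)}(x-\alpha^j)$, where $\alpha$ is a primitive $(2n_1n_2)$-th root of unity, $i$ is odd, and $\mathrm{o}^+_{2n_1n_2}(i)=2d$ for some divisor $d$ of $n_1n_2$ with $2d\in OG_{(q,1)}$. Using coprimality of $n_1$ and $n_2$, decompose $d=d_1d_2$ with $d_1\mid n_1$ and $d_2\mid n_2$. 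If $d_2>1$, choose any prime $l\mid d_2$; then $l\mid n_2$, so $l\notin OG_{(q,1)}$, and Propositions \ref{Jit2} and \ref{oddly-good} force $2d\notin OG_{(q,1)}$, a contradiction. Hence $d_2=1$ and $d\mid n_1$. Consequently $\gcd(i,2n_1n_2)=n_1n_2/d$ is a multiple of $n_2$, so $n_2\mid i$, and thus $\alpha^i=(\alpha^{n_2})^{i/n_2}$ is a $(2n_1)$-th root of unity. Since $i$ is odd (Lemma \ref{div-1}), $\alpha^i$ is not an $n_1$-th root of unity, so $f_i$ divides $x^{n_1}+1$, and $g=f_i\in\mathrm{SCRIM}_{q^2,n_1}(-1)$.

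The main obstacle is the boundary case $d=1$, where $2d=2$ and Proposition \ref{Jit2} does not directly apply. Here one verifies via Proposition \ref{prop2.01} that $2\in OG_{(q,1)}$ always holds for odd $q$; this case corresponds to $i=n_1n_2$ and the trivial SCRIM factor $x+1$, which lies in both $\mathrm{SCRIM}_{q^2,n_1}(-1)$ and $\mathrm{SCRIM}_{q^2,n_1n_2}(-1)$. The remainder of the bookkeeping is routine: one must keep the three different moduli $2n_1n_2$, $2n_1$, and the prime-by-prime characterization straight, but the factorization $d=d_1d_2$ makes the obstruction transparent by isolating the $n_2$-part that the hypothesis kills.
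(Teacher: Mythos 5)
Your proof is correct, and it follows exactly the route the paper intends: the paper omits the proof of this theorem, stating only that it "can be derived directly from Theorem \ref{charOddSCRIMxn+1}, Proposition \ref{oddly-good} and Proposition \ref{Jit2}," which are precisely the tools you use, with the cyclotomic-coset/additive-order bookkeeping from \eqref{uTdC} filled in explicitly. The handling of the boundary case $d=1$ via Proposition \ref{prop2.01} is a correct and welcome detail that the paper's citation glosses over.
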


 \begin{theorem} \label{prod3SCRIM}
     Let $n_1$ and $n_2$ be  odd  positive integers coprime  to $q$. If every   irreducible monic factor of $x^{n_1}-1$ and $x^{n_2}-1$ over $\mathbb{F}_{q^2}$ is SCRIM, then all irreducible monic  factors of $x^{n_1n_2}-1$ over $\mathbb{F}_{q^2}$ are SCRIM. 
 \end{theorem}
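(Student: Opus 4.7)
The plan is to reduce the statement to a pointwise condition on the $2$-adic valuations of $\mathrm{ord}_l(q)$ for the prime divisors $l$ of $n_1$, $n_2$, and $n_1 n_2$. The key step is to establish the following $x^n - 1$ analogue of Theorem \ref{charOddSCRIMxn+1}: for any odd positive integer $n$ coprime to $q$, every monic irreducible factor of $x^n - 1$ over $\mathbb{F}_{q^2}$ is SCRIM if and only if $2\,||\,\mathrm{ord}_l(q)$ for every prime divisor $l$ of $n$.

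To prove this auxiliary equivalence, let $\alpha$ be a primitive $n$th root of unity and write $x^n - 1 = \prod_{i \in S_{q^2, n}} f_i(x)$. By the version of Lemma \ref{charfSCRIM} that holds for cyclotomic cosets of $q^2$ modulo $n$ (with the same proof, replacing $2^{m+1}n'$ by $n$), $f_i(x)$ is SCRIM if and only if $\mathrm{o}_n^+(i) \in OG_{(q,1)}$; the factor $x - 1$ corresponding to $i = 0$ is trivially SCRIM. As $i$ runs over a complete set of cyclotomic coset representatives, $\mathrm{o}_n^+(i) = n / \gcd(i,n)$ ranges over all divisors of $n$, so the property that every monic irreducible factor of $x^n - 1$ is SCRIM is equivalent to every divisor $d > 1$ of $n$ lying in $OG_{(q,1)}$, which by Proposition \ref{oddly-good} amounts exactly to $2\,||\,\mathrm{ord}_l(q)$ for every prime $l \mid n$.

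With this equivalence in hand, the theorem is immediate. The hypothesis yields $2\,||\,\mathrm{ord}_l(q)$ for every prime $l \mid n_1$ and for every prime $l \mid n_2$. Because $\gcd(n_1, n_2) = 1$, the prime divisors of $n_1 n_2$ are precisely the prime divisors of $n_1$ together with those of $n_2$, so $2\,||\,\mathrm{ord}_l(q)$ continues to hold for every prime $l \mid n_1 n_2$. Applying the auxiliary equivalence in the reverse direction gives the conclusion. The only delicate point is verifying the auxiliary characterization for $x^n - 1$, which runs parallel to the proof of Theorem \ref{charOddSCRIMxn+1} for $x^n + 1$ and uses only Lemma \ref{charfSCRIM} together with Proposition \ref{oddly-good}; no new ingredient is required. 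Alternatively, one can read the same equivalence out of Theorem \ref{SCRIMxn-1} by comparing the total degree $n = \sum_{d \mid n}\phi(d)$ with $\sum_{d \mid n,\, d \in OG_{(q,1)} \cup \{1\}} \phi(d)$.
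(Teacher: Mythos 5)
Your proof is correct and follows essentially the route the paper intends (the paper omits the proof, saying it can be derived directly from Theorem \ref{charOddSCRIMxn+1}, Proposition \ref{oddly-good} and Proposition \ref{Jit2}): reduce everything to the condition $2\,||\,\mathrm{ord}_l(q)$ for each prime divisor $l$, which passes from $n_1$ and $n_2$ to $n_1n_2$ because the prime divisors of $n_1n_2$ are exactly those of $n_1$ together with those of $n_2$. You were right to supply the $x^n-1$ analogue of Theorem \ref{charOddSCRIMxn+1} since the statement as printed concerns $x^{n_i}-1$ (almost certainly a typo for $x^{n_i}+1$, in which case Theorem \ref{charOddSCRIMxn+1} applies verbatim), and your added assumption $\gcd(n_1,n_2)=1$ is not in the hypothesis but is harmless since your argument never actually uses it.
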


 The next corollary follows directly from the discussion above and the arguments similar to those in the proof of Corollary \ref{prod2prim-SRIMxn+1}. 
 \begin{corollary} \label{enum2primeSCRIM}
     Let $q$ be a prime power, let $l_1,l_2$ be distinct odd primes relatively to $q$ and let $r_1,r_2$ be  positive integers. Let $s_1$ and $s_2$ be non-negative integers such that $2^{s_1}|| \mathrm{ord}_{l_1}(q)$ and $2^{s_2}|| \mathrm{ord}_{l_2}(q)$. Then the following statements hold.
     \begin{enumerate}
         \item If $s_1\ne 1$ and $s_2\ne1$, then 
         \begin{align*}
         |\mathrm{SCRIM}_{q^2,l_1^{r_1}l_2^{r_2}}(-1)|=1. 
         \end{align*}
         \item If  $s_1= 1$  and $s_2\ne 1$, then
         \begin{align*}
         |\mathrm{SCRIM}_{q^2,l_1^{r_1}l_2^{r_2}}(-1)|=|\mathrm{SCRIM}_{q^2,l_1^{r_1}}(-1)|. 
         \end{align*}
           
         \item If $ s_1=1=s_2$, then
         \begin{align*}
         |\mathrm{SCRIM}_{q^2,l_1^{r_1}l_2^{r_2}}(-1)|=\sum_{i=0}^{r_1}\sum_{j=0}^{r_2}\dfrac{\phi(l_1^il_2^j)}{\mathrm{ord}_{l_1^il_2^j}(q^2)}.
         \end{align*}
     \end{enumerate}
 \end{corollary}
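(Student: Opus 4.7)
The plan is to mirror the proof of Corollary~\ref{prod2prim-SRIMxn+1} almost verbatim, replacing every statement about $G_{(q,1)}$ by its $OG_{(q,1)}$-counterpart and swapping Proposition~\ref{Mor2} for Proposition~\ref{oddly-good} and Proposition~\ref{Jit1} for Proposition~\ref{Jit2}. The central dichotomy on which everything rests is that, for an odd prime $l$ coprime to $q$, one has $l\in OG_{(q,1)}$ precisely when $2\,||\,\mathrm{ord}_l(q)$ (Proposition~\ref{oddly-good}); consequently $l_i\in OG_{(q,1)}$ iff $s_i=1$, while $l_i\notin OG_{(q,1)}$ iff $s_i=0$ (i.e.\ $\mathrm{ord}_{l_i}(q)$ is odd) or $s_i\geq 2$ (i.e.\ $4\mid\mathrm{ord}_{l_i}(q)$). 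By Proposition~\ref{Jit2} the same dichotomy transfers between $d$ and $2d$ for any odd $d>1$.

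For statement (1), since $s_1,s_2\neq 1$, neither $l_1$ nor $l_2$ lies in $OG_{(q,1)}$; by Theorem~\ref{charOddSCRIMxn+1}(2) applied to $n=l_i^{r_i}$, the polynomial $x+1$ is the only SCRIM factor of each $x^{l_i^{r_i}}+1$ over $\mathbb{F}_{q^2}$. Applying Theorem~\ref{prod1SCRIM} with $n_1=l_1^{r_1}$ and $n_2=l_2^{r_2}$ then forces $x+1$ to be the only SCRIM factor of $x^{l_1^{r_1}l_2^{r_2}}+1$. For statement (2), $s_1=1$ gives $l_1\in OG_{(q,1)}$, so Theorem~\ref{charOddSCRIMxn+1}(1) ensures that every monic irreducible factor of $x^{l_1^{r_1}}+1$ is SCRIM; combined with the same argument as in (1) showing that $x+1$ is the only SCRIM factor of $x^{l_2^{r_2}}+1$, Theorem~\ref{prod2SCRIM} then yields $\mathrm{SCRIM}_{q^2,l_1^{r_1}l_2^{r_2}}(-1)=\mathrm{SCRIM}_{q^2,l_1^{r_1}}(-1)$, proving (2).

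For statement (3), $s_1=s_2=1$ places both $l_1$ and $l_2$ in $OG_{(q,1)}$, and by Proposition~\ref{oddly-good} combined with Proposition~\ref{Jit2} every divisor $d$ of $l_1^{r_1}l_2^{r_2}$ and every $2d$ are then in $OG_{(q,1)}$ as well. Theorem~\ref{charOddSCRIMxn+1}(1) applied to $n=l_1^{r_1}l_2^{r_2}$ therefore shows that every monic irreducible factor of $x^{l_1^{r_1}l_2^{r_2}}+1$ is SCRIM, so Theorem~\ref{enumSCRIMxn+1} (with $m=0$ and $n'=l_1^{r_1}l_2^{r_2}$) sums over every divisor of $l_1^{r_1}l_2^{r_2}$ without restriction. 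Since each $d=l_1^{i}l_2^{j}$ is odd and $q$ is odd (so that $\mathrm{ord}_2(q^2)=1$), we have $\phi(2d)=\phi(d)$ and $\mathrm{ord}_{2d}(q^2)=\mathrm{ord}_d(q^2)$, so each summand reduces to $\phi(l_1^{i}l_2^{j})/\mathrm{ord}_{l_1^{i}l_2^{j}}(q^2)$, giving the stated double sum.

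The only non-bookkeeping step, and therefore the main potential obstacle, is the passage from the $\phi(2d)/\mathrm{ord}_{2d}(q^2)$ form delivered by Theorem~\ref{enumSCRIMxn+1} to the $\phi(d)/\mathrm{ord}_d(q^2)$ form displayed in (3); this is a one-line parity observation, but it is where the two apparently different formulas are reconciled. One should also double-check that Theorem~\ref{charOddSCRIMxn+1}(2) genuinely covers both sub-cases $s_i=0$ and $s_i\geq 2$ in statement~(1), which it does since precisely these are the cases in which $\mathrm{ord}_{l_i}(q)$ is odd or divisible by $4$.
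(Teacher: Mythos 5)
Your proposal is correct and follows essentially the same route as the paper, which simply invokes the preceding discussion and the argument of Corollary~\ref{prod2prim-SRIMxn+1} with $G_{(q,1)}$, Proposition~\ref{Mor2} and Proposition~\ref{Jit1} replaced by $OG_{(q,1)}$, Proposition~\ref{oddly-good} and Proposition~\ref{Jit2}. You in fact supply more detail than the paper does, correctly making explicit the reduction $\phi(2d)/\mathrm{ord}_{2d}(q^2)=\phi(d)/\mathrm{ord}_{d}(q^2)$ that reconciles the formula of Theorem~\ref{enumSCRIMxn+1} with the displayed double sum in statement~(3).
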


\subsection{Recursive Formulas for the Number  of SCRIM Factors of $x^n+1$}

Note that the recursive enumeration for SCRIM factors of $x^n+1$ over $\mathbb{F}_{q^2}$ can be given independently from its   characterization.

For an odd positive integer $n'$, the number $|\mathrm{SCRIM}_{q^2,n'}(1)|$ of SCRIM factors of $x^{n'}-1 $    was given in \cite[Theorem 2.22]{ASP2019} as well as the calculation steps.  The  general formula for $  |\mathrm{SCRIM}_{q^2,2^mn'}(1)|$  was given in     \cite[Theorem 2.14]{ASP2019} for all  non-negative integers $m$. 

\begin{theorem}[{\cite[Theorem 2.14]{ASP2019}}] \label{evenInt}
    Let $n'$ be an odd  positive integer relatively prime to $q$ and let $m$ be a non-negative integer.  Let $\nu$ be the positive integer such that $2^\nu||(q+1)$. Then  
    \begin{displaymath}
    |\mathrm{SCRIM}_{q^2,2^mn}(1)|=\left \{\begin{array}{ll}
    2^m |\mathrm{SCRIM}_{q^2,n}(1)|& \text{ if } 0\leq m \leq \nu,\\
    2^\nu  |\mathrm{SCRIM}_{q^2,n}(1)|& \text{ if } m>\nu.
    \end{array}\right.
    \end{displaymath}
\end{theorem}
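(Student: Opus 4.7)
The plan is to reduce the enumeration to Theorem \ref{SCRIMxn-1} and sort the divisors of $2^m n'$ by their $2$-adic valuation. Every divisor of $2^m n'$ has the unique form $2^j d'$ with $d'\mid n'$ odd and $0\le j\le m$, so Theorem \ref{SCRIMxn-1} yields
\begin{equation*}
|\mathrm{SCRIM}_{q^2,2^m n'}(1)|=\sum_{j=0}^{m}\;\sum_{\substack{d'\mid n'\\ 2^j d'\in OG_{(q,1)}}}\frac{\phi(2^j d')}{\mathrm{ord}_{2^j d'}(q^2)}.
\end{equation*}
The aim is to show that the $j$-th inner sum equals $|\mathrm{SCRIM}_{q^2,n'}(1)|$ when $j=0$, equals $2^{j-1}|\mathrm{SCRIM}_{q^2,n'}(1)|$ when $1\le j\le \nu$, and is empty when $j>\nu$.

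First, I would characterize when $2^j d'\in OG_{(q,1)}$. For $j=0$ this is just $d'\in OG_{(q,1)}$. For $j\ge 1$, Proposition \ref{prop2.01} handles the case $d'=1$ (yielding $j\le\nu$), Proposition \ref{Jit2} covers $d'>1$ with $j=1$, and Proposition \ref{prop2m} covers $d'>1$ with $j\ge 2$; combined with the criterion of Proposition \ref{oddly-good}, these show that $2^j d'\in OG_{(q,1)}$ if and only if $j\le\nu$ and $d'\in OG_{(q,1)}$. In particular, no $d'$ works when $j>\nu$, so those inner sums vanish.

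For $1\le j\le\nu$, $2^j\mid q+1$ implies $2^j\mid q^2-1$, hence $\mathrm{ord}_{2^j}(q^2)=1$. Since $\gcd(2^j,d')=1$, the Chinese Remainder Theorem gives
\begin{equation*}
\mathrm{ord}_{2^j d'}(q^2)=\mathrm{lcm}\bigl(\mathrm{ord}_{2^j}(q^2),\mathrm{ord}_{d'}(q^2)\bigr)=\mathrm{ord}_{d'}(q^2),
\end{equation*}
together with $\phi(2^j d')=2^{j-1}\phi(d')$. Substituting shows that the $j$-th inner sum equals $2^{j-1}|\mathrm{SCRIM}_{q^2,n'}(1)|$. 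Summing the geometric series $\sum_{j=1}^{\min(m,\nu)}2^{j-1}=2^{\min(m,\nu)}-1$ and adding the $j=0$ term produces $2^{\min(m,\nu)}|\mathrm{SCRIM}_{q^2,n'}(1)|$, which is precisely the two-case formula.

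The main obstacle is the case-by-case bookkeeping needed to invoke the correct oddly-good criterion, since Proposition \ref{prop2m} requires $\beta\ge 2$ and the $d'=1$ situation must be handled by Proposition \ref{prop2.01} rather than Proposition \ref{Jit2}. Once the uniform assertion ``$2^j d'\in OG_{(q,1)}$ iff $j\le\nu$ and $d'\in OG_{(q,1)}$'' is consolidated for $j\ge 1$, the remaining order computation and the collapse to a geometric sum are routine.
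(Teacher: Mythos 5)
Your proof is correct. Note first that the paper itself offers no proof of this statement: it is imported verbatim from \cite[Theorem 2.14]{ASP2019}, so there is no in-paper argument to match yours against. Your derivation is a legitimate self-contained alternative built entirely from results the paper does state: you specialize Theorem \ref{SCRIMxn-1} to $2^m n'$, stratify the divisors as $2^j d'$, and reduce each stratum via the oddly-good criteria. The case bookkeeping is handled correctly --- Proposition \ref{prop2.01} for $d'=1$, Proposition \ref{Jit2} for $j=1$, $d'>1$, and Proposition \ref{prop2m} for $j\ge 2$, $d'>1$ --- and these do consolidate into ``$2^j d'\in OG_{(q,1)}$ iff $j\le\nu$ and $d'\in OG_{(q,1)}$'' for $j\ge 1$ (using that $\nu\ge 1$ since $q$ is odd, so the $j=1$ case imposes no extra constraint). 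The order computation $\mathrm{ord}_{2^j d'}(q^2)=\mathrm{ord}_{d'}(q^2)$ for $j\le\nu$ and the identity $\phi(2^j d')=2^{j-1}\phi(d')$ then give $2^{\min(m,\nu)}|\mathrm{SCRIM}_{q^2,n'}(1)|$, which is exactly the two-case formula. This is the same style of argument the present paper uses for its own enumeration results (Theorems \ref{enumSRIMxn+1} and \ref{enumSCRIMxn+1}), so your proof fits naturally into its framework; the only cosmetic caveat is that Propositions \ref{Jit1}--\ref{Jit2} are stated for $a,b,d>1$ while you (like the authors throughout) apply them with $b=1$, a harmless convention since $\mathrm{ord}_p(a/b)$ just means $\mathrm{ord}_p(ab^{-1})$.
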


The   enumeration of SCRIM factors of $x^n+1$ over $\mathbb{F}_{q^2}$ is given as follows.
\begin{theorem} \label{enumSCRIM}
        Let $n$ be an odd  positive integer relatively prime to $q$ and let $m$ be a non-negative integer. Let $\nu$ be the positive integer such that $2^\nu||(q+1)$. Then
    \begin{align*}
    |\mathrm{SCRIM}_{q^2,2^{m}n}(-1)|= \begin{cases} 2^m|\mathrm{SCRIM}_{q^2,n}(1)| & \text{ if   }{0}\leq m< \nu,\\ 
    0 &\text{ if  } m\ge \nu.
    \end{cases}
    \end{align*} 
    In particular,     \begin{align*}
    |\mathrm{SCRIM}_{q^2,2^{m}}(-1)|= \begin{cases} 2^m& \text{ if   }{0}\leq m< \nu,\\
    0 &\text{ if  } m\ge \nu.
    \end{cases}
    \end{align*} 
\end{theorem}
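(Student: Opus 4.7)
The plan is to derive both formulas as direct consequences of Lemma \ref{lem2}(2) combined with the enumeration of SCRIM factors of $x^{2^{m+1}n}-1$ given in Theorem \ref{evenInt}. No new structural argument is needed; the entire statement follows by telescoping a known count.

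First, I would invoke Lemma \ref{lem2}(2) with index $2^m n$ to write
\[
|\mathrm{SCRIM}_{q^2,2^{m}n}(-1)|= |\mathrm{SCRIM}_{q^2,2^{m+1}n}(1)|-|\mathrm{SCRIM}_{q^2,2^{m}n}(1)|.
\]
This reduces the problem to evaluating two consecutive instances of the ``$-1$ case'' (i.e.\ the factorization of $x^{2^k n}-1$), for which Theorem \ref{evenInt} supplies a closed form that depends only on the position of $k$ relative to $\nu$. I would then split into two cases according to the size of $m$.

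In the first case, $0\le m<\nu$, both $m$ and $m+1$ lie in $[0,\nu]$, so Theorem \ref{evenInt} yields $|\mathrm{SCRIM}_{q^2,2^{m}n}(1)|=2^{m}|\mathrm{SCRIM}_{q^2,n}(1)|$ and $|\mathrm{SCRIM}_{q^2,2^{m+1}n}(1)|=2^{m+1}|\mathrm{SCRIM}_{q^2,n}(1)|$, whose difference is $2^{m}|\mathrm{SCRIM}_{q^2,n}(1)|$. In the second case, $m\ge \nu$, the values of $|\mathrm{SCRIM}_{q^2,2^{m+1}n}(1)|$ and $|\mathrm{SCRIM}_{q^2,2^{m}n}(1)|$ both equal $2^\nu|\mathrm{SCRIM}_{q^2,n}(1)|$, with the only subtle point being the boundary $m=\nu$, where the top branch of Theorem \ref{evenInt} ($m\le\nu$) and the bottom branch ($m>\nu$) happen to agree on $2^\nu|\mathrm{SCRIM}_{q^2,n}(1)|$. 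Hence the difference is $0$.

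The ``In particular'' statement is then obtained by setting $n=1$ and observing that $|\mathrm{SCRIM}_{q^2,1}(1)|=1$, since $x-1$ is the unique monic factor of $x-1$ and is trivially self-conjugate-reciprocal. The main (and only) obstacle is bookkeeping at the boundary $m=\nu$, where one must check that both branches of Theorem \ref{evenInt} return the same value; this is immediate, so overall the proof is a short two-case computation.
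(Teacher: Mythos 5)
Your proposal is correct and follows essentially the same route as the paper: apply Lemma \ref{lem2}(2) to reduce to $|\mathrm{SCRIM}_{q^2,2^{m+1}n}(1)|-|\mathrm{SCRIM}_{q^2,2^{m}n}(1)|$ and then evaluate both terms via Theorem \ref{evenInt}, splitting on whether $m<\nu$ or $m\ge\nu$. Your extra care at the boundary $m=\nu$ and the explicit justification of the $n=1$ specialization are welcome but do not change the argument.
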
                                                                     
\begin{proof} From Lemma \ref{lem2}, we have
    \begin{align*}
    |\mathrm{SCRIM}_{q^2,2^{m}n}(-1)|=|\mathrm{SCRIM}_{q^2,2^{m+1}n}(1)|-|\mathrm{SCRIM}_{q^2,2^{m}n}(1)|.
    \end{align*}
    By Theorem \ref{evenInt}, it follows that 
    \begin{align*}
    |\mathrm{SCRIM}_{q^2,2^{m}n}(-1)|&=
    \begin{cases}
    2^{m+1}|\mathrm{SCRIM}_{q^2,n}(1)|-2^{m}|\mathrm{SCRIM}_{q^2,n}(1)| &\text{if } 0\leq m<\nu\\
      2^{\nu}|\mathrm{SCRIM}_{q^2,n}(1)|-2^{\nu}|\mathrm{SCRIM}_{q^2,n}(1)| &\text{if }  m\geq \nu
      \end{cases}\\
    &=\begin{cases} 2^m|\mathrm{SCRIM}_{q^2,n}(1)| & \text{ if   }{0}\leq m< \nu,\\ 
    0 &\text{ if  } m\ge \nu
    \end{cases}
    \end{align*}as desired.
\end{proof}

From Theorem \ref{evenInt} and Theorem \ref{enumSCRIM}, we have the following corollary. 
\begin{corollary}  
    Let $n$ be an odd  positive integer relatively prime to $q$ and let $m$ be a non-negative integer. If  $\nu$  is  the positive integer such that $2^\nu||(q+1)$, then
    \begin{align*}
        |\mathrm{SCRIM}_{q^2,2^{m}n}(-1)|=   |\mathrm{SCRIM}_{q^2,2^mn}(1)| & \text{ for all    }{0}\leq m< \nu.
    \end{align*} 
\end{corollary}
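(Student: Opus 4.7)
The plan is to observe that this corollary is an immediate consequence of stacking the two preceding results: Theorem \ref{evenInt} evaluates $|\mathrm{SCRIM}_{q^2,2^{m}n}(1)|$ for every $m \ge 0$, and Theorem \ref{enumSCRIM} evaluates $|\mathrm{SCRIM}_{q^2,2^{m}n}(-1)|$ for every $m \ge 0$, so one merely needs to compare the formulas in the range $0 \le m < \nu$.

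Concretely, I would first fix an odd positive integer $n$ coprime to $q$ and an integer $m$ with $0 \le m < \nu$. Since $m < \nu$ implies in particular $m \le \nu$, the hypothesis of the first branch of Theorem \ref{evenInt} holds, giving
\[
|\mathrm{SCRIM}_{q^2,2^{m}n}(1)| \;=\; 2^{m}\,|\mathrm{SCRIM}_{q^2,n}(1)|.
\]
Likewise, the condition $0 \le m < \nu$ places us in the top branch of Theorem \ref{enumSCRIM}, yielding
\[
|\mathrm{SCRIM}_{q^2,2^{m}n}(-1)| \;=\; 2^{m}\,|\mathrm{SCRIM}_{q^2,n}(1)|.
\]
Setting the two right-hand sides equal produces the desired identity; no further computation is required.

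There is essentially no obstacle, since both required formulas are quoted directly in the excerpt and the ranges of validity agree on the common region $0 \le m < \nu$. The only thing to be slightly careful about is boundary bookkeeping: Theorem \ref{evenInt} covers $0 \le m \le \nu$ while Theorem \ref{enumSCRIM} covers $0 \le m < \nu$ in the nontrivial case, so the corollary can only be asserted on the intersection $0 \le m < \nu$, which matches the statement. I would close by remarking that at $m = \nu$ the two quantities need not coincide, since $|\mathrm{SCRIM}_{q^2,2^{\nu}n}(-1)| = 0$ while $|\mathrm{SCRIM}_{q^2,2^{\nu}n}(1)| = 2^{\nu}|\mathrm{SCRIM}_{q^2,n}(1)|$, which both justifies the sharpness of the range and explains why the corollary cannot be extended.
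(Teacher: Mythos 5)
Your proposal is correct and follows exactly the route the paper intends: the corollary is stated there with the single line ``From Theorem \ref{evenInt} and Theorem \ref{enumSCRIM}, we have the following corollary,'' and your argument simply makes that comparison of the two formulas explicit on the common range $0\leq m<\nu$. Your closing remark about the failure at $m=\nu$ is a correct and sensible addition, though not part of the paper's (implicit) proof.
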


\section{Complementary Dual Negacyclic Codes over Finite Fields}
In this section, we focus on applications of SRIM and SCRIM factors of $x^n+1$ over finite fields in the study of complementary dual negacyclic codes.

\subsection{Polynomials and Codes}

A {\em linear   code} of length $n$   over $\mathbb{F}_q$ is defined to be a subspace of the $\mathbb{F}_q$-vector space $\mathbb{F}_q^n$.  The \textit{Euclidean dual} of a linear  code $C$ of length $n$ over $\mathbb{F}_q$ is defined to be  
\begin{align*}
C^{\perp_{\mathrm{E}}}=\{\boldsymbol{u}\in\mathbb{F}_q^n\mid \left\langle \boldsymbol{u},\boldsymbol{v}\right\rangle_{\mathrm{E}}=0 \mathrm{\,for\,all\,} \boldsymbol{v} \in C \},
\end{align*}
where 
$
\left\langle \boldsymbol{u},\boldsymbol{v}\right\rangle_{\mathrm{E}}:=\sum_{i=0}^{n-1}u_iv_i$
is the  \textit{Euclidean inner product} between  $\boldsymbol{u}=\left( u_0,u_1,...,u_{n-1}\right) $ and $\boldsymbol{v}=\left( v_0,v_1,...,v_{n-1}\right) $ in $\mathbb{F}_{q}^n$.
A linear  code $C$ is said to be \textit{Euclidean self-dual} if $C =C^{\perp_{\mathrm{E}}} $  and  $C$ is called \textit{Euclidean complementary dual}  if $C \cap C^{\perp_{\mathrm{E}}} =\{\boldsymbol{0}\}$.  Over $\mathbb{F}_{q^2}$,  the \textit{Hermitian dual} of a linear  code $C$ of length $n$  is defined to be  
\begin{align*}
C^{\perp_{\mathrm{H}}}=\{\boldsymbol{u}\in\mathbb{F}_{q^2}^n\mid \left\langle \boldsymbol{u},\boldsymbol{v}\right\rangle_{\mathrm{H}}=0 \mathrm{\,for\,all\,} \boldsymbol{v} \in C \}, 
\end{align*}
where $
\left\langle u,v\right\rangle_{\mathrm{H}}:=\sum_{i=0}^{n-1}u_i{v_i}^q$ is the  \textit{Hermitian inner product} between   $\boldsymbol{u}=\left( u_0,u_1,...,u_{n-1}\right) $ and $\boldsymbol{v}=\left( v_0,v_1,...,v_{n-1}\right) $ in $\mathbb{F}_{q^2}^n$.
A linear  code $C$ is said to be \textit{Hermitian self-dual} if $C =C^{\perp_{\mathrm{H}}} $  and $C$ is said to be  \textit{Hermitian complementary dual} if  $C \cap C^{\perp_{\mathrm{H}}} =\{\boldsymbol{0}\}$.

A linear code $C$ of length $n$  over $\mathbb{F}_q$ is called {\em negacyclic} if  $(-c_{n-1},c_0,...,c_{n-2})\in C$ for all  $(c_0,c_1,...,c_{n-1})\in C$.  As an extension on cyclic codes, negacyclic codes have been extensively studied due to rich algebraic structures and wide applications.  Self-dual negacyclic codes over finite fields have been quite well studied (see \cite{JPR2019}, \cite{SJLU2015} and references therein).  Here, we focus on complementary dual 
negacyclic codes over finite fields. Some properties of such codes have been discussed in  \cite{PZS2018}.  In the remaining parts of this section,  characterization and enumeration of complementary dual 
negacyclic codes over finite fields are given under both the Euclidean and Hermitian inner products in terms of SRIM and SCRIM factors of $x^n+1$ discussed in Sections 3--4. 

First, we recall  an  algebraic characterization of   negacyclic codes over finite fields.  A negacyclic code $C$ of length $n$ over $\mathbb{F}_q$ can be viewed as an embedding ideal in the principal ideal ring $\mathbb{F}_q[x]/ \left\langle x^n +1\right\rangle $, where $\left\langle x^n +1\right\rangle $ denotes the ideal generated by $x^n+1$ in $\mathbb{F}_q[x]$.  Moreover,  such an ideal is generated by a unique monic divisor of $x^n+1$ and it is called the {\em generator polynomial} of $C$. It is well know  (see \cite{SJLU2015}) that the Euclidean and Hermitian  duals of a negacyclic code are again negacyclic and their generator polynomials are given in the following theorems.

\begin{theorem} \label{gDE}
	Let $C$ be a negacyclic code of length $n$ over $\mathbb{F}_q.$ Then $C^{\perp_{\mathrm{E}}}$ is a negacyclic code of length $n$ over $\mathbb{F}_q$ with generator polynomial  $h^*(x)$, where $h(x)=\dfrac{x^n+1}{g(x)}.$ 
\end{theorem}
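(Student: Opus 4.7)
The plan is to prove the statement in two stages: first, establish that $C^{\perp_{\mathrm{E}}}$ is itself negacyclic, and second, identify its generator polynomial as $h^*(x)$ by verifying a containment together with a dimension count.

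First, I would show that $C^{\perp_{\mathrm{E}}}$ is closed under the negacyclic shift $T(c_0,c_1,\dots,c_{n-1})=(-c_{n-1},c_0,\dots,c_{n-2})$. A short direct calculation gives the identity
\[
\langle T(\boldsymbol{u}),\boldsymbol{c}\rangle_{\mathrm{E}} \;=\; \langle \boldsymbol{u},T^{-1}(\boldsymbol{c})\rangle_{\mathrm{E}},
\]
where $T^{-1}(c_0,\dots,c_{n-1})=(c_1,\dots,c_{n-1},-c_0)$. Since $C$ is negacyclic, $T^{-1}(\boldsymbol{c})\in C$ whenever $\boldsymbol{c}\in C$, so $\boldsymbol{u}\in C^{\perp_{\mathrm{E}}}$ implies $T(\boldsymbol{u})\in C^{\perp_{\mathrm{E}}}$. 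Thus $C^{\perp_{\mathrm{E}}}$ is a negacyclic code and, as an ideal in $\mathbb{F}_q[x]/\langle x^n+1\rangle$, has a unique monic generator.

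Next I would handle the dimensions. Writing $r=\deg g$ so that $\deg h=n-r$ and $\deg h^*=n-r$, we have $\dim C = n-\deg g = n-r$ and $\dim C^{\perp_{\mathrm{E}}}=r$, while the negacyclic code $\langle h^*(x)\rangle$ has dimension $n-\deg h^*=r$. Hence it suffices to exhibit the inclusion $\langle h^*(x)\rangle\subseteq C^{\perp_{\mathrm{E}}}$; equality will then follow from matching dimensions.

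For the inclusion, I would verify that $\langle x^i g(x),\, x^j h^*(x)\rangle_{\mathrm{E}}=0$ for every $0\le i\le n-r-1$ and $0\le j\le r-1$, since such vectors span $C$ and $\langle h^*(x)\rangle$ respectively (and for these ranges both polynomials have degree less than $n$, so no reduction modulo $x^n+1$ intervenes). Using the relations $h^*_{\ell}=g_0\,h_{n-r-\ell}$ (with the convention that $g_\ell$ and $h_\ell$ vanish outside their ranges), a reindexing turns the inner product into
\[
\langle x^i g(x),\, x^j h^*(x)\rangle_{\mathrm{E}} \;=\; g_0\sum_{u} g_u\, h_{(n-r+j-i)-u} \;=\; g_0\,[x^{n-r+j-i}]\bigl(g(x)h(x)\bigr).
\]
Because $n-r+j-i$ ranges over $\{1,2,\dots,n-1\}$ for the $i,j$ in question and $g(x)h(x)=x^n+1$, every one of these coefficients is zero, yielding the desired orthogonality.

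The one genuinely substantive step is the coefficient computation at the end, and the only subtlety there is keeping the reciprocal convention $f^*(x)=x^{\deg f}f(0)^{-1}f(1/x)$ straight when reading off $h^*_\ell$ in terms of $h_\ell$ and $g_0$; everything else is shift invariance and a dimension count. Once the orthogonality is checked on generators, bilinearity extends it to all of $\langle h^*(x)\rangle\times C$, so $\langle h^*(x)\rangle\subseteq C^{\perp_{\mathrm{E}}}$, and the dimension equality forces $C^{\perp_{\mathrm{E}}}=\langle h^*(x)\rangle$, completing the proof.
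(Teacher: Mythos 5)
Your proof is correct; the paper itself gives no argument for this theorem, simply citing it as a known fact from the reference on hulls of cyclic and negacyclic codes, and your argument (shift-invariance of the dual via $\langle T(\boldsymbol{u}),\boldsymbol{c}\rangle_{\mathrm{E}}=\langle \boldsymbol{u},T^{-1}(\boldsymbol{c})\rangle_{\mathrm{E}}$, then the coefficient computation $g_0\,[x^{n-r+j-i}](g(x)h(x))=0$ plus a dimension count) is exactly the standard proof of that cited result. The only cosmetic point is to note the trivial cases $g(x)=1$ and $g(x)=x^n+1$, where the index ranges are empty and the claim holds vacuously.
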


\begin{theorem}\label{gDH}
	Let $C$ be a negacyclic code of length $n$ over $\mathbb{F}_{q^2}.$ Then $C^{\perp_{\mathrm{H}}}$ is a negacyclic code of length $n$ over $\mathbb{F}_{q^2}$ with generator polynomial  $h^\dagger(x)$, where $h(x)=\dfrac{x^n+1}{g(x)}.$ 
\end{theorem}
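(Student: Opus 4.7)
The plan is to reduce the Hermitian case of Theorem \ref{gDH} to the Euclidean case that was just established as Theorem \ref{gDE}. The bridge between the two inner products is coordinate-wise conjugation. Define $\overline{C}=\{\overline{\boldsymbol{c}}\mid \boldsymbol{c}\in C\}$, where $\overline{\boldsymbol{c}}=(\overline{c_0},\dots,\overline{c_{n-1}})$. Since conjugation is $\mathbb{F}_q$-linear on $\mathbb{F}_{q^2}$, for any $\boldsymbol{u},\boldsymbol{v}\in\mathbb{F}_{q^2}^n$ one has $\langle \boldsymbol{u},\boldsymbol{v}\rangle_{\mathrm{H}}=\langle \boldsymbol{u},\overline{\boldsymbol{v}}\rangle_{\mathrm{E}}$, so $C^{\perp_{\mathrm{H}}}=(\overline{C})^{\perp_{\mathrm{E}}}$.

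Next I would verify that $\overline{C}$ is itself a negacyclic code and identify its generator polynomial. Because the conjugation $a\mapsto a^q$ is an $\mathbb{F}_q$-algebra automorphism that fixes $-1$, applying it coordinate-wise commutes with the negacyclic shift; under the usual identification with $\mathbb{F}_{q^2}[x]/\langle x^n+1\rangle$, this automorphism becomes $f(x)\mapsto \overline{f(x)}$. From the definition of the ideal $\langle g(x)\rangle$ one checks directly that $\overline{\langle g(x)\rangle}=\langle \overline{g(x)}\rangle$, so $\overline{C}$ is the negacyclic code generated by $\overline{g(x)}$. Moreover, since $x^n+1$ has coefficients in the prime field, the factorization $x^n+1=g(x)h(x)$ gives $x^n+1=\overline{g(x)}\cdot\overline{h(x)}$, so the complementary factor associated to the generator $\overline{g(x)}$ is precisely $\overline{h(x)}$.

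Now Theorem \ref{gDE} applied to $\overline{C}$ yields at once that $(\overline{C})^{\perp_{\mathrm{E}}}$ is negacyclic with generator polynomial $(\overline{h(x)})^{*}$. Combining this with the reduction from the first step shows that $C^{\perp_{\mathrm{H}}}$ is negacyclic and is generated by $(\overline{h(x)})^{*}$. It therefore remains to identify this polynomial with $h^{\dagger}(x)=\overline{h^{*}(x)}$. This is a short direct calculation from the definitions: if $h(x)=\sum h_i x^i$ of degree $d$, both $(\overline{h(x)})^{*}$ and $\overline{h^{*}(x)}$ equal $\overline{h_0}^{-1}\sum \overline{h_i} x^{d-i}$, using the obvious fact that $\overline{h_0^{-1}}=\overline{h_0}^{-1}$ in $\mathbb{F}_{q^2}$.

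The argument has no serious obstacle; the only point requiring care is ensuring that the two natural operations (coefficient-wise conjugation and reciprocation) commute in the appropriate sense, and that the negacyclic structure is preserved under conjugation, which in turn is why the Hermitian dual remains negacyclic. Once this bookkeeping is in place, the result follows immediately from Theorem \ref{gDE}.
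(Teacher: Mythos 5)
Your argument is correct. Note that the paper itself gives no proof of this theorem: both Theorem \ref{gDE} and Theorem \ref{gDH} are quoted as well-known facts with a citation to Sangwisut et al., so there is no in-paper argument to compare against. Your reduction of the Hermitian statement to the Euclidean one via coordinate-wise conjugation is a clean, self-contained way to supply the missing proof: the identity $\langle \boldsymbol{u},\boldsymbol{v}\rangle_{\mathrm{H}}=\langle \boldsymbol{u},\overline{\boldsymbol{v}}\rangle_{\mathrm{E}}$ gives $C^{\perp_{\mathrm{H}}}=(\overline{C})^{\perp_{\mathrm{E}}}$, conjugation commutes with the negacyclic shift because $\overline{-1}=-1$, the induced ring automorphism of $\mathbb{F}_{q^2}[x]/\langle x^n+1\rangle$ carries $\langle g(x)\rangle$ to $\langle\overline{g(x)}\rangle$, and the final identification $(\overline{h(x)})^{*}=\overline{h^{*}(x)}=h^{\dagger}(x)$ is a one-line computation (valid since $h(0)\neq 0$ and $\deg\overline{h}=\deg h$). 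The only point you gloss over slightly is that $\overline{C}$ is an $\mathbb{F}_{q^2}$-linear (not merely $\mathbb{F}_q$-linear) code; this follows from the semilinearity $\lambda\overline{\boldsymbol{c}}=\overline{\overline{\lambda}\boldsymbol{c}}$ and is implicitly covered by your observation that conjugation is induced by a field automorphism, but it is worth stating explicitly before invoking Theorem \ref{gDE}, which is a statement about linear negacyclic codes over $\mathbb{F}_{q^2}$.
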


The characterization and enumeration of complementary dual negacyclic codes are given in the following two subsections.

\subsection{Euclidean Complementary Dual Negacyclic Codes over Finite Fields}

Assume that the characteristic of $\mathbb{F}_q$ is $p$. For a positive integer $n$, write $n=p^\mu 2^mn'$, where $\mu$ and $m$ are non-negative integers and $n'$ is an odd positive integer such that $p\nmid n'$.  

Since the number of monic irreducible factors of  $x^{2^mn'}+1 $ over $\mathbb{F}_q$ is $r:=\sum\limits_{d\mid n' }\dfrac{\phi(2^{m+1}d)}{\mathrm{ord}_{2^{m+1}d}(q)}$ and $(f^*(x))^*=f(x)$ for all divisors $f(x)$ of $x^{2^mn'}+1 $,  the  polynomial $x^{n}+1 $ can be factorized  (see \cite[Equation (3.2)]{JPR2019}) of  the following form
\begin{align}\label{factSRIM}
x^n+1= (x^{2^mn'}+1)^{p^\mu}= \prod_{a=1}^s f_a(x)^{p^\mu}\times \prod_{b=1}^t h_b(x)^{p^\mu}h_b^*(x)^{p^\mu}  ,
\end{align}
where $f_a(x)$ is a SRIM factor of $x^{2^mn'}+1 $ for all $a=1,2,\dots, s:=|\mathrm{SRIM}_{q,2^m n'}(-1)|$,  and $h_b(x)$ is  an irreducible monic factor  of $x^{2^mn'}+1 $ which is not SRIM  for all $b=1,2,\dots, t:= \frac{r-s}{2}$.

\begin{theorem} \label{genE}
	Let $C$ be a negacyclic code of length  $n=p^\mu 2^mn'$ over $\mathbb{F}_q$ with generator polynomial $g(x)$.  Then $C$  is a Euclidean complementary dual  if and only if  $g(x)$ is self-reciprocal of the following form 
    \[g(x)=  \prod_{a=1}^sf_a(x)^{i_a}\times \prod_{b=1}^t(h_b(x)h_b^*(x))^{j_b}  ,\]
    where $i_a,j_b\in \{0,p^\mu\}$ for all  $a=1,2,\dots ,s$ and $b=1,2,\dots, t$.
	\end{theorem}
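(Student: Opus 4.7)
The plan is to use the factorization \eqref{factSRIM} of $x^n+1$ together with Theorem \ref{gDE} to express the generators of $C$ and $C^{\perp_{\mathrm{E}}}$ in a common form, then translate the LCD condition $C\cap C^{\perp_{\mathrm{E}}}=\{\boldsymbol{0}\}$ into constraints on the multiplicities of the irreducible factors in $g(x)$.

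First, since $g(x)\mid x^n+1$, I would write
\[ g(x)=\prod_{a=1}^{s} f_a(x)^{i_a}\prod_{b=1}^{t} h_b(x)^{j_b}\, h_b^*(x)^{k_b} \]
with integers $0\le i_a,j_b,k_b\le p^\mu$, so that $h(x)=(x^n+1)/g(x)$ has exponents $p^\mu-i_a$, $p^\mu-j_b$ and $p^\mu-k_b$ on $f_a$, $h_b$ and $h_b^*$, respectively. Using $f_a^*=f_a$, $(h_b^*)^*=h_b$ and the multiplicativity of the reciprocal on monic polynomials with nonzero constant term, I would then compute
\[ h^*(x)=\prod_{a=1}^{s} f_a(x)^{p^\mu-i_a}\prod_{b=1}^{t} h_b^*(x)^{p^\mu-j_b}\, h_b(x)^{p^\mu-k_b}, \]
which by Theorem \ref{gDE} is the generator polynomial of $C^{\perp_{\mathrm{E}}}$.

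Next, the CRT decomposition of $R=\mathbb{F}_q[x]/\langle x^n+1\rangle$ arising from the coprime factorization $x^n+1=\prod_{a} f_a^{p^\mu}\prod_{b} h_b^{p^\mu}(h_b^*)^{p^\mu}$ shows that $\langle g(x)\rangle\cap\langle h^*(x)\rangle$, viewed in $R$, is the ideal generated by $\mathrm{lcm}(g(x),h^*(x))$ modulo $x^n+1$. Hence $C\cap C^{\perp_{\mathrm{E}}}=\{\boldsymbol{0}\}$ if and only if $\mathrm{lcm}(g(x),h^*(x))=x^n+1$, which at the level of each irreducible factor amounts to the maximum of its multiplicities in $g$ and in $h^*$ being exactly $p^\mu$.

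Finally, this condition applied factor by factor reads $\max(i_a,p^\mu-i_a)=p^\mu$ for each $a$, and simultaneously $\max(j_b,p^\mu-k_b)=p^\mu$ and $\max(k_b,p^\mu-j_b)=p^\mu$ for each $b$. The first forces $i_a\in\{0,p^\mu\}$; the coupled pair of conditions for each $b$ forces $j_b=k_b\in\{0,p^\mu\}$ after a short case split. This gives precisely the announced form, and self-reciprocity of $g(x)$ is then automatic since $f_a^*=f_a$ and $(h_b h_b^*)^*=h_b h_b^*$. The main technical point will be the coupled case analysis for the pair $(h_b,h_b^*)$, where one must verify that the two maximum conditions together exclude any mixed choice of exponents; the rest of the argument is bookkeeping with the explicit factorization.
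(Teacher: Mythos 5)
Your proposal is correct and follows essentially the same route as the paper: write $g(x)=\prod_a f_a^{i_a}\prod_b h_b^{j_b}(h_b^*)^{k_b}$, compute the generator of $C^{\perp_{\mathrm{E}}}$ via Theorem \ref{gDE}, and reduce the LCD condition to exponent constraints forcing $i_a\in\{0,p^\mu\}$ and $j_b=k_b\in\{0,p^\mu\}$. The only cosmetic difference is that you phrase the condition as $\mathrm{lcm}(g,h^*)=x^n+1$ (intersection of ideals) while the paper uses the equivalent $\gcd(g,h^*)=1$ (sum of ideals); the resulting max/min case analysis is identical.
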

\begin{proof} We write  $g(x)=  \prod\limits_{a=1}^sf_a(x)^{i_a}\times \prod\limits _{b=1}^th_b(x) ^{j_b} h_b^*(x)^{k_b} ,$
    where   $i_a,j_b,k_b\in \{0,1,2,\dots, p^\mu\}$ for all  $a=1,2,\dots s$ and $b=1,2,\dots, t$. Then \[h(x)=  \prod\limits_{a=1}^sf_a(x)^{p^\mu-i_a}\times \prod\limits _{b=1}^th_b(x) ^{p^\mu-j_b} h_b^*(x)^{p^\mu- k_b} \]
     and 
    \begin{align} \label{gcd} \gcd(g(x),h^*(x)) =  \prod\limits_{a=1}^sf_a(x)^{\min\{i_a,p^\mu-i_a\}}\times \prod\limits _{b=1}^th_b(x) ^{\min\{j_b,p^\mu-k_b\}} h_b^*(x)^{\min\{k_b,p^\mu-j_b\}} .\end{align}
    
    Assume that  $C$  is a Euclidean complementary dual. Then  $C\oplus C^{\perp_{\rm E}}=\mathbb{F}_q^n$ is generated by   $\gcd(g(x),h^*(x)) =1$ by Theorem \ref{gDE}.  From \eqref{gcd}, it can be deduced that  $\min\{i_a,p^\mu-i_a\}=0$ for all $a=1,2,\dots,s$ and 
    ${\min\{j_b,p^\mu-k_b\}}=0={\min\{k_b,p^\mu-j_b\}}$ for all $b=1,2,\dots, t$.   Consequently,   $i_a,j_b=k_b\in \{0,p^\mu\}$ for all  $a=1,2,\dots s$ and $b=1,2,\dots, t$. The desired result follows.

    The converse is obvious. 
    \end{proof}

From Theorem \ref{genE},  	the number of Euclidean complementary dual  negacyclic codes of length $n=p^\mu 2^mn'$ over $\mathbb{F}_q$ is independent of $p^\nu$ and it  is determined in the following theorem. 

\begin{theorem} 
   The number of Euclidean complementary dual  negacyclic codes of length  $n=p^\mu 2^mn'$ over $\mathbb{F}_q$ is 
   \[2^{\frac{r+s}{2}}, \]
    where $r=\sum\limits_{d\mid n' }\dfrac{\phi(2^{m+1}d)}{\mathrm{ord}_{2^{m+1}d}(q)}$ and $s=|\mathrm{SRIM}_{q,2^m n'}(-1)|$.
\end{theorem}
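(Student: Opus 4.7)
The plan is to exploit the explicit parametrization of generator polynomials already given in Theorem~\ref{genE}. By that theorem, the Euclidean complementary dual negacyclic codes of length $n = p^\mu 2^m n'$ are in bijection with tuples $(i_1, \dots, i_s, j_1, \dots, j_t)$ where each $i_a \in \{0, p^\mu\}$ and each $j_b \in \{0, p^\mu\}$, via
\[
g(x) = \prod_{a=1}^s f_a(x)^{i_a} \times \prod_{b=1}^t \bigl(h_b(x) h_b^*(x)\bigr)^{j_b}.
\]
Hence the counting reduces to a simple cardinality argument: there are $2^s$ choices for the exponents associated with the SRIM factors and $2^t$ independent choices for the exponents associated with the $t$ conjugate pairs of non-SRIM irreducible factors, giving $2^{s+t}$ codes in total.

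The final step is purely arithmetic. The factorization \eqref{factSRIM} shows that the total number $r$ of distinct monic irreducible factors of $x^{2^m n'}+1$ over $\mathbb{F}_q$ satisfies
\[
r = s + 2t,
\]
since the non-SRIM irreducible factors group into $t$ pairs $\{h_b(x), h_b^*(x)\}$. Solving for $t$ yields $t = (r-s)/2$, which is exactly the value of $t$ used throughout Section~5. Therefore
\[
s + t = s + \tfrac{r-s}{2} = \tfrac{r+s}{2},
\]
and the total count is $2^{s+t} = 2^{(r+s)/2}$, as claimed.

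I do not anticipate any real obstacle: the structural content has already been absorbed into Theorem~\ref{genE}, and what remains is book-keeping. The only point that deserves a line of justification is the independence of the exponent choices, which is automatic because the SRIM factors $f_a(x)$ and the conjugate pairs $h_b(x) h_b^*(x)$ are pairwise coprime, so distinct tuples $(i_a, j_b)$ produce distinct generator polynomials $g(x)$, and hence distinct negacyclic codes. The value $r$ can also be read off as $r = \sum_{d \mid n'} \phi(2^{m+1}d)/\mathrm{ord}_{2^{m+1}d}(q)$, matching the formula in the statement.
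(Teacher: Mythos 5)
Your proposal is correct and follows essentially the same route as the paper: both count the $2^{s+t}$ generator polynomials permitted by Theorem~\ref{genE} and substitute $t=\frac{r-s}{2}$ to obtain $2^{\frac{r+s}{2}}$. Your added remark that distinct exponent tuples give distinct codes because the factors are pairwise coprime is a harmless (and welcome) extra line of justification that the paper leaves implicit.
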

\begin{proof} From Theorem \ref{genE},  it follow that  the  number of Euclidean complementary dual  negacyclic codes of length  $n=p^\mu 2^mn'$ over $\mathbb{F}_q$ is  $2^{s+t}=2^{s+\frac{r-s}{2}}= 2^{\frac{r+s}{2}}$.
    \end{proof}

\begin{remark} Let $\nu$ be a positive integer such that $2^\nu||(q+1)$.    From Theorem \ref{enumSRIM},  we have  $s=|\mathrm{SRIM}_{q,2^m n'}(-1)|=0$ for all $m\geq \nu$. Hence, 
 the  number of   Euclidean complementary dual  negacyclic codes of length  $p^\mu 2^mn'$ over $\mathbb{F}_q$ is  \[2^{\sum\limits_{d\mid n' }\dfrac{\phi(2^{m+1}d)}{2\mathrm{ord}_{2^{m+1}d}(q)}}\]
    for all $\mu\geq 0$ and $m\geq \nu$.

    \end{remark}

\subsection{Hermitian Complementary Dual Negacyclic Codes over Finite Fields}

Assume that the characteristic of $\mathbb{F}_{q^2}$ is $p$ and write $n=p^\mu 2^mn'$, where $\mu$ and $m$ are non-negative integers and $n'$ is an odd positive integer such that $p\nmid n'$.   Analogous results for Hermitian complementary dual negacyclic codes  are briefly  discussed. 

Since the number of monic irreducible factors of $x^{2^mn'}+1 $ over $\mathbb{F}_{q^2} $ is $r:=\sum\limits_{d\mid n' }\dfrac{\phi(2^{m+1}d)}{\mathrm{ord}_{2^{m+1}d}(q^2)}$ and $(f^\dagger(x))^\dagger=f(x)$ for all divisors $f(x)$ of $x^{2^mn'}+1 $,  the  polynomial $x^{n}+1 $ can  be factorized over $\mathbb{F}_{q^2}$ in the following form
\begin{align*} 
x^n+1= (x^{2^mn'}+1)^{p^\mu}= \prod_{a=1}^s f_a(x)^{p^\mu}\times \prod_{b=1}^t h_b(x)^{p^\mu}h_b^\dagger(x)^{p^\mu}  ,
\end{align*}
where $f_a(x)$ is a SCRIM factor of $x^{2^mn'}+1 $ for all $a=1,2,\dots, s:=|\mathrm{SCRIM}_{q^2,2^m n'}(-1)|$,  and $h_b(x)$ is a reducible monic  factor of $x^{2^mn'}+1 $ which is not  SCRIM for all $b=1,2,\dots, t:= \frac{r-s}{2}$.

Using Theorem \ref{gDH} and the augments similar to those in the proof of  Theorem~\ref{genE},  the characterization of a Hermitian complementary dual  negacyclic code follows. 

\begin{theorem} \label{genH}
    Let $C$ be a negacyclic code of length  $n=p^\mu 2^mn'$ over $\mathbb{F}_{q^2}$ with generator polynomial $g(x)$.  Then $C$  is a Hermitian complementary dual  if and only if  $g(x)$ is self-conjugate-reciprocal of the following form 
   \[g(x)=  \prod_{a=1}^sf_a(x)^{i_a}\times \prod_{b=1}^t(h_b(x)h_b^\dagger(x))^{j_b}  ,\]
 where $i_a,j_b\in \{0,p^\mu\}$ for all  $a=1,2,\dots, s$ and $b=1,2,\dots, t$.
\end{theorem}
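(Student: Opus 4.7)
The plan is to mirror the proof of Theorem \ref{genE}, replacing the reciprocal operation $f \mapsto f^*$ by the conjugate-reciprocal operation $f \mapsto f^\dagger$, the SRIM factors by SCRIM factors, and invoking Theorem \ref{gDH} in place of Theorem \ref{gDE}. Write the generator polynomial in its most general form compatible with the factorization of $x^n+1$ given at the start of the subsection:
\[
g(x)=\prod_{a=1}^{s} f_a(x)^{i_a}\times \prod_{b=1}^{t} h_b(x)^{j_b}\, h_b^\dagger(x)^{k_b},
\]
with $i_a,j_b,k_b\in\{0,1,\dots,p^\mu\}$. Then $h(x)=(x^n+1)/g(x)$ is the product of the ``complementary'' powers, and since each $f_a$ is SCRIM (so $f_a^\dagger=f_a$) and $(h_b^\dagger)^\dagger=h_b$, we obtain
\[
h^\dagger(x)=\prod_{a=1}^{s} f_a(x)^{p^\mu-i_a}\times \prod_{b=1}^{t} h_b(x)^{p^\mu-k_b}\, h_b^\dagger(x)^{p^\mu-j_b}.
\]

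Next, I use the standard fact about ideals in the principal ideal ring $\mathbb{F}_{q^2}[x]/\langle x^n+1\rangle$: $C$ is Hermitian complementary dual if and only if $C+C^{\perp_{\mathrm{H}}}=\mathbb{F}_{q^2}^n$, which by Theorem \ref{gDH} translates into $\gcd(g(x),h^\dagger(x))=1$. Computing this gcd from the two displayed factorizations (using that the distinct irreducible factors $f_a$, $h_b$, $h_b^\dagger$ are pairwise coprime) gives
\[
\gcd(g(x),h^\dagger(x))=\prod_{a=1}^{s} f_a(x)^{\min\{i_a,\,p^\mu-i_a\}}\times \prod_{b=1}^{t} h_b(x)^{\min\{j_b,\,p^\mu-k_b\}}\, h_b^\dagger(x)^{\min\{k_b,\,p^\mu-j_b\}}.
\]
Setting this product equal to $1$ forces $\min\{i_a,p^\mu-i_a\}=0$ for every $a$ and $\min\{j_b,p^\mu-k_b\}=0=\min\{k_b,p^\mu-j_b\}$ for every $b$, which immediately yields $i_a\in\{0,p^\mu\}$ and, by a short case check on the two minima, $j_b=k_b\in\{0,p^\mu\}$.

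This is exactly the form claimed; moreover, the resulting $g(x)$ is visibly self-conjugate-reciprocal because $f_a^\dagger=f_a$ and $(h_b h_b^\dagger)^\dagger=h_b^\dagger h_b$, which also gives the ``self-conjugate-reciprocal'' assertion for free. The converse is immediate: for $g(x)$ of the prescribed form, the above gcd equals $1$, and Theorem \ref{gDH} then gives $C+C^{\perp_{\mathrm{H}}}=\mathbb{F}_{q^2}^n$. The only subtlety relative to the Euclidean case is the asymmetric behavior of $\dagger$ on a pair $\{h_b,h_b^\dagger\}$, which forces the appearance of two coupled minima rather than one; this is what pins down $j_b=k_b$ (instead of letting them vary independently), and handling this cross-coupling carefully is the one place where the Euclidean proof does not copy verbatim.
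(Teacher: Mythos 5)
Your proof is correct and is precisely the argument the paper intends: the paper proves Theorem \ref{genE} in detail and then states that Theorem \ref{genH} follows ``using Theorem \ref{gDH} and the arguments similar to those in the proof of Theorem \ref{genE},'' which is exactly what you have written out. One small remark: the cross-coupled minima $\min\{j_b,p^\mu-k_b\}$ and $\min\{k_b,p^\mu-j_b\}$ that you flag as the Hermitian-specific subtlety already appear verbatim in the paper's Euclidean proof (since $*$ also swaps $h_b$ and $h_b^*$), so that step is not actually a point of divergence.
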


From Theorem \ref{genH},  	the number of Hermitian complementary dual  negacyclic codes of length $n=p^\mu 2^mn'$ over $\mathbb{F}_{q^2}$ is independent of $p^\nu$ and it  is  summarized in the following theorem.

\begin{theorem} 
    The number of  Hermitian complementary dual   negacyclic codes of length  $n=p^\mu 2^mn'$ over $\mathbb{F}_{q^2}$ is 
    \[2^{\frac{r+s}{2}}, \]
    where $r:=\sum\limits_{d\mid n' }\dfrac{\phi(2^{m+1}d)}{\mathrm{ord}_{2^{m+1}d}(q)}$ and $s=|\mathrm{SCRIM}_{q,2^m n'}(-1)|$.
\end{theorem}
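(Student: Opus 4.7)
The plan is to follow the same counting scheme used for the Euclidean case in the previous subsection, with the self-reciprocal structure replaced by the self-conjugate-reciprocal one. By Theorem \ref{genH}, every Hermitian complementary dual negacyclic code of length $n = p^\mu 2^m n'$ over $\mathbb{F}_{q^2}$ has a generator polynomial of the shape
\[
g(x) = \prod_{a=1}^{s} f_a(x)^{i_a} \times \prod_{b=1}^{t} \bigl(h_b(x) h_b^\dagger(x)\bigr)^{j_b},
\]
with each $i_a, j_b \in \{0, p^\mu\}$. So the first step is just to observe that the choices of the $i_a$ and $j_b$ are independent and unrestricted across the admissible set $\{0,p^\mu\}$.

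Next, I would count: there are $s = |\mathrm{SCRIM}_{q^2, 2^m n'}(-1)|$ SCRIM factors, each contributing $2$ choices, and $t$ conjugate-reciprocal pairs of non-SCRIM irreducible factors, each also contributing $2$ choices, for a total of $2^{s+t}$ codes. To finish I need to identify $t$: since the total number of monic irreducible factors of $x^{2^m n'}+1$ over $\mathbb{F}_{q^2}$ is $r$ and the non-SCRIM factors group into conjugate-reciprocal pairs via $h_b \mapsto h_b^\dagger$, we get $r = s + 2t$, hence $t = (r-s)/2$. Substituting yields $2^{s+t} = 2^{s + (r-s)/2} = 2^{(r+s)/2}$, which is the claimed bound.

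I expect no serious obstacle: the argument is a direct transcription of the Euclidean proof with $(-)^*$ replaced by $(-)^\dagger$. The only point worth double-checking is that distinct non-SCRIM monic irreducible factors of $x^{2^m n'}+1$ over $\mathbb{F}_{q^2}$ really do pair off under the conjugate-reciprocal involution without fixed points — but this follows immediately from the definition of SCRIM (an irreducible monic factor is fixed by $\dagger$ precisely when it is SCRIM), guaranteeing the pairing is a free involution on the non-SCRIM factors and hence $r - s$ is even. Independence of the count from $p^\mu$ is then automatic since all exponents in $g(x)$ are forced to live in $\{0, p^\mu\}$.
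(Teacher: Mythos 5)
Your proposal is correct and follows essentially the same route as the paper: the paper likewise reads the count $2^{s+t}$ off from Theorem \ref{genH} and substitutes $t=\frac{r-s}{2}$ to obtain $2^{\frac{r+s}{2}}$, exactly mirroring the Euclidean argument. Your extra remark that the conjugate-reciprocal involution is fixed-point free on the non-SCRIM factors (so $r-s$ is even) is a detail the paper leaves implicit in its factorization of $x^n+1$, but it is the same underlying argument.
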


\begin{remark} Let $\nu$ be a positive integer such that $2^\nu||(q+1)$.    From Theorem \ref{enumSCRIM},  we have  $s=|\mathrm{SCRIM}_{q,2^m n'}(-1)|=0$ for all $m\geq \nu$. Hence, 
the  number of   Hermitian  complementary dual  negacyclic codes of length  $p^\mu 2^mn'$ over $\mathbb{F}_{q^2}$ is  \[2^{\sum\limits_{d\mid n' }\dfrac{\phi(2^{m+1}d)}{2\mathrm{ord}_{2^{m+1}d}(q^2)}}\]
for all $\mu\geq 0$ and $m\geq \nu$.

\end{remark}

\section{Conclusion and Remarks}
    SRIM and SCRIM  factors of  $x^n+ 1$  over finite fields have been studied.  Characterization and enumeration of  such polynomials have been established. Simplification and recessive  formulas for the number of such factors have been derived. Their applications      in the studied of complementary dual negacyclic codes  have been discussed.

   In general, it would be interesting  to study the set $\mathrm{SRIM}_{q,n}(\lambda)$   of  SRIM factors of $x^n-\lambda$ in  $\mathbb{F}_q[x]$ and the set $\mathrm{SCRIM}_{q^2,n}(\lambda)$  of  SCRIM factors of $x^n-\lambda$ in  $\mathbb{F}_{q^2}[x]$ for other units $\lambda$.

\section*{Acknowledgments}  This research was supported by the Thailand Research Fund and  Silpakorn University under Research Grant RSA6280042.


\begin{thebibliography}{99}
	
    \bibitem{PZS2018} P. Binbin, Z. Shixin, S. Zhonghua,  On LCD negacyclic codes over finite fields,    {\em J. Syst. Sci. Complex.} {\bf 31} (2018)  1065--1077.
    
\bibitem{ASP2019} A. Boripan, S. Jitman, P. Udomkavanich, Self-conjugate-reciprocal irreducible monic factors of $x^n-1$ over finite fields and their applications, {\em Finite Fields Appl.}  {\bf 55} (2019) 78--96. 


 
\bibitem{BJU2015}  A. Boripan,  S. {Jitman}, P. Udomkavanich,  Self-conjugate-reciprocal irreducible monic polynomials over finite fields,  {\em Proceedings of   the 20th Annual Meeting in Mathematics 2015}, Department of Mathematics, Faculty of Science, Silpakorn  University, Nakhon Pathom  (2015)  34--43.
 
\bibitem{BSG2016}  B. Chen,  S. Ling,   G. Zhang,  Enumeration formulas for self-dual cyclic codes, {\em Finite
    Fields Appl.}  \textbf{42}  (2016)  1--22. 

 

\bibitem{HB1975}
S. Hong, D. Bossen, On some properties of self-reciprocal polynomials,  {\em IEEE Trans. Inform. Theory}  \textbf{21}  (1975) 462--464. 

\bibitem{JLX2011}
Y.~Jia, S.~Ling,  C.~Xing, On self-dual cyclic codes over finite fields,
{\em IEEE Trans. Inform. Theory}  \textbf{57} (2011) 2243--2251. 



 


\bibitem{J2017}  S.  Jitman,  Good integers and some applications in coding theory,    {\em Cryptogr. Commun.} {\bf 10} (2018),   685--704.


\bibitem{JPR2019}  S. Jitman, S.  Prugsapitak, M. Raka,  Some generalizations of good integers and their applications in the study of self-dual negacyclic codes,  {\em Adv. Math. Commun.} \textbf{14} (2020) 35--51. 






\bibitem{JLS2014} 
S.~Jitman, S.~Ling., P.~Sol\'{e}, Hermitian self-dual abelian codes, {\em IEEE Trans. Inform. Theory}  \textbf{60}   (2014) 1496--1507. 

 

\bibitem{KG1969}  D. Knee, H.D. Goldman,  Quasi-self-reciprocal Good polynomials and potentially large minimum distance BCH codes,    {\em IEEE Trans. Inform. Theory} \textbf{15} (1969)  118--121.


\bibitem{LSBook} S. Ling, C. Xing, {\em Coding Theory : A First Course},   Cambridge University  Press, 2004.

\bibitem{M1997}  P. Moree,  On the divisors of $a^k+b^k$, {\em Acta Arith.}  \textbf{LXXX}  (1997) 197--212. 


 


\bibitem{PJ2018} S. Prugsapitak S., Jitman,   Enumeration of self-dual cyclic codes of some specific lengths over finite fields, {\em Discrete Math. Algorithms Appl.} \textbf{10} (2018) 1850031.



\bibitem{SJLU2015}   E. Sangwisut,  S.  {Jitman}, S.  Ling, P. Udomkavanich,  Hulls of cyclic and negacyclic codes over finite fields, {\em Finite
    Fields Appl.}  \textbf{33}  (2015) 232--257.  




\bibitem{YM}
X. Yang, J. L.~Massey,   The condition for a cyclic code to have a complementary dual,   {\em Discrete Math.} \textbf{126} (1994)  391--393. 

\bibitem{YM2004} 
J. L. Yucas, G. L. Mullen, Self-reciprocal irreducible polynomials over
finite fields, {\em Des. Codes Cryptogr.}  \textbf{33}  (2004)
275--281. 

 

\end{thebibliography}
\end{document}